\newtheorem{theorem}{Theorem}
\newtheorem{proposition}{Proposition}
\newtheorem{lemma}{Lemma}
\newtheorem{corollary}{Corollary}
\theoremstyle{remark}%
\newtheorem{example}{Example}%
\newtheorem{remark}{Remark}%
\newcommand{\adprxsym}{\ensuremath{\mathrm{adp}^{\mathrm{RX}}}}
\newcommand{\adprx}[4]{\ensuremath{\adprxsym (#1, #2 \stackrel{#4}{\to} #3)}}
\newcommand{\adpgeneral}[1]{\mathrm{adp}^{#1}}
\newcommand{\adpsym}{\ensuremath{\mathrm{adp}^{\oplus}}}
\newcommand{\adp}[3]{\ensuremath{\adpsym (#1, #2 \to #3)}}
\newcommand{\adps}[4]{\mathrm{adp}^{#1}(#2, #3 \to #4)}
\newcommand{\adpr}[5]{\mathrm{adp}^{#1}(#2, #3 \stackrel{#5}{\to} #4)}
\newcommand{\adpxrsym}[0]{\mathrm{adp}^{\mathrm{XR}}}
\newcommand{\adpxr}[4]{\adpxrsym(#1, #2 \stackrel{#4}{\to} #3)}
\newcommand{\cadppar}[1]{\mathrm{adp}^{\oplus}_{#1}}
\newcommand{\padppar}[2]{\mathrm{adp}^{\oplus}_{#1, #2}}
\newcommand{\cadp}[4]{\cadppar{#1}(#2, #3 \to #4)}
\newcommand{\padp}[5]{\padppar{#1}{#2}(#3, #4 \to #5)}
\newcommand{\wt}[1]{\mathrm{wt}(#1)}
\newcommand{\inv}[2]{#1^{[#2]}}
\newcommand{\cctn}[2]{(#1, #2)}
\newcommand{\zspc}[1]{\ensuremath{\mathbb{Z}_{2}^{#1}}}
\newcommand{\nzer}[2]{\mathcal{N}_{#1}^{#2}}
\providecommand{\keywords}[1]{
	\medskip
    \noindent
	\small\textbf{Keywords: } #1
	}
\providecommand{\affili}[1]{
	\begin{center} 
	\small #1
	\end{center}
	\vspace{.1cm}}
\title{On additive differential probabilities of the composition of bitwise exclusive-or and a bit rotation}
\author{\normalfont Nikolay Kolomeec, Ivan Sutormin, Denis Bykov,\\ Matvey Panferov, Tatyana Bonich
\medskip
}
\date{}
\begin{document}
\large
%
%
\maketitle
%
%
\par\vspace{-60pt}
%
%
 \affili{
	Novosibirsk State University, Novosibirsk, Russia\\

    \medskip

 	\texttt{ kolomeec@math.nsc.ru, ivan.sutormin@gmail.com, den.bykov.2000i@gmail.com, magold1102@gmail.com, tanya50997@gmail.com 
 	}
}
%
%
\begin{abstract}
Properties of the additive differential probability $\adpxrsym$ of the composition of bitwise XOR and a bit rotation are investigated, where the differences are expressed using addition modulo $2^n$. This composition is widely used in ARX constructions consisting of additions modulo $2^n$, bit rotations and bitwise XORs. Differential cryptanalysis of such primitives may involve maximums of $\adpxrsym$, where some of its input or output differences are fixed. 
Although there is an efficient way to calculate this probability (Velichkov et al, 2011), many of its properties are still unknown.
In this work, we find maximums of $\adpxrsym$, where the rotation is one bit left/right and one of its input differences is fixed. 
Some symmetries of $\adpxrsym$ are obtained as well. We provide all its impossible differentials in terms of regular expression patterns and estimate the number of them. This number turns out to be maximal for the one bit left rotation and noticeably less than the number of impossible differentials of bitwise XOR.

\keywords{ARX, differential cryptanalysis, XOR, bit rotation, modular addition, impossible differentials.}

\end{abstract}

\section{Introduction}
ARX is one of the modern architectures for symmetric cryptography primitives that uses only three operations: addition modulo $2^n$ (Addition, $\boxplus$), circular shift (Rotation, $\lll$) and bitwise addition modulo 2 (XOR, $\oplus$). 
Examples of such schemes include block ciphers 
FEAL~\cite{FEAL}, Speck~\cite{SimonSpeck}, 
stream ciphers Salsa20~\cite{Salsa20} and ChaCha~\cite{ChaCha}, 
SHA-3 finalists BLAKE~\cite{BLAKE} and Skein~\cite{Threefish-Skein}, MAC algorithm Chaskey~\cite{Chaskey}.
ARX constructions have many advantages: fast performance and compactness of its program implementation, resistance to timing attacks. However, it is difficult to determine if they are secure against differential cryptanalysis~\cite{BihSha91}. This method requires studying how input differences transform to the output differences.

In this work we consider differences that are expressed using addition modulo $2^n$. In some cases they are more appropriate than XOR differences, especially if round keys are added modulo $2^n$ and the number of XOR operations is much less than the number of modulo $2^n$ operations, see, for instance,~\cite{BiryukovVelichkov2014}. Some information related to differential cryptanalysis of ARX constructions and choosing differences can be found in~\cite{Leurent2012,Leurent2013}. The additive differential probability $\adpgeneral{f}(\alpha_1, \ldots, \alpha_k \to \alpha_{k+1})$ for a function $f: (\zspc{n})^k  \to \zspc{n}$, where $\alpha_1, \ldots, \alpha_{k+1} \in \zspc{n}$, is defined as $$\Pr_{x_1, \ldots, x_k \in \zspc{n}}[f(x_1 \boxplus \alpha_1, \ldots, x_k \boxplus \alpha_k) = f(x_1, \ldots, x_k) \boxplus \alpha_{k+1} ].$$
Here $\alpha_1, \ldots, \alpha_k$ and $\alpha_{k+1}$ are its input differences and its output difference respectively, $x_1, \ldots, x_k$ are uniformly distributed.
The additive differential probability of a rotation ($\adpgeneral{\lll}$) and XOR ($\adpsym$) were studied in~\cite{Berson1992, Daum2004} and~\cite{LipmaaEtAl2004,MouhaEtAl2010,MouhaEtAl2021}.
The formula for the additive differential probability $\adprxsym$ for the composition $(x \lll r) \oplus y$ was obtained in~\cite{VelichkovEtAl2011}. 
Also, it was pointed out that it is inaccurate to calculate differential probability of the composition by assuming that the inputs of the basic operations are independent (the same is true for XOR differences, see, for instance,~\cite{HongEtAl2003}).
Note that we can add $\boxplus$ operation to a composition and express new differential probabilities from the old ones in a direct way, since the considered differences go through $\boxplus$ with probability one.
For instance, it works for the function $((x \boxplus z) \lll r) \oplus y$ if we know differential probabilities for $(x \lll r) \oplus y$.
 
We investigate the properties of $\adpxrsym$ for the function $(x \oplus y) \lll r$, where $x, y \in \zspc{n}$ and $1 \leq r \leq n - 1$. They are similar to the properties of $\adprxsym$ since $\adpxr{\alpha}{\beta}{\gamma}{r} = \adprx{\gamma}{\beta}{\alpha}{n - r}$. Though there is the formula for $\adprxsym$, many its properties are still unknown. 
First of all, we study maximums of $\adpxrsym$, where the rotation is one bit left/right and one of the first two arguments is fixed. More precisely, we find $\beta', \gamma'$ such that $\max_{\beta, \gamma} \adpxr{\alpha}{\beta}{\gamma}{r} = \adpxr{\alpha}{\beta'}{\gamma'}{r}$, where $r = 1$ (one bit left rotation) and $r = n - 1$ (one bit right rotation).
One bit rotations are used, for instance, in CHAM~\cite{KooEtAl2017,RohEtAl2019} (see also COMET~\cite{GueronEtAl2019} which is based on CHAM and Speck) and Alzette~\cite{BeierleEtAl2020} (see also Sparkle~\cite{BeierleEtAl2020b}). 
Note that we do not analyze maximums for other argument fixations as well as for other rotations. These cases look more difficult. 
In addition, we obtain some symmetries of $\adpxrsym$.
They allow us to construct distinct differentials whose probabilities are the same.  
Finally, we provide all impossible differentials (i.e. differentials with probability zero) of the function $(x \oplus y) \lll r$ in terms of regular expression patterns. The lower and upper bounds for the numbers of them are $\frac{1}{8} 8^n$ and $\frac{5}{14} 8^n$ respectively. Similar number for the function $x \oplus y$ obtained in~\cite{LipmaaEtAl2004} is $\lfloor \frac{4}{7} 8^n \rfloor$. 
We can say that the rotation significantly reduces the number of impossible differentials.
Overall, the maximums $\max_{\beta, \gamma} \adpxr{\alpha}{\beta}{\gamma}{r}$ for $r = 1$ and $r = n - 1$ are very similar to maximums of $\adpsym$. More precisely, they coincide for the one bit left rotation (Corollary~\ref{cor:onebitleft_adp}) and coincide at least half the time for the one bit right rotation (Corollary~\ref{cor:onebitright_adp}). Also, the number of impossible differentials is maximal for the one bit left rotation. It may be an additional reason to consider these rotations as the simplest ones. 
``Maximal'' differentials are used to construct differential trails, whereas impossible differentials are of interest because of impossible differential cryptanalysis~\cite{Knudsen1998,BihamEtAl1999}. There are techniques for finding them for a wide class of transformations, see, for instance,~\cite{JongsungEtAl2003,LuoEtAl2014}. Some attacks (and approaches) against ARX ciphers can be found in~\cite{ChenEtAl2012,ZhangEtAl2018, AzimiEtAl2022}.

The paper is organized as follows. Necessary definitions are given in Section~\ref{sec:preliminaries}. 
In Section~\ref{sec:formulasection}, we introduce auxiliary notions ($\cadppar{c}$ and  $\padppar{a}{b}$) and rewrite the formula from~\cite{VelichkovEtAl2011} to obtain an expression for $\adpxrsym$ in these terms (Theorem~\ref{th:adrxrformula} and Corollary~\ref{cor:adrxrformula}).
Section~\ref{sec:symmetires} is devoted to the argument symmetries of $\adpxrsym$ which turn out to be similar to the symmetries of $\adpsym$ (Theorem~\ref{th:symmetries}).
Section~\ref{sec:maxfor1leftright} describes maximums of $\adpxrsym$, where the rotation is one bit left/right and one of the first two arguments is fixed (Theorem~\ref{th:max1left} in Section~\ref{sec:maxfor1} for the one bit left rotation and Theorem~\ref{th:max1right} in Section~\ref{sec:maxfor1right} for the one bit right rotation).
Section~\ref{sec:inconsistentdiffxr} provides all impossible differentials of the XR-operation (Theorem~\ref{th:adpXRzeros} and Table~\ref{table:zerosOfXR} in Section~\ref{sec:impossibledifferentialsdescr}) in terms of regular expression patterns (see Section~\ref{sec:patterns}). Some auxiliary results are moved to Appendix~\ref{sec:proofofpadpzeros}. Also, several estimations of the number $\nzer{n}{r}$ of tuples $(\alpha, \beta, \gamma) \in \zspc{n} \times \zspc{n} \times \zspc{n}$ such that $\adpxr{\alpha}{\beta}{\gamma}{r} = 0$ are given in Section~\ref{sec:impossibledifferentialsest} (Corollaries~\ref{cor:adpXRzeros_estimations} and \ref{cor:adpXRzeros_estimations_r_1}). The most significant of them are $\frac{1}{8} 8^n \leq \frac{1}{7} 8^n - \frac{1}{7} 8^r \leq \nzer{n}{r} < \lfloor \frac{5}{14} 8^n \rfloor = \nzer{n}{1}$, where $2 \leq r < n$.

\section{Preliminaries}\label{sec:preliminaries}
Let $x, y \in \zspc{n}$ be elements of the $n$-dimensional vector space over the two-element field,  
$x=(x_0,x_1, \ldots, x_{n-1})$.
Since ARX schemes mix modulo $2$ and modulo $2^n$ operations, we denote that  $x + y$, $x - y$ and $-x$ mean $x' + y' \mod{2^{n}}$, $x' - y' \mod{2^{n}}$ and $-x' \mod{2^{n}}$ respectively, where $$x' = x_0 2^{n - 1} + x_1 2^{n - 2} + ... + x_{n - 1} 2^{0}.$$ In other words, $x$ is the binary representation of the integer $x' \in \{0, \ldots, 2^n - 1\}$, where $x_0$ is its most significant bit. This is taken into account when we refer to~\cite{LipmaaEtAl2004} and \cite{VelichkovEtAl2011} since the most significant bit is $x_{n - 1}$ there.
The rotation is defined as $x \lll r = (x_{r}, \ldots, x_{n-1}, x_0, \ldots, x_{r - 1})$.
Let $\overline{x} = (x_0 \oplus 1, x_1 \oplus 1, \ldots, x_{n-1} \oplus 1).$
Recall that $\overline{x} = 2^n - 1 - x$.
For $a \in \zspc{}$ we define $\inv{x}{a}$ as $(x_0 \oplus a, x_1 \oplus a, \ldots, x_{n - 1} \oplus a)$, i.e. it is either $x$ or $\overline{x}$.
The Hamming weight (or shortly weight) $\wt{x}$ is the number of coordinates of $x$ that are not zero.
Also, $x \preceq y$ $\Longleftrightarrow$ $x_i \leq y_i$ for all $i \in \{0, \ldots, n - 1\}$. 

The concatenation $\cctn{\alpha}{\beta}$ is $(\alpha_0, \ldots, \alpha_{n-1}, \beta_0, \ldots, \beta_{m-1})$, where $\alpha \in \zspc{n}$ and $\beta \in \zspc{m}$.
If $a \in \zspc{}$, the concatenation $\cctn{\alpha}{a}$ is denoted by $\alpha a$. In terms of integers, $\alpha a = 2\alpha + a \pmod{2^{n+1}}$. 

We consider differences that are expressed using addition modulo $2^n$.
The additive differential probability $\adpgeneral{f}$ for $f: (\zspc{n})^k  \to \zspc{n}$ is defined as follows:
\begin{align*}   
    \adpgeneral{f}(\alpha_1, \ldots, \alpha_k \to \alpha_{k+1}) &= 2^{-kn} \#\{ x_1, \ldots, x_k  \in \zspc{n} \ : \\ &f(x_1 + \alpha_1, \ldots, x_k + \alpha_k) = f(x_1, \ldots, x_k) + \alpha_{k+1}\},
\end{align*}
where $\alpha_1, \ldots, \alpha_k \in \zspc{n}$ and $\alpha_{k+1} \in \zspc{n}$ are called input differences and an output difference respectively.

In this work, we consider $\adp{\alpha}{\beta}{\gamma}$ for the function $(x, y) \mapsto x \oplus y$ (see~\cite{LipmaaEtAl2004,MouhaEtAl2021}), $\adpxr{\alpha}{\beta}{\gamma}{r}$ for the XR-operation $(x, y) \mapsto (x \oplus y) \lll r$ and $\adprx{\alpha}{\beta}{\gamma}{r}$ for the RX-operation $(x, y) \mapsto (x \lll r) \oplus y$ (see~\cite{VelichkovEtAl2011}). 

Similarly to~\cite{LipmaaEtAl2004}, we use both binary vectors $\alpha, \beta, \gamma \in \zspc{n}$ and octal words $\omega(\alpha, \beta, \gamma) \in \{0, \ldots, 7\}^n$, where
$$
    (\omega(\alpha, \beta, \gamma))_i = 4\alpha_{i} + 2\beta_{i} + \gamma_{i} \text{ for all } i \in \{0, \ldots, n - 1\}.
$$
The weight of an octal symbol $\omega_i$ is the weight of the corresponding $(\alpha_i, \beta_i, \gamma_i)$.

Let $e_0, \ldots, e_7$ be the standard basis vectors of $\mathbb{Q}^8$, where $\mathbb{Q}$ is the field of rationals.
There is a rational series approach for calculating $\adpsym$. 
    \begin{theorem}[Lipmaa et al.~\cite{LipmaaEtAl2004}, 2004]\label{th:AdpMatrix}

 Let $L = (1, 1, 1, 1, 1, 1, 1, 1)$, $A_0, \ldots, A_7$ be $8\times8$ matrices, where
        \[
        A_0 = \frac{1}{4}\left(\begin{smallmatrix}
             4 & 0 & 0 & 1 & 0 & 1 & 1 & 0 \\
             0 & 0 & 0 & 1 & 0 & 1 & 0 & 0 \\
             0 & 0 & 0 & 1 & 0 & 0 & 1 & 0 \\
             0 & 0 & 0 & 1 & 0 & 0 & 0 & 0 \\
             0 & 0 & 0 & 0 & 0 & 1 & 1 & 0 \\
             0 & 0 & 0 & 0 & 0 & 1 & 0 & 0 \\
             0 & 0 & 0 & 0 & 0 & 0 & 1 & 0 \\
             0 & 0 & 0 & 0 & 0 & 0 & 0 & 0
        \end{smallmatrix}\right)
    \]
    and $(A_k)_{i,j} = (A_0)_{i \oplus k, j \oplus k}$, where $i, j, k \in \zspc{3}$. Then
    \[
        \adp{\alpha}{\beta}{\gamma} = \adpsym (\omega) = L A_{\omega_{0}} A_{\omega_{1}} \ldots A_{\omega_{n - 1}} e^T_0, 
    \]
    where $\alpha, \beta, \gamma \in \zspc{n}$ and $\omega = \omega(\alpha, \beta, \gamma)$.
    \end{theorem}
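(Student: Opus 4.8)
The plan is to recognise the right-hand side as a transfer-matrix (weighted-automaton) evaluation of $\adp{\alpha}{\beta}{\gamma}$, read off the octal word $\omega$ one position at a time. Since all three modular sums are formed from the least significant bit (index $n-1$ in the paper's convention) toward the most significant one (index $0$), I first introduce carry bits $c_i,d_i,e_i$ entering position $i$ for the three additions $x+\alpha$, $y+\beta$ and $(x\oplus y)+\gamma$. Writing the three output bits as $x_i\oplus\alpha_i\oplus c_i$, $y_i\oplus\beta_i\oplus d_i$ and $(x_i\oplus y_i)\oplus\gamma_i\oplus e_i$, the defining equation $(x+\alpha)\oplus(y+\beta)=(x\oplus y)+\gamma$ becomes, bitwise, the position-local constraint $\alpha_i\oplus\beta_i\oplus\gamma_i=c_i\oplus d_i\oplus e_i$ (the $x_i,y_i$ cancel), while the carries update by the majority rule $c_{i-1}=\mathrm{maj}(x_i,\alpha_i,c_i)$, and likewise for $d$ and $e$.

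This lets me encode the computation as an automaton whose state is the carry triple $s=(c,d,e)\in\zspc{3}$, identified with the index $4c+2d+e\in\{0,\dots,7\}$ so that it matches the octal encoding of $\omega$. The carry into the least significant position is $0$, i.e. the initial state $e_0$; the carries out of the most significant position are irrelevant modulo $2^n$, so all final states are summed by $L=(1,\dots,1)$. For a fixed octal symbol $\omega_i$ I define the $8\times 8$ transfer matrix $T_{\omega_i}$ by letting $(T_{\omega_i})_{s',s}$ be $\tfrac14$ times the number of pairs $(x_i,y_i)\in\zspc{2}$ that satisfy the local constraint and send the carry state $s$ to $s'$. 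Processing the bits from $i=n-1$ up to $i=0$ then gives $\adp{\alpha}{\beta}{\gamma}=L\,T_{\omega_0}T_{\omega_1}\cdots T_{\omega_{n-1}}\,e_0^T$, the $\tfrac14$ factors supplying the normalisation $2^{-2n}=4^{-n}$; this is the claimed identity provided $T_\omega=A_\omega$.

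It remains to identify the transfer matrices with the matrices of the statement. The base case $T_0=A_0$ is a finite check: for $\omega=0$ the local constraint forces $c\oplus d\oplus e=0$ (so the odd-parity columns $1,2,4,7$ vanish), and using $\mathrm{maj}(a,0,b)=a\wedge b$ the four even-parity columns reproduce $A_0$ entry by entry. The heart of the argument is the symmetry $(T_\omega)_{s',s}=(T_0)_{s'\oplus\omega,\,s\oplus\omega}$, which is exactly $(A_k)_{i,j}=(A_0)_{i\oplus k,\,j\oplus k}$ and which I would deduce from the invariance $N_{\omega\oplus k}(s'\oplus k,\,s\oplus k)=N_\omega(s',s)$ of the (unnormalised) transition counts $N_\omega$. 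For $k=(k_1,k_2,k_3)$ with $k_3=k_1\oplus k_2$ (even weight) this follows cleanly from the substitution $(x_i,y_i)\mapsto(x_i\oplus k_1,\,y_i\oplus k_2)$ together with the complementation identity $\mathrm{maj}(a\oplus t,b\oplus t,c\oplus t)=\mathrm{maj}(a,b,c)\oplus t$, because then the induced flip $k_1\oplus k_2$ of the third addend $x_i\oplus y_i$ matches the flip $k_3$ of $\gamma_i$ and $e_i$.

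The main obstacle is precisely the odd-weight part of this symmetry: because the third addition consumes $x_i\oplus y_i$ rather than an independent variable, a single global flip of $(x_i,y_i)$ cannot realise an arbitrary flip of the three carry coordinates, and the leftover discrepancy is governed by $\alpha_i\oplus\beta_i\oplus\gamma_i$. I would handle it by reducing to the single generator $k=(0,0,1)$ (the even-weight $k$ already form an index-two subgroup, so one odd generator suffices) and proving this case by a short, carry-state-dependent bijection: when the incoming carry $e$ equals $0$ both sides impose the same constraint and the identity map works, while when $e=1$ the compatibility condition $c\oplus d\oplus e=0$ forces exactly one of $c,d$ to vanish, allowing a flip of the corresponding variable that toggles $x_i\oplus y_i$ while preserving the other two carry-outs. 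Combining the even-weight substitution, this generator, and composition of symmetries yields $(T_\omega)_{s',s}=(T_0)_{s'\oplus\omega,\,s\oplus\omega}$ for all $\omega$, completing the identification $T_\omega=A_\omega$ and hence the theorem.
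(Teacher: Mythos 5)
The paper itself offers no proof of this statement: Theorem~\ref{th:AdpMatrix} is imported verbatim from Lipmaa et al.\ and used as a black box, so there is nothing internal to compare your argument against. Judged on its own terms, your transfer-matrix derivation is correct and is essentially the standard carry-automaton proof from the cited literature (Lipmaa--Wall\'en--Dumas, and the S-function framework of Mouha et al.). The bitwise reduction is right: with carries $c_i,d_i,e_i$ the equation at position $i$ collapses to $\alpha_i\oplus\beta_i\oplus\gamma_i=c_i\oplus d_i\oplus e_i$, the carries evolve by majority, the initial state is $(0,0,0)$ at the least significant position and the final carries are summed out by $L$, and the $\tfrac14$ per symbol yields $4^{-n}$. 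Your base-case check of $T_0=A_0$ is easily verified column by column (odd-parity columns vanish, and e.g.\ column $3=(0,1,1)$ sends the four pairs $(x_i,y_i)$ to states $0,1,2,3$, matching $A_0$). The one step that genuinely needs care is the odd-weight half of the symmetry $(T_\omega)_{s',s}=(T_0)_{s'\oplus\omega,s\oplus\omega}$, and your resolution is sound: the even-weight substitution $(x_i,y_i)\mapsto(x_i\oplus k_1,y_i\oplus k_2)$ handles the index-two subgroup for an arbitrary base symbol, and for the generator $k=(0,0,1)$ at base $\omega=0$ your carry-state-dependent bijection works --- when $e=0$ the third majority output is forced independently of $x_i\oplus y_i$ so the identity map suffices, and when $e=1$ the constraint $c\oplus d=1$ guarantees one of the first two majorities is insensitive to its variable, so flipping that variable toggles $x_i\oplus y_i$ while fixing the other two carry-outs; composing these two families of identities covers all $\omega$. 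The only cosmetic issue is the clash between your carry bits $e_i$ and the paper's basis vectors $e_0,\ldots,e_7$, which you should rename if this were to be written out in full.
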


\begin{corollary}[Lipmaa et al.~\cite{LipmaaEtAl2004}, 2004]\label{cor:AdpMatrixZeros}
    Let $\alpha, \beta, \gamma \in \zspc{n}$, $\omega = \omega(\alpha, \beta, \gamma)$ and $k \in \{0, \ldots, n - 1\}$ such that $\omega_k \neq 0$ and $\omega_{k + 1} = \omega_{k + 2} = \ldots = \omega_{n - 1} = 0$. 
    If $\alpha = \beta = \gamma = 0$, we assume that $k = 0$.
    Then $\adp{\alpha}{\beta}{\gamma} = 0$ $\iff$ $\omega_k \in \{1, 2, 4, 7\}$. As a consequence,
    \begin{itemize}
        \item if $\omega_{n - 1} \in \{3, 5, 6\}$, then $\adp{\alpha}{\beta}{\gamma} > 0$;
        \item if $\omega_{n - 1} \in \{1, 2, 4, 7\}$, then $\adp{\alpha}{\beta}{\gamma} = 0$.
    \end{itemize}
    Note that $\omega_k \in \{0, 3, 5, 6\}$ $\iff$ $\alpha_k \oplus \beta_k \oplus \gamma_k = 0$ $\iff$ $\wt{\alpha_k, \beta_k, \gamma_k}$ is even, $\omega_k \in \{1, 2, 4, 7\}$ $\iff$ $\alpha_k \oplus \beta_k \oplus \gamma_k = 1$ $\iff$ $\wt{\alpha_k, \beta_k, \gamma_k}$ is odd.
\end{corollary}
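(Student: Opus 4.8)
The plan is to read everything off the rational-series formula of Theorem~\ref{th:AdpMatrix}, $\adp{\alpha}{\beta}{\gamma} = L A_{\omega_0}\cdots A_{\omega_{n-1}} e_0^T$, and to exploit two elementary features of the matrices $A_0,\ldots,A_7$: all of their entries are nonnegative (they are conditional probabilities), and the first column of $A_0$ is $e_0^T$. The latter gives $A_0 e_0^T = e_0^T$, so the trailing all-zero symbols $\omega_{k+1}=\cdots=\omega_{n-1}=0$, which sit in the rightmost factors next to $e_0^T$, may be discarded, leaving $\adp{\alpha}{\beta}{\gamma} = L A_{\omega_0}\cdots A_{\omega_k} e_0^T$. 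Because every factor is entrywise nonnegative and $L$ is the all-ones vector, this scalar is nonnegative and vanishes if and only if the column vector $u_0 := A_{\omega_0}\cdots A_{\omega_k} e_0^T$ is the zero vector; the whole problem thus becomes one about the support of $u_0$. (The degenerate convention $\alpha=\beta=\gamma=0$, $k=0$ is disposed of at once, since then $\adp{0}{0}{0}=1>0$.)

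First I would compute $v := A_{\omega_k} e_0^T$, the first column of $A_{\omega_k}$. Using $(A_k)_{i,j}=(A_0)_{i\oplus k,j\oplus k}$ one checks that this column is identically zero exactly when $\omega_k\in\{1,2,4,7\}$, and that for $\omega_k\in\{3,5,6\}$ it is a positive multiple of the indicator of a four-element index set whose members split into two of even weight and two of odd weight (for instance $\omega_k=3$ yields support $\{0,1,2,3\}$). Since $\{1,2,4,7\}$ and $\{3,5,6\}$ are precisely the nonzero odd-weight and even-weight octal symbols, the direction $\omega_k\in\{1,2,4,7\}\Rightarrow v=0\Rightarrow\adp{\alpha}{\beta}{\gamma}=0$ is immediate.

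The substantive claim is the converse: when $\omega_k\in\{3,5,6\}$, the product stays nonzero \emph{no matter what} the higher symbols $\omega_0,\ldots,\omega_{k-1}$ are — this is exactly the assertion that nullity depends only on $\omega_k$, and it is where the work lies. I would prove it by propagating a support invariant backwards through the recursion $u_i = A_{\omega_i}u_{i+1}$, namely that each $u_i$ with $i\le k$ has at least two even-weight and at least two odd-weight indices in its support. The base case is the computation of $v$ above. For the inductive step one notes that the nonzero columns of $A_0$ are those indexed by $\{0,3,5,6\}$, and that column $m$ with $m\in\{3,5,6\}$ has support splitting into two even- and two odd-weight rows; passing to $A_{\omega_i}$ via the index-XOR relation preserves this two-and-two parity split. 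Given the invariant for $u_{i+1}$, its support contains at least two indices of the same parity as $\omega_i$, hence an index $l\ne\omega_i$ with $l\oplus\omega_i\in\{3,5,6\}$; by nonnegativity (no cancellation) the support of $u_i$ then contains the full two-even/two-odd set coming from column $l$ of $A_{\omega_i}$, which reestablishes the invariant. In particular $u_0\ne0$, so $\adp{\alpha}{\beta}{\gamma}>0$.

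Combining the two directions gives $\adp{\alpha}{\beta}{\gamma}=0\iff\omega_k\in\{1,2,4,7\}$. The two displayed consequences are the special case $k=n-1$ (that is, $\omega_{n-1}\ne0$), and the closing remark is just the identity $\omega_i=4\alpha_i+2\beta_i+\gamma_i$ together with the fact that $\{0,3,5,6\}$ and $\{1,2,4,7\}$ are the even- and odd-weight symbols. The main obstacle, and the only place that needs care, is the backward induction: the naive invariant ``the support meets both parities'' can collapse to a single index precisely when the unique same-parity index equals $\omega_i$, so it is essential to carry the stronger ``two of each parity'' invariant, which the columns indexed by $\{3,5,6\}$ are exactly strong enough to regenerate at every step.
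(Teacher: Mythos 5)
Your argument is correct. Note that the paper itself does not prove this corollary --- it imports it from Lipmaa et al.~\cite{LipmaaEtAl2004} without proof --- so there is no in-paper argument to compare against; what you have written is a valid self-contained derivation from Theorem~\ref{th:AdpMatrix}. The reductions are all sound: $A_0 e_0^T = e_0^T$ lets you strip the trailing zero symbols, nonnegativity of the matrices and of $L$ reduces vanishing of the product to vanishing of the column vector $A_{\omega_0}\cdots A_{\omega_k}e_0^T$, and the first columns of $A_1, A_2, A_4, A_7$ are indeed zero while those of $A_3, A_5, A_6$ have supports $\{0,1,2,3\}$, $\{0,1,4,5\}$, $\{0,2,4,6\}$, each containing two even-weight and two odd-weight indices. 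The one place where care is genuinely needed --- the backward propagation --- you handle correctly: column $c$ of $A_{\omega_i}$ is nonzero exactly when $c\oplus\omega_i\in\{0,3,5,6\}$, the column $c=\omega_i$ regenerates only the singleton $\{\omega_i\}$, and the ``at least two indices of each parity'' invariant is exactly what guarantees some $l\neq\omega_i$ of the right parity whose column restores the full two-and-two split; absence of cancellation then closes the induction. The degenerate case and the two displayed consequences follow as you say.
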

    We use both integer and binary vector notations in the indexes of matrices, coordinates and octal words, i.e. the matrices $A_{(p_0,p_1,p_2)}$ and $A_{4p_0 + 2p_1 + p_2}$ (coordinates $v_{(p_0, p_1, p_2)}$ and $v_{4p_0 + 2p_1 + p_2}$ of $v \in \mathbb{Q}^8$) mean the same. Also, \cite[Proposition 1]{MouhaEtAl2021} provides that $\adpsym$ is symmetric, i.e. $\adp{\alpha}{\beta}{\gamma}$ = $\adp{\beta}{\alpha}{\gamma}$ = $\adp{\gamma}{\beta}{\alpha}$ and so on for any $\alpha, \beta, \gamma \in \zspc{n}$. We will refer to the results of~\cite{MouhaEtAl2021} taking into account this fact.

    \begin{example}\label{example:matriceslist}
    Theorem~\ref{th:AdpMatrix} allows us to calculate values of $\adpsym$ in the following way:
    \begin{align*}
        \adps{\oplus}{&\big(\stackrel{\alpha_0}{1},\stackrel{\alpha_1}{1},\stackrel{\alpha_2}{0},\stackrel{\alpha_3}{0}\big)}{\big(\stackrel{\beta_0}{0},\stackrel{\beta_1}{1},\stackrel{\beta_2}{1},\stackrel{\beta_3}{0}\big)}{\big(\stackrel{\gamma_0}{1},\stackrel{\gamma_1}{0},\stackrel{\gamma_2}{1},\stackrel{\gamma_3}{0}\big)} \\
        &= L \cdot A_{\big(\stackrel{\alpha_0}{1},\stackrel{\beta_0}{0},\stackrel{\gamma_0}{1}\big)} \cdot A_{\big(\stackrel{\alpha_1}{1},\stackrel{\beta_1}{1},\stackrel{\gamma_1}{0}\big)} \cdot A_{\big(\stackrel{\alpha_2}{0},\stackrel{\beta_2}{1},\stackrel{\gamma_2}{1}\big)} \cdot A_{\big(\stackrel{\alpha_3}{0},\stackrel{\beta_3}{0},\stackrel{\gamma_3}{0}\big)} \cdot e_0^T \\
        &= (1, 1, 1, 1, 1, 1, 1, 1) \cdot A_{5} \cdot A_{6} \cdot A_{3} \cdot A_{0} \cdot (1, 0, 0, 0, 0, 0, 0, 0)^T
        = \frac{16}{64} = \frac{1}{4}.
    \end{align*}
    Corollary~\ref{cor:AdpMatrixZeros} provides that this value is positive. Indeed, the first nonzero octal symbol (starting with the least significant bits) $(\alpha_2, \beta_2, \gamma_2) = (0, 1, 1)$ is of even weight.    
    Also, we list all matrices $A_0, \ldots, A_7$ in the explicit form.
    \begin{center}
    \begin{tabular}{cccc}
        $4 \cdot A_0$ & $4 \cdot A_1$ & $4 \cdot A_2$ & $4 \cdot A_3$ \\
        $\begin{psmallmatrix}
             4 & 0 & 0 & 1 & 0 & 1 & 1 & 0 \\
             0 & 0 & 0 & 1 & 0 & 1 & 0 & 0 \\
             0 & 0 & 0 & 1 & 0 & 0 & 1 & 0 \\
             0 & 0 & 0 & 1 & 0 & 0 & 0 & 0 \\
             0 & 0 & 0 & 0 & 0 & 1 & 1 & 0 \\
             0 & 0 & 0 & 0 & 0 & 1 & 0 & 0 \\
             0 & 0 & 0 & 0 & 0 & 0 & 1 & 0 \\
             0 & 0 & 0 & 0 & 0 & 0 & 0 & 0
        \end{psmallmatrix}$ &
       $\begin{psmallmatrix}
             0 & 0 & 1 & 0 & 1 & 0 & 0 & 0 \\
             0 & 4 & 1 & 0 & 1 & 0 & 0 & 1 \\
             0 & 0 & 1 & 0 & 0 & 0 & 0 & 0 \\
             0 & 0 & 1 & 0 & 0 & 0 & 0 & 1 \\
             0 & 0 & 0 & 0 & 1 & 0 & 0 & 0 \\
             0 & 0 & 0 & 0 & 1 & 0 & 0 & 1 \\
             0 & 0 & 0 & 0 & 0 & 0 & 0 & 0 \\
             0 & 0 & 0 & 0 & 0 & 0 & 0 & 1
        \end{psmallmatrix}$  &             
        $\begin{psmallmatrix}
             0 & 1 & 0 & 0 & 1 & 0 & 0 & 0 \\
             0 & 1 & 0 & 0 & 0 & 0 & 0 & 0 \\
             0 & 1 & 4 & 0 & 1 & 0 & 0 & 1 \\
             0 & 1 & 0 & 0 & 0 & 0 & 0 & 1 \\
             0 & 0 & 0 & 0 & 1 & 0 & 0 & 0 \\
             0 & 0 & 0 & 0 & 0 & 0 & 0 & 0 \\
             0 & 0 & 0 & 0 & 1 & 0 & 0 & 1 \\
             0 & 0 & 0 & 0 & 0 & 0 & 0 & 1
        \end{psmallmatrix}$ &               
       $\begin{psmallmatrix}
             1 & 0 & 0 & 0 & 0 & 0 & 0 & 0 \\
             1 & 0 & 0 & 0 & 0 & 1 & 0 & 0 \\
             1 & 0 & 0 & 0 & 0 & 0 & 1 & 0 \\
             1 & 0 & 0 & 4 & 0 & 1 & 1 & 0 \\
             0 & 0 & 0 & 0 & 0 & 0 & 0 & 0 \\
             0 & 0 & 0 & 0 & 0 & 1 & 0 & 0 \\
             0 & 0 & 0 & 0 & 0 & 0 & 1 & 0 \\
             0 & 0 & 0 & 0 & 0 & 1 & 1 & 0
        \end{psmallmatrix}$ \\
        $4 \cdot A_4$ & $4 \cdot A_5$ & $4 \cdot A_6$ & $4 \cdot A_7$ \\
        $\begin{psmallmatrix}
             0 & 1 & 1 & 0 & 0 & 0 & 0 & 0 \\
             0 & 1 & 0 & 0 & 0 & 0 & 0 & 0 \\
             0 & 0 & 1 & 0 & 0 & 0 & 0 & 0 \\
             0 & 0 & 0 & 0 & 0 & 0 & 0 & 0 \\
             0 & 1 & 1 & 0 & 4 & 0 & 0 & 1 \\
             0 & 1 & 0 & 0 & 0 & 0 & 0 & 1 \\
             0 & 0 & 1 & 0 & 0 & 0 & 0 & 1 \\
             0 & 0 & 0 & 0 & 0 & 0 & 0 & 1
        \end{psmallmatrix} $    &          
       $\begin{psmallmatrix}
             1 & 0 & 0 & 0 & 0 & 0 & 0 & 0 \\
             1 & 0 & 0 & 1 & 0 & 0 & 0 & 0 \\
             0 & 0 & 0 & 0 & 0 & 0 & 0 & 0 \\
             0 & 0 & 0 & 1 & 0 & 0 & 0 & 0 \\
             1 & 0 & 0 & 0 & 0 & 0 & 1 & 0 \\
             1 & 0 & 0 & 1 & 0 & 4 & 1 & 0 \\
             0 & 0 & 0 & 0 & 0 & 0 & 1 & 0 \\
             0 & 0 & 0 & 1 & 0 & 0 & 1 & 0
        \end{psmallmatrix} $ &
        $\begin{psmallmatrix}
             1 & 0 & 0 & 0 & 0 & 0 & 0 & 0 \\
             0 & 0 & 0 & 0 & 0 & 0 & 0 & 0 \\
             1 & 0 & 0 & 1 & 0 & 0 & 0 & 0 \\
             0 & 0 & 0 & 1 & 0 & 0 & 0 & 0 \\
             1 & 0 & 0 & 0 & 0 & 1 & 0 & 0 \\
             0 & 0 & 0 & 0 & 0 & 1 & 0 & 0 \\
             1 & 0 & 0 & 1 & 0 & 1 & 4 & 0 \\
             0 & 0 & 0 & 1 & 0 & 1 & 0 & 0
        \end{psmallmatrix}$  &
        $\begin{psmallmatrix}
             0 & 0 & 0 & 0 & 0 & 0 & 0 & 0 \\
             0 & 1 & 0 & 0 & 0 & 0 & 0 & 0 \\
             0 & 0 & 1 & 0 & 0 & 0 & 0 & 0 \\
             0 & 1 & 1 & 0 & 0 & 0 & 0 & 0 \\
             0 & 0 & 0 & 0 & 1 & 0 & 0 & 0 \\
             0 & 1 & 0 & 0 & 1 & 0 & 0 & 0 \\
             0 & 0 & 1 & 0 & 1 & 0 & 0 & 0 \\
             0 & 1 & 1 & 0 & 1 & 0 & 0 & 4
        \end{psmallmatrix}$
    \end{tabular}
    \end{center}
\end{example}

\section{A formula for \texorpdfstring{$\adpxrsym$}{}}\label{sec:formulasection}

In this section, we obtain a formula for $\adpxrsym$ by rewriting the formula from~\cite{VelichkovEtAl2011}. 
First of all, we introduce auxiliary 
$\cadppar{c}$ and $\padppar{a}{b}$ for $\alpha, \beta, \gamma \in \zspc{n}$ and $a, b, c \in \zspc{}$:
\begin{align*}
    \padp{a}{b}{\alpha}{\beta}{\gamma} &= L_{a,b} A_{\omega_{0}} \ldots A_{\omega_{n-1}} e^T_0,\\
    \cadp{c}{\alpha}{\beta}{\gamma} &= L_{c} A_{\omega_{0}} \ldots A_{\omega_{n-1}} e^T_0, 
    \end{align*}
    where $\omega = \omega(\alpha, \beta, \gamma)$,  $L_0 = (1, 0, 1, 0, 1, 0, 1, 0)$, $L_1 = (0, 1, 0, 1, 0, 1, 0, 1)$ and
    \begin{align*}
    L_{0,0} &= (1, 1, 0, 0, 0, 0, 0, 0), \ L_{0,1} = (0, 0, 1, 1, 0, 0, 0, 0),\\
    L_{1,0} &= (0, 0, 0, 0, 1, 1, 0, 0), \ L_{1,1} = (0, 0, 0, 0, 0, 0, 1, 1).
\end{align*}
Hence, $L_c = \sum_{a, b \in \zspc{}} e_{(a, b, c)}$ and $L_{a, b} = \sum_{c \in \zspc{}} e_{(a, b, c)}$. Theorem~\ref{th:AdpMatrix} implies that $$\sum_{a, b \in \zspc{}} \padp{a}{b}{\alpha}{\beta}{\gamma} = \sum_{c \in \zspc{}} \cadp{c}{\alpha}{\beta}{\gamma} = \adp{\alpha}{\beta}{\gamma}.$$

The values of $\cadppar{c}$ and $\padppar{a}{b}$ for $n = 1$ are given in Table~\ref{table:pcadp1bit}.
\begin{table}
\begin{center}
\begin{tabular}{@{}ccccccc@{}}
    \toprule
    $\alpha \beta \gamma$ & $\padppar{0}{0}$  & $\padppar{0}{1}$  & $\padppar{1}{0}$ & $\padppar{1}{1}$  & $\cadppar{0}$  & $\cadppar{1}$ \\  
    \midrule
    $000$ & $1$  & $0$  & $0$ & $0$ & $1$  & $0$ \\
    $011$ & $1/2$  & $1/2$  & $0$ & $0$ & $1/2$  & $1/2$ \\
    $101$ & $1/2$  & $0$  & $1/2$ & $0$ & $1/2$  & $1/2$ \\
    $110$ & $1/4$  & $1/4$  & $1/4$ & $1/4$ & $1$  & $0$ \\       
    \addlinespace
    $001$ & $0$  & $0$  & $0$ & $0$ & $0$  & $0$ \\
    $010$ & $0$  & $0$  & $0$ & $0$ & $0$  & $0$ \\
    $100$ & $0$  & $0$  & $0$ & $0$ & $0$  & $0$ \\
    $111$ & $0$  & $0$  & $0$ & $0$ & $0$  & $0$ \\       
    \bottomrule
\end{tabular}
\caption{$\padp{a}{b}{\alpha}{\beta}{\gamma}$ and $\cadp{c}{\alpha}{\beta}{\gamma}$, $\alpha, \beta, \gamma \in \zspc{}$}
\label{table:pcadp1bit}
\end{center}
\end{table}
Next, we prove the following lemma that is similar to~\cite[Lemma 4]{MouhaEtAl2021}.
\begin{lemma}\label{lemma:matrixindexXOR}
    Let $a, b, c \in \zspc{}$, and $k, w_0, \ldots , w_{n - 1} \in \zspc{3}$. Then
    \begin{itemize}
            \item $L_{c}A_{\omega_{0}} \ldots A_{\omega_{n - 1}} e^T_k = L_{c \oplus k_2} A_{\omega_{0} \oplus k} \ldots A_{\omega_{n - 1} \oplus k} e^T_0$,
            \item $L_{a,b}A_{\omega_{0}} \ldots A_{\omega_{n - 1}} e^T_k = L_{a \oplus k_0, b \oplus k_1} A_{\omega_{0} \oplus k} \ldots A_{\omega_{n - 1} \oplus k} e^T_0$.
    \end{itemize}
\end{lemma}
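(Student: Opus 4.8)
The plan is to recast the ``XOR by $k$'' operation on the octal index as conjugation by a single permutation matrix, so that the whole identity collapses to a telescoping product. For $k \in \zspc{3}$ let $P_k$ be the $8 \times 8$ permutation matrix with $P_k e_i^T = e_{i \oplus k}^T$, equivalently $(P_k)_{i,j} = 1$ iff $i \oplus j = k$. Since $\oplus$ is an involution we have $P_k = P_k^T = P_k^{-1}$ and $P_k^2 = I$, and on a row vector $v$ one computes $(v P_k)_j = v_{j \oplus k}$.

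The key step is the single-matrix identity $A_{\omega \oplus k} = P_k A_\omega P_k$ for all $\omega, k \in \zspc{3}$. This is immediate from the defining relation $(A_k)_{i,j} = (A_0)_{i \oplus k, j \oplus k}$: both sides have $(i,j)$-entry equal to $(A_0)_{i \oplus \omega \oplus k,\, j \oplus \omega \oplus k}$. Multiplying these identities along the product and cancelling each adjacent pair $P_k P_k = I$ yields
\[
  A_{\omega_0 \oplus k} \ldots A_{\omega_{n-1} \oplus k} = P_k \left( A_{\omega_0} \ldots A_{\omega_{n-1}} \right) P_k.
\]

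It remains to feed $P_k$ into the boundary vectors. On the right, $P_k e_0^T = e_k^T$. On the left I use the descriptions $L_c = \sum_{a,b \in \zspc{}} e_{(a,b,c)}$ and $L_{a,b} = \sum_{c \in \zspc{}} e_{(a,b,c)}$ already recorded in the text: since $(v P_k)_j = v_{j \oplus k}$, right-multiplication by $P_k$ shifts the index of each coordinate by $k$, whence $L_{c \oplus k_2} P_k = L_c$ and $L_{a \oplus k_0,\, b \oplus k_1} P_k = L_{a,b}$. Substituting the telescoped product together with these three relations into the right-hand sides of the two claimed equalities turns them into the corresponding left-hand sides $L_c A_{\omega_0} \ldots A_{\omega_{n-1}} e_k^T$ and $L_{a,b} A_{\omega_0} \ldots A_{\omega_{n-1}} e_k^T$, which completes the proof.

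The only genuinely error-prone point is the bookkeeping of the bit-order convention: one must align the octal encoding $4\alpha_i + 2\beta_i + \gamma_i$ with the coordinate labels $(a,b,c)$ so that $k_0, k_1, k_2$ act on the intended bits, in particular that $L_c$ is governed by the last bit $k_2$ and $L_{a,b}$ by the first two bits $k_0, k_1$. Once this is pinned down, verifying $L_{c \oplus k_2} P_k = L_c$ and $L_{a \oplus k_0,\, b \oplus k_1} P_k = L_{a,b}$ reduces to a one-line check of which bit of the index each $L$-vector constrains, and no further computation is needed.
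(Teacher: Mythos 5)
Your proof is correct and is essentially the paper's own argument: your $P_k$ is exactly the involution matrix $T_k$ of the paper, and the four identities you verify ($A_{\omega\oplus k}=P_kA_\omega P_k$, $P_ke_0^T=e_k^T$, $L_{c\oplus k_2}P_k=L_c$, $L_{a\oplus k_0,b\oplus k_1}P_k=L_{a,b}$) are the same ones the paper states as straightforward before telescoping. You have merely written out the verifications the paper leaves implicit.
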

\begin{proof}
    Indeed, it is straightforward that $A_{t \oplus k} = T_k A_{t} T_k$, $e^T_k = T_k e^T_0$, $L_{c \oplus k_2} = L_c T_k$ and $L_{a \oplus k_0, b \oplus k_1} = L_{a,b} T_k$, where $t \in \zspc{3}$ and $T_k$ is the $8\times8$ involution matrix that swaps $i$ and $i \oplus k$ coordinates, $i \in \zspc{3}$.
\end{proof}

Now we rewrite the formula from~\cite{VelichkovEtAl2011} in terms of $\cadppar{c}$ and $\padppar{a}{b}$. 
\begin{theorem}\label{th:adrxrformula}
    Let $\alpha, \beta, \gamma \in \zspc{n - r}$, $\alpha', \beta', \gamma' \in \zspc{r}$. Then
    \begin{multline*}
    \adpxr{\cctn{\alpha'}{\alpha}}{\cctn{\beta'}{\beta}}{\cctn{\gamma}{\gamma'}}{r} \\ = \sum_{a,b,c \in \zspc{}} \padp{a}{b}{\alpha}{\beta}{\inv{\gamma}{c}} \cadp{c}{\inv{\alpha'}{a}}{\inv{\beta'}{b}}{\gamma'}.
    \end{multline*}
\end{theorem}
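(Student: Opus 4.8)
The plan is to derive the identity directly from the carry-automaton interpretation of the matrix product in Theorem~\ref{th:AdpMatrix}, using Lemma~\ref{lemma:matrixindexXOR} to absorb boundary carries; this is essentially the computation underlying the formula of~\cite{VelichkovEtAl2011}, now re-expressed through $\cadppar{c}$ and $\padppar{a}{b}$. First I would rewrite $\adpxr{\cctn{\alpha'}{\alpha}}{\cctn{\beta'}{\beta}}{\cctn{\gamma}{\gamma'}}{r}$ as $2^{-2n}$ times the number of pairs $(x,y)$ for which $\big((x + \cctn{\alpha'}{\alpha}) \oplus (y + \cctn{\beta'}{\beta})\big) \lll r = \big((x \oplus y)\lll r\big) + \cctn{\gamma}{\gamma'}$, and split every vector at the rotation boundary into its high $r$ coordinates ($x_H$, indices $0,\dots,r-1$) and its low $n-r$ coordinates ($x_L$, indices $r,\dots,n-1$). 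The rotation sends the low block of $z = x\oplus y$ onto the high $n-r$ output bits (those carrying the difference $\gamma$) and the high block of $z$ onto the low $r$ output bits (those carrying $\gamma'$); crucially, within each block the two input additions $x + \cctn{\alpha'}{\alpha}$, $y + \cctn{\beta'}{\beta}$ and the rotated output addition scan the bits in the same order of significance, so a single $\adpsym$-style matrix product governs each block separately.

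The three additions carry three bits across the boundary: let $a$ and $b$ be the carries that $x+\cctn{\alpha'}{\alpha}$ and $y+\cctn{\beta'}{\beta}$ propagate from the low block into the high block, and let $c$ be the carry the output addition propagates out of the $z_H$-part (the low $r$ output bits) into the $z_L$-part (the high $n-r$ output bits). The key observation is that once $(a,b,c)$ is fixed, the event depends on $(x_L,y_L)$ and on $(x_H,y_H)$ through conditions that no longer interact: the low block determines $a,b$ and must satisfy the output condition on the $\gamma$-bits with carry-in $c$, while the high block, fed the carry-ins $a,b$ on its two input additions, must satisfy the output condition on the $\gamma'$-bits and produce exactly carry-out $c$. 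Since $x_L,x_H,y_L,y_H$ are independent and uniform, the count factors, and I would write $\adpxr{\cctn{\alpha'}{\alpha}}{\cctn{\beta'}{\beta}}{\cctn{\gamma}{\gamma'}}{r}$ as $\sum_{a,b,c}$ of a low-block probability times a high-block probability.

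It then remains to recognize the two factors. For the low block the input additions start from zero carry and are required to carry out $a,b$, which is exactly the restriction imposed by $L_{a,b}$; the output addition carries in $c$, and by the second bullet of Lemma~\ref{lemma:matrixindexXOR} with $k=(0,0,c)$ this carry-in is absorbed by replacing $\gamma$ with $\inv{\gamma}{c}$, turning the factor into $\padp{a}{b}{\alpha}{\beta}{\inv{\gamma}{c}}$. For the high block the roles swap: the output addition starts from zero carry and must carry out $c$, the restriction imposed by $L_c$, while the two input additions carry in $a,b$; the first bullet of Lemma~\ref{lemma:matrixindexXOR} with $k=(a,b,0)$ absorbs these into $\inv{\alpha'}{a}$ and $\inv{\beta'}{b}$, leaving $\gamma'$ untouched since $k_2=0$, and yields $\cadp{c}{\inv{\alpha'}{a}}{\inv{\beta'}{b}}{\gamma'}$. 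The normalizing factors $2^{-2(n-r)}$ and $2^{-2r}$ of the two blocks multiply to $2^{-2n}$, matching the definition, and summing over $a,b,c$ produces the stated formula.

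The hard part will be the bookkeeping rather than any single estimate: I must check that the rotation genuinely preserves the within-block significance order for all three additions simultaneously, so that each block is governed by one honest matrix product with $e_0^T$ on its low end; that the conditional factorization is exact, in particular that the high-block output bits depend on the low block only through the already-fixed carries $a,b$; and that Lemma~\ref{lemma:matrixindexXOR} places the complementation on the correct difference ($\inv{\gamma}{c}$ in the low factor versus $\inv{\alpha'}{a},\inv{\beta'}{b}$ in the high factor) and pairs it with the correct summation vector ($L_{a,b}$ versus $L_c$). Verifying these alignments, for instance by checking the boundary case $r=1$, $n-r=1$ against Table~\ref{table:pcadp1bit}, is what turns the sketch into a proof.
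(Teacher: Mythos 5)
Your argument is correct, but it takes a genuinely different route from the paper. The paper does not touch the counting definition at all: it quotes the finished formula of Velichkov et al.\ (\cite[Theorem 1]{VelichkovEtAl2011}) for $\adprxsym$, translates the matrices ($A'_{abc}=4A_{(\overline{c},b,a)}$, $R=e_4^TL_0+e_5^TL_1$, $L'_j$, $C_j$), expands the middle factor $R$ to split the product of $n$ matrices into the two blocks, and then applies Lemma~\ref{lemma:matrixindexXOR} twice to push the boundary vectors $e_k^T$ back to $e_0^T$, at which point the two factors are $\padppar{a}{b}$ and $\cadppar{c}$ by definition. You instead re-derive the statement from the carry semantics: split $(x,y)$ at the rotation boundary, condition on the three boundary carries $(a,b,c)$, factor the count, and identify each factor via Lemma~\ref{lemma:matrixindexXOR}. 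Your block alignment and the placement of the complementations are all correct (I checked that within each block the three additions do scan bits in the same significance order, that the carry out of the $z_H+\gamma'$ addition is exactly the carry into the $z_L+\gamma$ addition, and that the discarded carries correspond precisely to the coordinates summed by $L_{a,b}$ and $L_c$). The one substantive obligation your route adds is the probabilistic interpretation of the rational-series quantities themselves: the paper only ever defines $\padppar{a}{b}$ and $\cadppar{c}$ algebraically as $L_{a,b}A_{\omega_0}\cdots A_{\omega_{n-1}}e_0^T$ and $L_cA_{\omega_0}\cdots A_{\omega_{n-1}}e_0^T$, and Theorem~\ref{th:AdpMatrix} as quoted gives no carry semantics, so you must import or re-prove from \cite{LipmaaEtAl2004,MouhaEtAl2010} that these products equal the joint probability of the $\adpsym$-differential together with the prescribed carries in and out. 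What your approach buys is a self-contained, interpretable proof that explains why the formula holds; what the paper's buys is brevity and the guarantee that no carry bookkeeping error can creep in, since all the combinatorics is delegated to the cited result.
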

\begin{proof}
    Let $x = \cctn{\alpha'}{\alpha}$, $y = \cctn{\beta'}{\beta}$, $z = \cctn{\gamma}{\gamma'}$ and $r' = n - r$, i.e. $\alpha' = (x_0, \ldots, x_{r - 1})$, $\beta' = (y_0, \ldots, y_{r - 1})$ and $\gamma' = (z_{r'}, \ldots, z_{n - 1})$. According to~\cite[Theorem 1]{VelichkovEtAl2011}, 
    $$
        \adpr{\mathrm{RX}}{z}{y}{x}{r'} = \frac{1}{4^{n}}\sum\limits_{j \in \{0, 2, 4, 6\}} L'_j A'_{w[n-1]} \ldots A'_{w[n - r']} R A'_{w[n - r' - 1]} \ldots A'_{w[0]} C_j,
    $$  
    where the denotations are given in~\cite[\textsection 5.4]{VelichkovEtAl2011} in the following form:
    \begin{itemize}
        \item $w[n - i - 1] = z_i y_{i - r'} x_{i - r'} = z_i y_{i + r} x_{i + r}$, i.e. it is one of $000, 001, \ldots, 111$, $i \in \{0, \ldots, n-1\}$, where $\mathrm{mod}\ n$ is used for all indices; since the least significant coordinate of $x, y$ and $z$ is $n - 1$, we have changed the original $w[i] = z_i y_{i + r'} x_{i + r'}$;
        \item $L'_0 = L_{0,0}$, $L'_2 = L_{0,1}$, $L'_4 = L_{1,0}$, $L'_6 = L_{1,1}$ and $C_j = e^T_j$;
        \item $A'_{abc}$ and $R$, where $a, b, c \in \zspc{}$, are $8 \times 8$ matrices given in \cite[Appendix A]{VelichkovEtAl2011} ($A'_{abc}$ are denoted by $A_{abc}$ there). It can be seen that $$A'_{a b c} = 4 A_{(\overline{c}, b, a)} = 4A_{4\overline{c} + 2b + a} \text{ and } R = e^T_4 L_0 + e^T_5 L_1.$$
    \end{itemize}    
    Note that $\adprxsym$ is denoted by $\adpgeneral{\mathrm{ARX}}$ in~\cite{VelichkovEtAl2011} since they can be transformed to each other by changing the first argument.

    Let $v_i = (\overline{x_{i + r}}, y_{i + r}, z_i) \in \zspc{3}$ for all $i \in \{0, \ldots, n-1\}$, the indices are $\mathrm{mod}\ n$ as well. Thus, we can replace $A'_{w[n - i - 1]}$ by $4A_{v_{i}}$ and  move to our denotations:
\begin{align*}
        &\adpxr{\cctn{\alpha'}{\alpha}}{\cctn{\beta'}{\beta}}{\cctn{\gamma}{\gamma'}}{r} = \adpxr{x}{y}{z}{r} = \adpr{\mathrm{RX}}{z}{y}{x}{r'}\\
        &= \sum\limits_{j \in \{0, 2, 4, 6\}} L'_j A_{v_{0}} \ldots A_{v_{r' - 1}} (e^T_4 L_0 + e^T_5 L_1) A_{v_{r'}} \ldots A_{v_{n - 1}} e^T_j\\
        &= \sum\limits_{j \in \{0, 2, 4, 6\}} \big(L'_j A_{v_{0}} \ldots A_{v_{r' - 1}} e^T_4 \cdot L_0 A_{v_{r'}} \ldots A_{v_{n - 1}} e^T_j \\
        &\omit\hfill\ensuremath{ 
        + L'_j A_{v_{0}} \ldots A_{v_{r' - 1}} e^T_5 \cdot L_1 A_{v_{r'}} \ldots A_{v_{n - 1}} e^T_j\big)}\\
        &= \sum_{a,b,c \in \zspc{}} L_{a,b} A_{v_{0}} \ldots A_{v_{r' - 1}} e^T_{(1, 0, c)} \cdot L_c A_{v_{r'}} \ldots A_{v_{n - 1}} e^T_{(a, b, 0)}\\
        &\stackrel{*}{=} \sum_{a,b,c \in \zspc{}} L_{\overline{a},b} A_{v_{0} \oplus (1, 0, c)} \ldots A_{v_{r' - 1} \oplus (1, 0, c)} e^T_{0} \cdot L_c A_{v_{r'} \oplus (a, b, 0)} \ldots A_{v_{n - 1} \oplus (a, b, 0)} e^T_0\\       
        &\stackrel{}{=} \sum_{a,b,c \in \zspc{}} L_{a,b} A_{v_{0} \oplus (1, 0, c)} \ldots A_{v_{r' - 1} \oplus (1, 0, c)} e^T_{0} \cdot L_c A_{v_{r'} \oplus (\overline{a}, b, 0)} \ldots A_{v_{n - 1} \oplus (\overline{a}, b, 0)} e^T_0,
   \end{align*}
   where Lemma~\ref{lemma:matrixindexXOR} is applied twice at the step $*$ and $a$ is replaced by $\overline{a}$ at the last step. 
   Finally, we can see that 
    \begin{align*}
        \cadp{c}{\inv{\alpha'}{a}}{\inv{\beta'}{b}}{\gamma'} &= L_c A_{v_{r'} \oplus (a \oplus 1, b, 0)} \ldots A_{v_{n - 1} \oplus (a \oplus 1, b, 0)} e^T_0 \text{ and } \\
        \padp{a}{b}{\alpha}{\beta}{\inv{\gamma}{c}} &= L_{a,b} A_{v_{0} \oplus (1, 0, c)} \ldots A_{v_{r' - 1} \oplus (1, 0, c)} e^T_{0},
    \end{align*}        
    since $v_{0}, \ldots, v_{r' - 1}$ and $v_{r'}, \ldots, v_{n - 1}$ are equal to
    $$
        (\overline{x_{r}}, y_r, z_0), \ldots, (\overline{x_{n - 1}}, y_{n - 1}, z_{r' - 1}) \text{ and } (\overline{x_{0}}, y_{0}, z_{r'}), \ldots, (\overline{x_{r - 1}}, y_{r - 1}, z_{n - 1}).
    $$
    The theorem is proved.
\end{proof}

\begin{corollary}\label{cor:adrxrformula}
    Let $\alpha, \beta, \gamma \in \zspc{n - r}$, $\alpha', \beta', \gamma' \in \zspc{r}$, $a = \alpha_{n - r - 1} \oplus \beta_{n - r - 1} \oplus \gamma_{n - r - 1}$ and $a' = \alpha'_{r - 1} \oplus \beta'_{r - 1} \oplus \gamma'_{r - 1}$. Then
\begin{multline*}
     \adpxr{\cctn{\alpha'}{\alpha}}{\cctn{\beta'}{\beta}}{\cctn{\gamma}{\gamma'}}{r} = \padp{a'}{0}{\alpha}{\beta}{\inv{\gamma}{a}} \cadp{a}{\inv{\alpha'}{a'}}{\beta'}{\gamma'}\\
     + \padp{\overline{a'}}{1}{\alpha}{\beta}{\inv{\gamma}{a}} \cadp{a}{\inv{\overline{\alpha'}}{a'}}{\overline{\beta'}}{\gamma'}.
\end{multline*}
\end{corollary}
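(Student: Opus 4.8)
The plan is to begin from the eight-term sum of Theorem~\ref{th:adrxrformula} and show that six of the summands vanish identically, leaving precisely the two products in the statement. Everything hinges on one boundary fact about the rightmost matrix factor: inspecting the explicit matrices of Example~\ref{example:matriceslist}, one sees that $A_k e_0^T$ is the zero vector exactly for the odd-weight symbols $k \in \{1,2,4,7\}$ and is nonzero for $k \in \{0,3,5,6\}$; this is the computational content underlying Corollary~\ref{cor:AdpMatrixZeros}. Consequently, for any row vector $L$ on the left, a product $L\, A_{\omega_0} \cdots A_{\omega_{m-1}} e_0^T$ collapses to $0$ as soon as its least significant octal symbol $\omega_{m-1}$ has odd weight, since the factor $A_{\omega_{m-1}} e_0^T$ already annihilates the whole expression. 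I will apply this separately to the $\padppar{}$-factor, whose octal word has length $n-r \geq 1$, and to the $\cadppar{}$-factor, whose octal word has length $r \geq 1$ (both lengths are positive because $1 \leq r \leq n-1$).

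To keep the fixed quantity $a = \alpha_{n-r-1} \oplus \beta_{n-r-1} \oplus \gamma_{n-r-1}$ of the corollary distinct from the summation index, I rename the first summation variable of Theorem~\ref{th:adrxrformula} from $a$ to $s$, so that the generic summand is $\padp{s}{b}{\alpha}{\beta}{\inv{\gamma}{c}}\, \cadp{c}{\inv{\alpha'}{s}}{\inv{\beta'}{b}}{\gamma'}$. First I read off the least significant symbol of the octal word $\omega(\alpha, \beta, \inv{\gamma}{c})$ of the $\padppar{}$-factor, namely $(\alpha_{n-r-1}, \beta_{n-r-1}, \gamma_{n-r-1} \oplus c)$; its weight parity is $\alpha_{n-r-1} \oplus \beta_{n-r-1} \oplus \gamma_{n-r-1} \oplus c = a \oplus c$. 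By the boundary fact, $\padp{s}{b}{\alpha}{\beta}{\inv{\gamma}{c}} = 0$ whenever $c \neq a$, so the four summands with $c = \overline{a}$ drop out regardless of the value of the $\cadppar{}$-factor.

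Setting $c = a$, I then turn to the $\cadppar{}$-factor $\cadp{a}{\inv{\alpha'}{s}}{\inv{\beta'}{b}}{\gamma'}$; the least significant symbol of its octal word is $(\alpha'_{r-1} \oplus s, \beta'_{r-1} \oplus b, \gamma'_{r-1})$, of weight parity $a' \oplus s \oplus b$. Again by the boundary fact this factor vanishes unless $s \oplus b = a'$, and the two pairs $(s,b)$ meeting this condition are exactly $(a', 0)$ and $(\overline{a'}, 1)$. Since all other six summands vanish, the sum of Theorem~\ref{th:adrxrformula} reduces to these two terms: substituting $(s,b,c) = (a',0,a)$ gives the first product of the statement (using $\inv{\beta'}{0} = \beta'$), and substituting $(s,b,c) = (\overline{a'},1,a)$ gives the second (using $\inv{\beta'}{1} = \overline{\beta'}$ together with the identity $\inv{\alpha'}{\overline{a'}} = \inv{\overline{\alpha'}}{a'}$, which holds because XORing every coordinate by $\overline{a'} = a' \oplus 1$ equals first complementing and then XORing by $a'$). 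This yields exactly the claimed two-term expression. The one genuine insight is recognizing that the $\padppar{}$-factor pins down $c$ while the $\cadppar{}$-factor pins down the parity $s \oplus b$; the only delicate point afterwards is the bookkeeping of the index shifts $\inv{\gamma}{c}$, $\inv{\alpha'}{s}$, $\inv{\beta'}{b}$ across the two factors, and once the two parity constraints are in place the reduction is forced, so no real obstacle remains.
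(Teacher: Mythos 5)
Your proof is correct and follows essentially the same route as the paper's: both start from the eight-term sum of Theorem~\ref{th:adrxrformula} and eliminate six summands via the two parity constraints on the least significant octal symbols, first forcing $c = a$ through the $\padppar{a}{b}$-factor and then $s \oplus b = a'$ through the $\cadppar{c}$-factor. The only cosmetic difference is that you justify the vanishing directly from the observation that $A_k e_0^T = 0$ for odd-weight $k$, whereas the paper invokes Corollary~\ref{cor:AdpMatrixZeros} together with the fact that $\adpsym = 0$ forces the (nonnegative) auxiliary quantities $\padppar{a}{b}$ and $\cadppar{c}$ to vanish.
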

\begin{proof}
    By Theorem~\ref{th:adrxrformula}, $\adpxr{\cctn{\alpha'}{\alpha}}{\cctn{\beta'}{\beta}}{\cctn{\gamma}{\gamma'}}{r}$ is equal to
    $$
        \sum_{p,q,u \in \zspc{}} \padp{p}{q}{\alpha}{\beta}{\inv{\gamma}{u}} \cadp{u}{\inv{\alpha'}{p}}{\inv{\beta'}{q}}{\gamma'}.
    $$
    If $\alpha_{n - r - 1} \oplus \beta_{n - r - 1} \oplus \gamma_{n - r - 1} = 1$, then $\adp{\alpha}{\beta}{\gamma} = 0$ by Corollary~\ref{cor:AdpMatrixZeros}. It means that $\padp{p}{q}{\alpha}{\beta}{\gamma} = \cadp{u}{\alpha}{\beta}{\gamma} = 0$.
    Hence, we can exclude $u \neq \alpha_{n - r - 1} \oplus \beta_{n - r - 1} \oplus \gamma_{n - r - 1} = a$. Also, we can consider only
    \begin{align*}
        \alpha'_{r - 1} \oplus p \oplus \beta'_{r - 1} \oplus q \oplus \gamma'_{r - 1} = 0 &\Rightarrow
        \begin{cases}
            p = \alpha'_{r - 1} \oplus \beta'_{r - 1} \oplus \gamma'_{r - 1} = a', & \text{if } q = 0, \\
            p = \overline{\alpha'_{r - 1} \oplus \beta'_{r - 1} \oplus \gamma'_{r - 1}} = \overline{a'} , & \text{if } q = 1.
        \end{cases}
    \end{align*}
    This completes the proof.
\end{proof}

Auxiliary $\cadppar{c}$ and $\padppar{a}{b}$ can be also calculated using recurrence formulas that are similar to ones for $\adpsym$ proposed in~\cite[Theorem 3]{MouhaEtAl2021}.
  \begin{theorem}\label{th:pcadprecurrence}
Let $\alpha, \beta, \gamma \in \zspc{n}, p \in \zspc{3}, a, b, c \in \zspc{}$ and $\wt{p}$ be even. Then
\begin{align*}
        \cadp{c}{\alpha p_0}{\beta p_1}{\gamma p_2} &=
            \frac{1}{2^{\wt{p}}}\sum\limits_{q \in \zspc{3}, q \preceq p}\cadp{c \oplus q_2}{\inv{\alpha}{q_0}}{\inv{\beta}{q_1}}{\inv{\gamma}{q_2}},\\
        \padp{a}{b}{\alpha p_0}{\beta p_1}{\gamma p_2} &=
            \frac{1}{2^{\wt{p}}}\sum\limits_{q \in \zspc{3}, q \preceq p}\padp{a \oplus q_0}{b \oplus q_1}{\inv{\alpha}{q_0}}{\inv{\beta}{q_1}}{\inv{\gamma}{q_2}}.
    \end{align*}
    If $\wt{p}$ is odd, $\cadp{c}{\alpha p_0}{\beta p_1}{\gamma p_2} = \padp{a}{b}{\alpha p_0}{\beta p_1}{\gamma p_2} = 0$.
\end{theorem}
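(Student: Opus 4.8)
The plan is to reduce the whole statement to understanding the single matrix factor contributed by the appended octal symbol $p = (p_0, p_1, p_2)$. Since $\alpha p_0, \beta p_1, \gamma p_2 \in \zspc{n + 1}$ and the appended bits occupy the least significant (last) coordinate, the octal word $\omega(\alpha p_0, \beta p_1, \gamma p_2)$ is exactly $\omega(\alpha, \beta, \gamma)$ with the extra symbol $p$ attached at the end. Hence, writing $\omega = \omega(\alpha, \beta, \gamma)$, the definitions of $\cadppar{c}$ and $\padppar{a}{b}$ give
\[
    \cadp{c}{\alpha p_0}{\beta p_1}{\gamma p_2} = L_c A_{\omega_0} \ldots A_{\omega_{n - 1}} A_p e^T_0, \qquad \padp{a}{b}{\alpha p_0}{\beta p_1}{\gamma p_2} = L_{a, b} A_{\omega_0} \ldots A_{\omega_{n - 1}} A_p e^T_0.
\]
So it suffices to determine the column vector $A_p e^T_0$ and then expand the product.

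The main step is to compute $A_p e^T_0$, i.e. the zeroth column of $A_p$, from $A_0$. Using $(A_p)_{i, 0} = (A_0)_{i \oplus p, p}$, the entry of $A_p e^T_0$ in row $i$ equals the entry of the $p$-th column of $A_0$ in row $i \oplus p$. Reading off the explicit matrices in Example~\ref{example:matriceslist}, the $p$-th column of $A_0$ vanishes whenever $\wt{p}$ is odd (that is $p \in \{1, 2, 4, 7\}$), while for even $\wt{p}$ (that is $p \in \{0, 3, 5, 6\}$) it equals $\frac{1}{2^{\wt{p}}}$ exactly in the rows $j$ with $j \preceq p$. Since $i \oplus p \preceq p \iff i \preceq p$, this yields
\[
    A_p e^T_0 = \frac{1}{2^{\wt{p}}} \sum_{q \preceq p} e^T_q \quad (\wt{p} \text{ even}), \qquad A_p e^T_0 = 0 \quad (\wt{p} \text{ odd}).
\]
This is the only place where the explicit entries of the matrices are needed, and I expect the bookkeeping of the column of $A_0$ under the $\oplus$-permutation to be the main, though routine, obstacle.

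When $\wt{p}$ is odd, both products above are zero, which is the final assertion. When $\wt{p}$ is even, substituting the identity and pulling the sum out gives, for $\cadppar{c}$,
\[
    \cadp{c}{\alpha p_0}{\beta p_1}{\gamma p_2} = \frac{1}{2^{\wt{p}}} \sum_{q \preceq p} L_c A_{\omega_0} \ldots A_{\omega_{n - 1}} e^T_q,
\]
and analogously for $\padppar{a}{b}$ with $L_{a, b}$ in place of $L_c$. I would then apply Lemma~\ref{lemma:matrixindexXOR} to each summand, rewriting $L_c A_{\omega_0} \ldots A_{\omega_{n - 1}} e^T_q = L_{c \oplus q_2} A_{\omega_0 \oplus q} \ldots A_{\omega_{n - 1} \oplus q} e^T_0$. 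Finally, observing that XORing every symbol $\omega_i$ by $q = (q_0, q_1, q_2)$ turns $\omega(\alpha, \beta, \gamma)$ into the octal word of $(\inv{\alpha}{q_0}, \inv{\beta}{q_1}, \inv{\gamma}{q_2})$, each summand becomes $\cadp{c \oplus q_2}{\inv{\alpha}{q_0}}{\inv{\beta}{q_1}}{\inv{\gamma}{q_2}}$, which is the claimed recurrence; the $\padppar{a}{b}$ identity follows identically from the second bullet of Lemma~\ref{lemma:matrixindexXOR}, replacing $c \oplus q_2$ by the pair $a \oplus q_0, b \oplus q_1$.
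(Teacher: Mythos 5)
Your proposal is correct and follows essentially the same route as the paper: write the value as $L_{\cdot}A_{\omega_0}\cdots A_{\omega_{n-1}}A_p e_0^T$, identify $A_p e_0^T = \frac{1}{2^{\wt{p}}}\sum_{q\preceq p}e_q^T$ for even $\wt{p}$, and push each $e_q^T$ back to $e_0^T$ via Lemma~\ref{lemma:matrixindexXOR}. The only cosmetic difference is the odd-weight case, which you settle by checking directly that the relevant column of $A_0$ vanishes, whereas the paper deduces it from Corollary~\ref{cor:AdpMatrixZeros} and nonnegativity; both are valid.
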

\begin{proof}
    By definition, 
    $
        \cadp{c}{\alpha p_0}{\beta p_1}{\gamma p_2} = L_{c} A_{\omega_{0}} \ldots A_{\omega_{n-1}} A_{p} e^T_0.
    $
    First of all, $\adp{\alpha p_0}{\beta p_1}{\gamma p_2} = 0$ if $\wt{p}$ is odd (see Corollary~\ref{cor:AdpMatrixZeros}). This implies the same for $\cadppar{c}$ and $\padppar{a}{b}$. Next, let $\wt{p}$ be even. It can be seen that 
    $$
        A_{p} e^T_0 = \frac{1}{2^{\wt{p}}}\sum\limits_{q \in \zspc{3}, q \preceq p} e^T_q.
    $$
    Therefore,
    \begin{align*}
        \cadp{c}{&\alpha p_0}{\beta p_1}{\gamma p_2} = \frac{1}{2^{\wt{p}}}\sum\limits_{q \in \zspc{3}, q \preceq p} L_{c} A_{\omega_{0}} \ldots A_{\omega_{n-1}} e^T_q\\
        &{=} \frac{1}{2^{\wt{p}}}\sum\limits_{q \in \zspc{3}, q \preceq p} L_{c \oplus q_2} A_{\omega_{0} \oplus q} \ldots A_{\omega_{n-1} \oplus q} e^T_0 \ \ \ \ \ \ \  (\mathrm{Lemma}~\ref{lemma:matrixindexXOR})\\
        &{=} \frac{1}{2^{\wt{p}}}\sum\limits_{q \in \zspc{3}, q \preceq p} L_{c \oplus q_2} A_{(\alpha_{0}, \beta_{0}, \gamma_{0})  \oplus (q_0, q_1, q_2)} \ldots A_{(\alpha_{n-1}, \beta_{n-1}, \gamma_{n-1}) \oplus (q_0, q_1, q_2)} e^T_0\\
        &{=} \frac{1}{2^{\wt{p}}}\sum\limits_{q \in \zspc{3}, q \preceq p}\cadp{c \oplus q_2}{\inv{\alpha}{q_0}}{\inv{\beta}{q_1}}{\inv{\gamma}{q_2}}.
    \end{align*}
    The recurrence formula for $\padppar{a}{b}$ can be proved in the same way.
\end{proof}  

Thus, Table~\ref{table:pcadp1bit} and Theorem~\ref{th:pcadprecurrence} can be the induction base and the induction step respectively for calculating both $\cadppar{c}$ and $\padppar{a}{b}$. As a consequence, we can use them for calculating $\adpxrsym$.

\section{Symmetries of \texorpdfstring{$\adpxrsym$}{}}\label{sec:symmetires}
In this section, we prove some argument symmetries of $\adpxrsym$. This can be helpful to construct optimal differences, impossible differences and so on since they may provide distinct differences with the same probability.

First, we prove argument symmetries of $\cadppar{c}$ and $\padppar{a}{b}$.

\begin{lemma}\label{lemma:forMinuses}
    Let $\omega = \omega(\alpha, \beta, \gamma)$ and $\omega' = \omega(-\alpha, \beta, \gamma)$, where $\alpha, \beta, \gamma \in \zspc{n}$ and $\alpha \neq 0$. Then for all $i \in \{0, \ldots, 7\}$ the following holds:
    $$
        e_i A_{\omega_{0}} \ldots A_{\omega_{n - 1}} e_0^T = e_{i \oplus 4} A_{\omega'_{0}} \ldots A_{\omega'_{n - 1}} e_0^T.
    $$
\end{lemma}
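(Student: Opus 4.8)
The plan is to translate the passage from $\alpha$ to $-\alpha$ into an explicit bitwise operation and then track the resulting change through the matrix product using the involutions $T_k$ from the proof of Lemma~\ref{lemma:matrixindexXOR}. Since $\alpha \neq 0$, let $j = \max\{i : \alpha_i = 1\}$ be the position of the least significant nonzero bit of $\alpha$ (recall $\alpha_{n-1}$ is the least significant coordinate). Because $-\alpha = \overline{\alpha}+1 = 2^n - \alpha$, the standard two's-complement description of negation gives $(-\alpha)_i = \alpha_i$ for $i \geq j$ and $(-\alpha)_i = \overline{\alpha_i}$ for $i < j$. As $\alpha_i$ occupies the high (weight-$4$) bit of the octal symbol $\omega_i = 4\alpha_i + 2\beta_i + \gamma_i$, this means $\omega'_i = \omega_i$ for $i \geq j$ and $\omega'_i = \omega_i \oplus 4$ for $i < j$, while $\beta,\gamma$ are untouched. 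Note also that $\alpha_j = 1$ while $\alpha_i = 0$ for $i > j$, so $\omega_j$ has high bit $1$ and $\omega_{j+1},\dots,\omega_{n-1}$ all have high bit $0$.

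Writing $T = T_4$ for the involution swapping coordinates $m$ and $m \oplus 4$, the identity $A_{t \oplus 4} = T A_t T$ from Lemma~\ref{lemma:matrixindexXOR} lets me rewrite each flipped factor as $A_{\omega'_i} = T A_{\omega_i} T$ for $i < j$. Inserting this and using $T^2 = I$ to cancel adjacent copies telescopes the product into
$$A_{\omega'_0}\cdots A_{\omega'_{n-1}} = T\, A_{\omega_0}\cdots A_{\omega_{j-1}}\, T\, A_{\omega_j}\cdots A_{\omega_{n-1}},$$
leaving only a leading $T$ and one $T$ immediately before position $j$. Multiplying on the left by $e_{i\oplus 4}$ and using the row-vector identity $e_{i\oplus 4}T = e_i$ turns the right-hand side of the claim into $e_i A_{\omega_0}\cdots A_{\omega_{j-1}}\, (T v)$, where $v = A_{\omega_j}\cdots A_{\omega_{n-1}} e_0^T$. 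Thus it suffices to prove that $v$ is fixed by $T$, i.e. $Tv = v$; the boundary cases $j = 0$ (no flips, so one argues $e_{i\oplus 4}v = v_{i\oplus 4} = v_i = e_i v$ directly) and $j = n-1$ (empty tail, $v = A_{\omega_{n-1}}e_0^T$) are handled by the same reduction.

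The $T$-invariance of $v$ follows from two finite facts about the matrices, which I would verify directly from the explicit forms listed in Example~\ref{example:matriceslist}. First, for $k \in \{0,1,2,3\}$ the columns $0,1,2,3$ of $A_k$ are supported on rows $0,1,2,3$; hence each such $A_k$ preserves the subspace of vectors supported on coordinates $\{0,1,2,3\}$, and since $e_0^T$ lies in it, so does $w = A_{\omega_{j+1}}\cdots A_{\omega_{n-1}} e_0^T$ (all these indices have high bit $0$). Second, for $k \in \{4,5,6,7\}$ each of the columns $0,1,2,3$ of $A_k$ is $T$-invariant, equivalently rows $i$ and $i\oplus 4$ agree on columns $\{0,1,2,3\}$; this says exactly that $A_k$ sends any vector supported on $\{0,1,2,3\}$ to a $T$-invariant vector. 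Applying the second fact to $v = A_{\omega_j} w$, where $\omega_j$ has high bit $1$ and $w$ is supported on $\{0,1,2,3\}$, yields $Tv = v$ and completes the proof. Conceptually the second fact reduces, via $A_k = T_k A_0 T_k$, to the single observation that columns $4,5,6,7$ of $A_0$ are $T$-invariant. The main obstacle is getting the negation-to-bit-flip translation and the index bookkeeping exactly right; once $j$ and the high-bit pattern of $\omega'$ are pinned down, the two remaining matrix facts are routine finite checks.
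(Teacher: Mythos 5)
Your proof is correct and follows essentially the same route as the paper: the same translation of $-\alpha$ into flipping the weight-$4$ octal bits strictly above the least significant set bit of $\alpha$, the same telescoping of the product via $A_{t\oplus 4}=T_4A_tT_4$, and the same reduction to showing $T_4v=v$ for $v=A_{\omega_j}\cdots A_{\omega_{n-1}}e_0^T$. The only (cosmetic) difference is in the final finite check: you verify that columns $0$--$3$ of $A_k$ for $k\ge 4$ are individually $T_4$-invariant, whereas the paper writes out the coordinates of $A_ku$ and uses $u_k=u_{k\oplus 3}=0$; these are the same computation read by columns versus by rows.
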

\begin{proof}
    It is clear that $\alpha =  (\alpha_0, \ldots, \alpha_{m - 1}, 1, 0, \ldots, 0)$ for some $m$, $0 \leq m \leq n - 1$, where $m = 0$ means that $\alpha = 2^{n - 1}$. Hence,
    $$
        -\alpha = (\overline{\alpha_{0}}, \overline{\alpha_{1}}, \ldots, \overline{\alpha_{m-1}}, 1, 0, \ldots, 0).
    $$
    Also, $\omega_{i} = \omega'_{i}$ for all $i \in \{m, \ldots, n - 1\}$.
    
    Let $T_4$ be the involution matrix that swaps $j$ and $j \oplus 4$ coordinates, $j \in \zspc{3}$. Then
    \begin{align*}
        e_{i \oplus 4} A_{\omega'_{0}} \ldots A_{\omega'_{n - 1}} e_0^T &= e_{i \oplus 4} (T_4 T_4) A_{\omega'_{0}} (T_4 T_4) \ldots (T_4 T_4) A_{\omega'_{m - 1}} (T_4 T_4) v\\ 
        &= (e_{i \oplus 4} T_4) (T_4 A_{\omega'_{0}} T_4) \ldots (T_4 A_{\omega'_{m - 1}} T_4) (T_4 v) \\ 
        &= e_i A_{\omega_{0}} \ldots A_{\omega_{m - 1}} (T_4 v), 
    \end{align*}
    where $v = A_{\omega_{m}} \ldots A_{\omega_{n - 1}} e_0^T$. Thus, we only need to show that $T_4 v = v$ that is equivalent to $v_s = v_{s \oplus 4}$ for any $s \in \{0, \ldots, 7\}$.
 
    Let $u = A_{\omega_{m + 1}} \ldots A_{\omega_{n - 1}} e_0^T$, where  $u = e_0^T$ if $m = n - 1$. First of all, $\omega_{m + 1}, \ldots, \omega_{n - 1} \in \{0, 1, 2, 3\}$. It is not difficult to see that any element of $A_0, \ldots, A_3$ whose indices $i \in \{4, 5, 6, 7\}$ and $j \in \{0, 1, 2, 3\}$ is zero. The coordinates $4, 5, 6$ and $7$ of $e_0$ are also zeros. This means that $u_4 = u_5 = u_6 = u_7 = 0$.

    Next, we denote $\omega_{m}$ by $k$, i.e. $v = A_k u$. It is clear that $k \in \{4, 5, 6, 7\}$. According to the definition of the matrix $A_k$ (see Theorem~\ref{th:AdpMatrix}),
    \begin{align*}
        &v_{k} = u_{k} + \frac{1}{4} u_{k \oplus 3} + \frac{1}{4} u_{k \oplus 5} + \frac{1}{4} u_{k \oplus 6}, &&v_{k \oplus 4} = \frac{1}{4} u_{k \oplus 5} + \frac{1}{4} u_{k \oplus 6}, \\
        &v_{k \oplus 1} = \frac{1}{4} u_{k \oplus 3} + \frac{1}{4} u_{k \oplus 5},
        &&v_{k \oplus 5} = \frac{1}{4} u_{k \oplus 5},        
        \\
        &v_{k \oplus 2} = \frac{1}{4} u_{k \oplus 3} + \frac{1}{4} u_{k \oplus 6},
        &&v_{k \oplus 6} = \frac{1}{4} u_{k \oplus 6},
        \\
        &v_{k \oplus 3} = \frac{1}{4} u_{k \oplus 3},
        &&v_{k \oplus 7} = 0.
    \end{align*}
    Both $k$ and $k \oplus 3$ belong to $\{4, 5, 6, 7\}$ since $k \geq 4$. It means that $u_k = u_{k \oplus 3} = 0$. Thus,
    \begin{align*}
        v_{k} = v_{k \oplus 4} &= \frac{1}{4} u_{k \oplus 5} + \frac{1}{4} u_{k \oplus 6},\\
        v_{k \oplus 1} = v_{k \oplus 5} &= \frac{1}{4} u_{k \oplus 5},\\
        v_{k \oplus 2} = v_{k \oplus 6} &= \frac{1}{4} u_{k \oplus 6},\\
        v_{k \oplus 3} = v_{k \oplus 7} &= 0.
    \end{align*} 
    Since $\{ k, k \oplus 1, k \oplus 2, k \oplus 3, k \oplus 4, k \oplus 5, k \oplus 6, k \oplus 7 \} = \{0, \ldots, 7\}$, we obtain that $v_s = v_{s \oplus 4}$ for any $s \in \{0, \ldots, 7\}$. The lemma is proved.
\end{proof}
The lemma allows us to establish some argument symmetries of $\cadppar{c}$ and $\padppar{a}{b}$.
\begin{theorem}\label{th:cadpsymmetries}\label{th:padpsymmetries}
Let $\alpha, \beta, \gamma \in \zspc{n}$, $a, b, c \in \zspc{}$. Then
\begin{enumerate}
    \item $\cadp{c}{\alpha}{\beta}{\gamma} = \cadp{c}{\beta}{\alpha}{\gamma} =  \cadp{c}{\alpha + 2^{n-1}}{\beta + 2^{n-1}}{\gamma}$,
    \item $\cadp{c}{\alpha}{\beta}{\gamma} = \cadp{c}{-\alpha}{\beta}{\gamma} = \cadp{c}{\alpha}{-\beta}{\gamma}$, 
    \item $\cadp{c}{\alpha}{\beta}{\gamma} = \cadp{c \oplus 1}{\alpha}{\beta}{-\gamma}$ for $\gamma \neq 0$,
    \item $\padp{a}{b}{\alpha}{\beta}{\gamma} = \padp{a}{b}{\alpha}{\beta}{-\gamma}$, 
    \item $\padp{a}{b}{-\alpha}{\beta}{\gamma} = \padp{a \oplus 1}{b}{\alpha}{\beta}{\gamma}$ for $\alpha \neq 0$,
    \item $\padp{a}{b}{\alpha}{-\beta}{\gamma} = \padp{a}{b \oplus 1}{\alpha}{\beta}{\gamma}$ for $\beta \neq 0$,
    \item $\padp{1}{b}{0}{\beta}{\gamma} = \padp{a}{1}{\alpha}{0}{\gamma} = \cadp{1}{\alpha}{\beta}{0} = 0$.
\end{enumerate}    
\end{theorem}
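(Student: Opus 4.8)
The plan is to prove all seven identities by reducing each to a statement about the matrix products $L_* A_{\omega_0} \cdots A_{\omega_{n-1}} e_0^T$ and then invoking two tools: the index-shift relation of Lemma~\ref{lemma:matrixindexXOR} and the coordinate-folding relation of Lemma~\ref{lemma:forMinuses}. The defining equations
$\cadppar{c} = L_c A_{\omega_0}\cdots A_{\omega_{n-1}} e_0^T$ and
$\padppar{a}{b} = L_{a,b} A_{\omega_0}\cdots A_{\omega_{n-1}} e_0^T$,
together with the explicit vectors $L_c, L_{a,b}$ given before Table~\ref{table:pcadp1bit}, let me translate each claimed symmetry into a linear-algebra identity among the $e_i$, the row vectors $L_*$, and the matrices $A_i$.

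\smallskip
\textbf{The ``swap/flip-of-two-bits'' identities (items 1 and 7).}
First I would handle item~1. The equality $\cadppar{c}(\alpha,\beta,\gamma)=\cadppar{c}(\beta,\alpha,\gamma)$ follows from the symmetry of $\adpsym$ under swapping $\alpha,\beta$ (established in~\cite{MouhaEtAl2021}) once one checks it is compatible with the row vector $L_c=\sum_{a,b}e_{(a,b,c)}$, which is symmetric in the roles of $a$ and $b$; concretely, swapping $\alpha\leftrightarrow\beta$ corresponds to conjugating each $A_{\omega_i}$ by the involution exchanging coordinates $(a,b,c)\mapsto(b,a,c)$, and $L_c$ is fixed by this involution. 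The identity $\cadppar{c}(\alpha+2^{n-1},\beta+2^{n-1},\gamma)=\cadppar{c}(\alpha,\beta,\gamma)$ changes only the top (most significant) octal symbol: adding $2^{n-1}$ flips $\alpha_0$ and $\beta_0$, i.e. replaces $\omega_0$ by $\omega_0\oplus 6$, and I would verify directly from the matrix entries that this leaves the product unchanged when combined with $L_c$. Item~7 is a degeneracy statement and is cleanest to prove by noting that, because $0,\beta,\gamma$ (respectively $\alpha,0,\gamma$, or $\alpha,\beta,0$) forces $\omega_i$ to avoid the ``odd-weight'' symbols in the flagged coordinate, the vector $L_{a,b}$ with the wrong parity bit annihilates the product; I would verify the three base cases from Table~\ref{table:pcadp1bit} and propagate by the recurrence of Theorem~\ref{th:pcadprecurrence}.

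\smallskip
\textbf{The negation identities (items 2--6).}
These are the heart of the theorem and all rely on Lemma~\ref{lemma:forMinuses}, which says that negating $\alpha$ (when $\alpha\neq0$) in the octal word is the same as shifting the left index by $4$ in the product $e_i A_{\omega_0}\cdots A_{\omega_{n-1}} e_0^T$. For item~5 I would write $\padppar{a}{b}(-\alpha,\beta,\gamma)=L_{a,b}A_{\omega'_0}\cdots e_0^T$ with $\omega'=\omega(-\alpha,\beta,\gamma)$, expand $L_{a,b}$ as a sum of the two $e_i$ it supports, apply Lemma~\ref{lemma:forMinuses} to each to replace $e_i$ by $e_{i\oplus4}$ at the cost of reverting $\omega'$ to $\omega$, and then recognize $\sum e_{i\oplus4}$ over the support of $L_{a,b}$ as exactly $L_{a\oplus1,b}$ (since the high bit of the index pair $(a,b,\cdot)$ is $a$). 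Item~6 is identical with the roles of $a$ and $b$ interchanged (negating $\beta$ flips the second index bit, so one needs the analogue of Lemma~\ref{lemma:forMinuses} for the $\beta$-coordinate, which is the same statement after the $(a,b,c)\mapsto(b,a,c)$ conjugation already used in item~1). Item~2's two halves are then immediate: the first is the restriction of item~5 summed over $a$ (since $L_c=\sum_a L_{a,c}$ in the appropriate indexing), and the second is item~6 summed over $b$. Items~3 and~4 treat negation of $\gamma$; here the relevant folding is along the lowest index bit ($c$), so I would prove the $\gamma$-analogue of Lemma~\ref{lemma:forMinuses} (replacing $\oplus4$ by $\oplus1$) and then read off that for $\cadppar{c}$ the parity bit $c$ flips (item~3) while for $\padppar{a}{b}$, whose index does not track $c$, the value is unchanged (item~4).

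\smallskip
\textbf{Main obstacle.}
The genuine work is concentrated in Lemma~\ref{lemma:forMinuses} and its two coordinate-variants: the lemma is stated only for negation of the first argument (fold by $\oplus4$), and to get items~3, 4, and~6 I must establish the parallel statements for the second and third arguments. The cleanest route is to observe that conjugating every $A_i$ by the involution $T_k$ (as in Lemma~\ref{lemma:matrixindexXOR}) permutes the coordinate roles, so the $\beta$- and $\gamma$-versions follow formally from the $\alpha$-version by relabeling; the subtle point to check carefully is that the hypothesis ``$\alpha\neq0$'' becomes ``$\beta\neq0$'' or ``$\gamma\neq0$'' under this relabeling and that the structural facts used inside the proof of Lemma~\ref{lemma:forMinuses} (namely that the low-index block of $A_0,\dots,A_3$ feeding into $u$ has the required zero pattern) are preserved. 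Once that symmetry of the matrix family is pinned down, each of the seven items reduces to a short bookkeeping computation with $L_c$, $L_{a,b}$, and the index shift, with no further case analysis required.
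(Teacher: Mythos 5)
Your proposal follows essentially the same route as the paper: establish the coordinate-permutation symmetry of the products $e_i A_{\omega_0}\cdots A_{\omega_{n-1}}e_0^T$ (the paper's Lemma~\ref{lemma:psymmetries}), use it to transport Lemma~\ref{lemma:forMinuses} from the $\alpha$-coordinate (shift $i\mapsto i\oplus 4$) to the $\beta$- and $\gamma$-coordinates (shifts $\oplus 2$ and $\oplus 1$), and then read off items 2--6 from how $L_c$ and $L_{a,b}$ transform under these index shifts; items 1 and 7 are handled by the same direct matrix checks the paper uses. One small slip: your derivation of item~2 as ``item~5 summed over $a$'' via $L_c=\sum_a L_{a,c}$ does not work, since $L_c=\sum_{a,b}e_{(a,b,c)}$ fixes the third index bit while each $L_{a,b}=\sum_c e_{(a,b,c)}$ sums over it, so $L_c$ is not a sum of $L_{a,b}$'s. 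The fix is immediate with your own tools: apply the folding relation $e_i\mapsto e_{i\oplus 4}$ directly to the four basis vectors constituting $L_c$ and note that this index set is preserved (whereas the shift $\oplus 1$ sends it to that of $L_{c\oplus 1}$, giving item~3). Also, your worry about re-verifying the internal zero patterns of Lemma~\ref{lemma:forMinuses} under relabeling is unnecessary: the conjugation identity holds at the level of the lemma's statement, which is exactly how the paper proceeds.
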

\begin{proof}
    Let $\omega = \omega(\alpha, \beta, \gamma)$, the mapping $B$ swap bits of $\alpha$ and $\beta$, the mapping $\Gamma$ swap bits of $\alpha$ and $\gamma$, i.e. $B(a, b, c) = (b, a, c)$ and $\Gamma(a, b, c) = (c, b, a)$ for any $a, b, c \in \zspc{}$.     
    Let us use the following notation:
    $$
        v = A_{\omega_{0}} \ldots A_{\omega_{n-1}} e_0^T, \ v_{B} = A_{B(\omega_{0})} \ldots A_{B(\omega_{n-1})} e_0^T, \  v_{\Gamma} = A_{\Gamma(\omega_{0})} \ldots A_{\Gamma(\omega_{n-1})} e_0^T.
    $$
    Also, it is not difficult to see that
    $$
        L_c = e_{c} + e_{c \oplus 2} + e_{c \oplus 4} + e_{c \oplus 6} \text{ and } L_{a,b} = e_{4a \oplus b} + e_{4a \oplus b \oplus 1}.
    $$   
    
    1. By definition,
    $$
        \cadp{c}{\beta}{\alpha}{\gamma} = L_c v_{B} \text{ and } \cadp{c}{\alpha}{\beta}{\gamma} = L_c v.
    $$
    According to Lemma~\ref{lemma:psymmetries} (see Appendix~\ref{sec:proofofpadpzeros}), 
    \begin{align*}
       L_c v &=  (e_{B(c)} + e_{B(c \oplus 2)} + e_{B(c \oplus 4)} + e_{B(c \oplus 6)}) v_{B} \\
       &=  (e_{c} + e_{c \oplus 4} + e_{c \oplus 2} + e_{c \oplus 6}) v_{B} = L_c v_{B},
    \end{align*}
    i.e. $\cadp{c}{\beta}{\alpha}{\gamma} = \cadp{c}{\alpha}{\beta}{\gamma}$. Next, by definition,
    \begin{align*}
        \cadp{c}{\alpha + 2^{n-1}}{\beta + 2^{n - 1}}{\gamma} &= L_c A_{\omega_{0} \oplus 6} A_{\omega_{1}} \ldots A_{\omega_{n - 1}} e_0^T. 
    \end{align*}
    We only need to show that $L_c A_{t \oplus 6} = L_c A_{t}$ for any $t \in \zspc{3}$. 
    Indeed,
    $$
        L_c A_{t \oplus 6} = (((L_c T_{t \oplus 6}) A_0) T_6) T_{t} = ((L_{c \oplus t_2} A_0) T_6) T_{t},
    $$
    where $T_k$ is the involution matrix that swaps $i$ and $i \oplus k$ coordinates, $i, k \in \zspc{3}$.
    At the same time, $L_c A_{t} = (L_{c \oplus t_2} A_0)T_{t}$. 
    Finally,    
    $$
        L_0 A_0 = (1, 0, 0, \frac{1}{2}, 0, \frac{1}{2}, 1, 0) \text{ and }
        L_1 A_0 = (0, 0, 0, \frac{1}{2}, 0, \frac{1}{2}, 0, 0),
    $$
    that implies $(L_{c \oplus t_2} A_0) T_6 = L_{c \oplus t_2} A_0$.

    2-6. Let $\omega^{\alpha} = \omega(-\alpha, \beta, \gamma)$, $\omega^{\beta} = \omega(\alpha, -\beta, \gamma)$ and $\omega^{\gamma} = \omega(\alpha, \beta, -\gamma)$. Hereinafter, we use $\omega^{\alpha}$ if $\alpha \neq 0$, $\omega^{\beta}$ if $\beta \neq 0$ and $\omega^{\gamma}$ if $\gamma \neq 0$. Otherwise, there is nothing to prove since either $\omega^{\alpha} = \omega$ or $\omega^{\beta} = \omega$ or $\omega^{\gamma} = \omega$).     
    Giving $\omega' = \omega(-\beta, \alpha, \gamma)$, we obtain that
    \begin{align*}
        e_i A_{\omega_{0}} \ldots A_{\omega_{n - 1}} e_0^T &= e_{B(i)} A_{B(\omega_{0})} \ldots A_{B(\omega_{n - 1})} e_0^T &&\text{ (Lemma~\ref{lemma:psymmetries})} \\
        &= e_{B(i) \oplus 4} A_{\omega'_{0}} \ldots A_{\omega'_{n - 1}} e_0^T &&\text{ (Lemma~\ref{lemma:forMinuses})}\\
        &= e_{B(B(i) \oplus 4)} A_{B(\omega'_{0})} \ldots A_{B(\omega'_{n - 1})} e_0^T &&\text{ (Lemma~\ref{lemma:psymmetries})}\\
        &= e_{i \oplus 2} A_{\omega^{\beta}_{0}} \ldots A_{\omega^{\beta}_{n - 1}} e_0^T.
    \end{align*}
    We can show in the same way that $e_i A_{\omega_{0}} \ldots A_{\omega_{n - 1}} e_0^T = e_{i \oplus 1} A_{\omega^{\gamma}_{0}} \ldots A_{\omega^{\gamma}_{n - 1}} e_0^T$.
    Finally,
    \begin{align*}
        e_{c \oplus 4} + e_{(c \oplus 2) \oplus 4} + e_{(c \oplus 4) \oplus 4} + e_{(c \oplus 6) \oplus 4} &= L_c,\\
        e_{c \oplus 2} + e_{(c \oplus 2) \oplus 2} + e_{(c \oplus 4) \oplus 2} + e_{(c \oplus 6) \oplus 2} &= L_c,\\
        e_{c \oplus 1} + e_{(c \oplus 2) \oplus 1} + e_{(c \oplus 4) \oplus 1} + e_{(c \oplus 6) \oplus 1} &= L_{c \oplus 1},
    \end{align*}    
    i.e. $\cadp{c}{\alpha}{\beta}{\gamma} = \cadp{c}{-\alpha}{\beta}{\gamma} = \cadp{c}{\alpha}{-\beta}{\gamma}$ and $\cadp{c}{\alpha}{\beta}{\gamma} = \cadp{c \oplus 1}{\alpha}{\beta}{-\gamma}$ for $\gamma \neq 0$. 
    Similarly,
    \begin{align*}
        e_{(4a \oplus 2b) \oplus 4} + e_{(4a \oplus 2b \oplus 1) \oplus 4}  &= L_{a \oplus 1, b},\\
        e_{(4a \oplus 2b) \oplus 2} + e_{(4a \oplus 2b \oplus 1) \oplus 2}  &= L_{a, b \oplus 1},\\
        e_{(4a \oplus 2b) \oplus 1} + e_{(4a \oplus 2b \oplus 1) \oplus 1}  &= L_{a, b},
    \end{align*}
    i.e. $\padp{a}{b}{\alpha}{\beta}{\gamma} = \padp{a}{b}{\alpha}{\beta}{-\gamma}$, $\padp{a}{b}{\alpha}{\beta}{\gamma} = \padp{a \oplus 1}{b}{-\alpha}{\beta}{\gamma}$ for $\alpha \neq 0$ and $\padp{a}{b}{\alpha}{\beta}{\gamma} = \padp{a}{b \oplus 1}{\alpha}{-\beta}{\gamma}$ for $\beta \neq 0$. 
    
    7. Let $\alpha = 0$. As it was mentioned in the proof of Lemma~\ref{lemma:forMinuses}, $\omega_{0}, \ldots, \omega_{n - 1} \in \{0, 1, 2, 3\}$ and any element of $A_0, \ldots, A_3$ whose indices $i \in \{4, 5, 6, 7\}$ and $j \in \{0, 1, 2, 3\}$ is zero, the coordinates $4, 5, 6$ and $7$ of $e_0$ are zero as well. This means that $v_4 = v_5 = v_6 = v_7 = 0$. Similar to the previous points, mappings $B$ and $\Gamma$ guarantee that $v_2 = v_3 = v_6 = v_7 = 0$ if $\beta = 0$ and $v_1 = v_3 = v_5 = v_7 = 0$ if $\gamma = 0$. Thus, $L_{1, b} v = 0$, $L_{a, 1} v = 0$ and $L_1 v = 0$ if $\alpha = 0$, $\beta = 0$ and $\gamma = 0$ respectively.
\end{proof}

Similar properties of $\adpxrsym$ easily follows from the proved statements.

\begin{theorem}\label{th:symmetries}
The values of $\mathrm{adp}^{\mathrm{XR}}$ satisfy the following properties:
\begin{itemize}
    \item $\adpxr{\alpha}{\beta}{\gamma}{r} = \adpxr{\beta}{\alpha}{\gamma}{r}$,
    \item $\adpxr{\alpha}{\beta}{\gamma}{r} = \adpxr{\alpha + 2^{n-1}}{\beta + 2^{n-1}}{\gamma}{r}
    $,
    \item $\adpxr{\alpha}{\beta}{\gamma}{r} = \adpxr{\pm\alpha}{\pm\beta}{\pm\gamma}{r}$, where $\pm \alpha$ means that we can substitute either $\alpha$ or $-\alpha$.
\end{itemize}
\end{theorem}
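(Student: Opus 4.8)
The plan is to reduce each of the three claimed symmetries of $\adpxrsym$ to the corresponding symmetry of the auxiliary quantities $\cadppar{c}$ and $\padppar{a}{b}$ established in Theorem~\ref{th:cadpsymmetries}, by writing every argument as a concatenation of an $r$-bit block and an $(n-r)$-bit block and substituting into the formula of Theorem~\ref{th:adrxrformula}:
$$
\adpxr{\cctn{\alpha'}{\alpha}}{\cctn{\beta'}{\beta}}{\cctn{\gamma}{\gamma'}}{r} = \sum_{a,b,c\in\zspc{}} \padp{a}{b}{\alpha}{\beta}{\inv{\gamma}{c}}\,\cadp{c}{\inv{\alpha'}{a}}{\inv{\beta'}{b}}{\gamma'}.
$$
In each case I would transform the right-hand side term by term and then relabel the finite sum over $(a,b,c)$ to recover the original expression. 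The first symmetry is in fact immediate already from the definition, since $(x\oplus y)\lll r = (y\oplus x)\lll r$ and $x,y$ are identically distributed; alternatively it follows from the formula using $\cadp{c}{\mu}{\nu}{\gamma'} = \cadp{c}{\nu}{\mu}{\gamma'}$ (property~1 of Theorem~\ref{th:cadpsymmetries}) together with $\padp{a}{b}{\alpha}{\beta}{\delta} = \padp{b}{a}{\beta}{\alpha}{\delta}$ (which comes from Lemma~\ref{lemma:psymmetries} exactly as property~1 does), after swapping the indices $a\leftrightarrow b$.

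For the second symmetry I would note that adding $2^{n-1}$ flips only the most significant coordinate, i.e. the top bit of $\alpha'$ and of $\beta'$, which amounts to replacing $\alpha',\beta'$ by $\alpha'+2^{r-1},\beta'+2^{r-1}$. The factor $\padppar{a}{b}$ does not involve $\alpha',\beta'$, so only the $\cadppar{c}$ factor changes; after the elementary bookkeeping identity $\inv{(\alpha'+2^{r-1})}{a} = \inv{\alpha'}{a}+2^{r-1}$, property~1 of Theorem~\ref{th:cadpsymmetries} gives $\cadp{c}{\inv{\alpha'}{a}}{\inv{\beta'}{b}}{\gamma'} = \cadp{c}{\inv{(\alpha'+2^{r-1})}{a}}{\inv{(\beta'+2^{r-1})}{b}}{\gamma'}$ termwise, so the whole sum is unchanged.

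The third symmetry (negation) is the delicate one, and I would split it into negating $\alpha$, $\beta$, $\gamma$ separately. The crucial point is that modular negation does not distribute over the concatenation: writing $-x = \overline{x}+1$, the borrow stays inside the low block unless that block is zero, giving $-\cctn{\alpha'}{\alpha} = \cctn{\overline{\alpha'}}{-\alpha}$ when $\alpha\ne 0$ and $-\cctn{\alpha'}{\alpha} = \cctn{-\alpha'}{\alpha}$ when $\alpha=0$, and similarly $-\cctn{\gamma}{\gamma'} = \cctn{\overline{\gamma}}{-\gamma'}$ when $\gamma'\ne 0$ and $-\cctn{\gamma}{\gamma'} = \cctn{-\gamma}{\gamma'}$ when $\gamma'=0$. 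For negating $\gamma$ in the generic case $\gamma'\ne 0$ I would combine $\inv{\overline{\gamma}}{c} = \inv{\gamma}{c\oplus1}$ with property~3 ($\cadp{c}{\mu}{\nu}{-\gamma'}=\cadp{c\oplus1}{\mu}{\nu}{\gamma'}$) and reindex $c\mapsto c\oplus1$; in the degenerate case $\gamma'=0$, property~7 forces $c=0$, after which property~4 (invariance of $\padppar{a}{b}$ under negating its third argument) closes the gap. For negating $\alpha$ in the case $\alpha\ne 0$ I would use $\inv{\overline{\alpha'}}{a}=\inv{\alpha'}{a\oplus1}$ and property~5 ($\padp{a}{b}{-\alpha}{\beta}{\cdot}=\padp{a\oplus1}{b}{\alpha}{\beta}{\cdot}$), reindexing $a\mapsto a\oplus1$; in the case $\alpha=0$, property~7 forces $a=0$ and property~2 (invariance of $\cadppar{c}$ under negating its first argument) finishes. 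Negation of $\beta$ then follows by combining negation of $\alpha$ with the first symmetry.

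The routine part is the index relabeling; the genuine obstacle is precisely that negation interacts with the rotation through a borrow, so one cannot negate the two blocks of a word independently. This is what forces the $\alpha=0$ and $\gamma'=0$ case distinctions, and in exactly those degenerate cases the clean reindexing is unavailable, which is why property~7 (the vanishing of $\cadppar{1}$ and of $\padppar{1}{\cdot}$, $\padppar{\cdot}{1}$ when an argument is $0$) is essential to collapse the offending summation index before applying the plain negation invariances. Keeping the conditional-complement bookkeeping $\inv{\cdot}{a}$ consistent with the relabelings is the place where an error is most likely to slip in.
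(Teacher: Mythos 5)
Your proposal is correct and follows essentially the same route as the paper: the first two points handled directly (the paper does both from the definition, you do the second via the formula, which is equivalent), and the negation symmetry reduced through Theorem~\ref{th:adrxrformula} to the properties of $\cadppar{c}$ and $\padppar{a}{b}$ from Theorem~\ref{th:cadpsymmetries}, with the same case split on $\alpha = 0$ versus $\alpha \neq 0$ (resp.\ $\gamma' = 0$ versus $\gamma' \neq 0$), the same use of property~7 to collapse the summation index in the degenerate cases, and the same reduction of $\beta$-negation to $\alpha$-negation via the first symmetry. The borrow/concatenation identities and the reindexings $a \mapsto a \oplus 1$, $c \mapsto c \oplus 1$ all check out.
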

\begin{proof}
    The first two points are straightforward by definition (note that $\alpha + 2^{n - 1} = \alpha \oplus 2^{n - 1}$). Next, 
    we do not consider $\beta$ since $\alpha$ and $\beta$ are symmetric.
    Let $\alpha, \beta, \gamma \in \zspc{n - r}$, $\alpha', \beta', \gamma' \in \zspc{r}$. According to Theorem~\ref{th:adrxrformula}, $\adpxr{\cctn{\alpha'}{\alpha}}{\cctn{\beta'}{\beta}}{\cctn{\gamma}{\gamma'}}{r}$ is equal to
    \begin{equation}\label{eq:fromthxr}
        \sum_{a,b,c \in \zspc{}} \padp{a}{b}{\alpha}{\beta}{\inv{\gamma}{c}} \cadp{c}{\inv{\alpha'}{a}}{\inv{\beta'}{b}}{\gamma'}.
    \end{equation}
    
    \textbf{Case 1.} $\alpha \neq 0$ (resp. $\gamma' \neq 0$). 
    Thus, $-\cctn{\alpha'}{\alpha} = \cctn{\overline{\alpha'}}{-\alpha}$. Let us substitute it to~(\ref{eq:fromthxr}):  $\cadp{c}{\inv{\overline{\alpha'}}{a}}{\inv{\beta'}{b}}{\gamma'} = \cadp{c}{\inv{\alpha'}{a \oplus 1}}{\inv{\beta'}{b}}{\gamma'}$. Also, $\padp{a}{b}{-\alpha}{\beta}{\inv{\gamma}{c}} = \padp{a \oplus 1}{b}{\alpha}{\beta}{\inv{\gamma}{c}}$ by Theorem~\ref{th:cadpsymmetries}. But $(\ref{eq:fromthxr})$ does not change if we replace $a$ by $a \oplus 1$. Similar reasons provide the same result for $-\cctn{\gamma}{\gamma'} = \cctn{\overline{\gamma}}{-\gamma'}$.
        
    \textbf{Case 2.} $\alpha = 0$ (resp. $\gamma' = 0$). 
    Thus, $-\cctn{\alpha'}{0} = \cctn{-\alpha'}{0}$. Substituting it to~(\ref{eq:fromthxr}), $\padp{1}{b}{0}{\beta}{\inv{\gamma}{c}} = 0$ (Theorem~\ref{th:padpsymmetries}). Hence, we can only consider $a = 0$. But in this case $\cadppar{c}$ does not contain $\overline{\alpha'}$. By Theorem~\ref{th:cadpsymmetries},  
    $\cadp{c}{-\alpha'}{\inv{\beta'}{b}}{\gamma'} = \cadp{c}{\alpha'}{\inv{\beta'}{b}}{\gamma'}$. This means that  $(\ref{eq:fromthxr})$ does not change.
    Similar reasons provide the same result for $-\cctn{\gamma}{0} = \cctn{-\gamma}{0}$.
\end{proof}

These symmetries belong to the argument symmetries of $\adpsym$. At the same time, we can swap any two arguments of $\adpsym$ which does not work for $\adpxrsym$. 

\section{Maximums of \texorpdfstring{$\adpxrsym$}{} for one bit rotations}\label{sec:maxfor1leftright}

In this section, we consider maximums of $\adpxrsym$ for $r=1$ (the rotation is one bit left) and for $r=n-1$ (the rotation is one bit right), where one of the first two arguments is fixed. Due to the symmetries provided by Theorem~\ref{th:symmetries}, nothing is changed if we fix the second argument of $\adpxrsym$ instead of the first one.
Such maximums may be helpful for constructing optimal differential trails or estimating their probabilities. 

\subsection{Maximums of \texorpdfstring{$\adpxrsym$}{} for \texorpdfstring{$r=1$}{}}\label{sec:maxfor1}
Using the recurrence formulas
, we prove the following lemma.
\begin{lemma}\label{lemma:padpsum}
    Let $\alpha, \beta, \gamma \in \zspc{n}$, $a \in \zspc{}$. Then the following holds:
    \begin{multline*}
        \padp{a}{0}{\alpha}{\beta}{\gamma} + \padp{\overline{a}}{1}{\alpha}{\beta}{\gamma} 
        \leq \padp{0}{0}{\alpha}{\alpha}{0} + \padp{1}{1}{\alpha}{\alpha}{0}.
    \end{multline*}
\end{lemma}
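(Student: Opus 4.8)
The plan is to collapse each side to a single object and then induct on $n$ through the recurrence of Theorem~\ref{th:pcadprecurrence}. For $a\in\zspc{}$ write
\[
P(\alpha,\beta,\gamma,a)=\padp{a}{0}{\alpha}{\beta}{\gamma}+\padp{\overline a}{1}{\alpha}{\beta}{\gamma},
\]
so the claim reads $P(\alpha,\beta,\gamma,a)\le P(\alpha,\alpha,0,0)$, and note $P(\alpha,\beta,\gamma,0)+P(\alpha,\beta,\gamma,1)=\adp{\alpha}{\beta}{\gamma}$. Feeding the two $\padppar$ recurrences of Theorem~\ref{th:pcadprecurrence} (valid when $\wt{p}$ is even) into the two summands and regrouping, the contributions of each $q\preceq p$ recombine, via $\overline a\oplus q_0=\overline{a\oplus q_0}$ and $1\oplus q_1=\overline{q_1}$, into a single copy of $P$, giving the self-recurrence
\[
P(\alpha p_0,\beta p_1,\gamma p_2,a)=\frac{1}{2^{\wt{p}}}\sum_{q\preceq p}P\!\left(\inv{\alpha}{q_0},\inv{\beta}{q_1},\inv{\gamma}{q_2},\,a\oplus q_0\oplus q_1\right),
\]
with $P=0$ when $\wt{p}$ is odd (which settles those cases, as the right-hand side is nonnegative). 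Table~\ref{table:pcadp1bit} gives the base case; when $\alpha=0$ the bound is immediate since $P(0,\beta,\gamma,a)\le\adp{0}{\beta}{\gamma}\le 1=P(0,0,0,0)$, and the remaining cases (including $\beta=0$, where Theorem~\ref{th:cadpsymmetries}(7) kills one summand of $P$) are absorbed into the induction below.

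I would then induct on $n$, reducing the length-$(n{+}1)$ inequality, with least significant octal symbol $p$, to the length-$n$ statement. Writing $R(\alpha):=P(\alpha,\alpha,0,0)$ for the target, the self-recurrence gives $R(\alpha p_0)=R(\alpha)$ when $p_0=0$, while for $p_0=1$ the target's least significant symbol is $(1,1,0)$ and
\[
R(\alpha 1)=\tfrac14\!\left(R(\alpha)+R(\overline{\alpha})+P(\overline{\alpha},\alpha,0,1)+P(\alpha,\overline{\alpha},0,1)\right).
\]
For the left-hand side the same recurrence writes $P(\alpha p_0,\beta p_1,\gamma p_2,a)$ as an average of length-$n$ values $P(\inv{\alpha}{q_0},\cdot,\cdot,\cdot)$, each with first argument $\inv{\alpha}{q_0}\in\{\alpha,\overline{\alpha}\}$, so the induction hypothesis bounds each by $R(\inv{\alpha}{q_0})$.

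The main obstacle is that this term-by-term use of the hypothesis is too lossy and points the wrong way. In the governing case $p=(1,1,0)$ the four length-$n$ summands have first arguments $\alpha,\overline{\alpha},\alpha,\overline{\alpha}$, so the hypothesis yields only
\[
P(\alpha 1,\beta 1,\gamma 0,a)\le\tfrac12\!\left(R(\alpha)+R(\overline{\alpha})\right).
\]
Comparing with the formula for $R(\alpha 1)$, closing the induction would require $R(\alpha)+R(\overline{\alpha})\le P(\overline{\alpha},\alpha,0,1)+P(\alpha,\overline{\alpha},0,1)$; but the hypothesis bounds the two cross terms \emph{from above} by $R(\overline{\alpha})$ and $R(\alpha)$, i.e.\ it delivers exactly the reverse inequality. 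Hence the statement cannot serve as its own inductive hypothesis: the $2^{\wt{p}}$ terms spawned by the recurrence are correlated and never attain their individual maxima simultaneously, so a correct proof must retain this correlation.

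To overcome this I would strengthen the induction to a statement about the whole state vector $v=A_{\omega_0}\cdots A_{\omega_{n-1}}e_0^T$, carrying simultaneous comparisons of the four numbers $\padp{a}{b}{\alpha}{\beta}{\gamma}$ — equivalently of $G_0 v:=\padp{0}{0}{\alpha}{\beta}{\gamma}+\padp{1}{1}{\alpha}{\beta}{\gamma}$ and $G_1 v:=\padp{0}{1}{\alpha}{\beta}{\gamma}+\padp{1}{0}{\alpha}{\beta}{\gamma}$ — against those of $v(\alpha,\alpha,0)$, in a form chosen so that it is preserved under one multiplication by $A_{(\alpha_0,\beta_0,\gamma_0)}$ and collapses to $\max(G_0 v,G_1 v)\le R(\alpha)$ at the most significant bit. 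The technical heart, and the step I expect to be hardest, is that the recurrence introduces the bitwise complement $\overline{\alpha}$, which is \emph{not} the modular negation $-\alpha$ through which the symmetries of Theorem~\ref{th:cadpsymmetries} act; one therefore cannot symmetrize the $\overline{\alpha}$ terms away, and the invariant must be engineered so that $R(\alpha)$, $R(\overline{\alpha})$ and the cross terms $P(\overline{\alpha},\alpha,0,1)$, $P(\alpha,\overline{\alpha},0,1)$ are controlled together. Verifying preservation then reduces to a finite computation with the $8\times8$ one-bit transition matrices, which replaces the failed scalar induction. An equivalent route is to read $A_{\omega_0}\cdots A_{\omega_{n-1}}e_0^T$ as a bitwise optimization in which $\beta_i,\gamma_i$ are chosen with $\alpha_i$ fixed and to prove by an exchange argument that $\beta_i=\alpha_i,\gamma_i=0$ is optimal at every position; this too hinges on a one-step domination property of the transfer matrices.
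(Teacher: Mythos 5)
Your setup is exactly the paper's: the same self-recurrence for $P(\alpha,\beta,\gamma,a)=\padp{a}{0}{\alpha}{\beta}{\gamma}+\padp{\overline a}{1}{\alpha}{\beta}{\gamma}$ obtained from Theorem~\ref{th:pcadprecurrence}, the same base case from Table~\ref{table:pcadp1bit}, and the same induction on $n$. But the obstacle you identify is not real, and the point where you abandon the scalar induction is precisely where you failed to apply Corollary~\ref{cor:AdpMatrixZeros}. The cross terms $P(\overline{\alpha},\alpha,0,1)$ and $P(\alpha,\overline{\alpha},0,1)$ in your expression for $R(\alpha 1)$ are identically zero: the least significant octal symbol of $\omega(\overline{\alpha},\alpha,0)$ is $(\overline{\alpha_{n-1}},\alpha_{n-1},0)$, which has odd weight, so $\adp{\overline{\alpha}}{\alpha}{0}=0$ and hence every $\padp{a}{b}{\overline{\alpha}}{\alpha}{0}=0$ (the $\padppar{a}{b}$ are nonnegative and sum to $\adpsym$). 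Thus $R(\alpha 1)=\frac14\bigl(R(\alpha)+R(\overline{\alpha})\bigr)$, and the reversed inequality you thought the induction would require, $R(\alpha)+R(\overline{\alpha})\le P(\overline{\alpha},\alpha,0,1)+P(\alpha,\overline{\alpha},0,1)$, never arises.

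The same vanishing rescues the left-hand side. For $p=(1,1,0)$ the four summands $P(\inv{\alpha}{q_0},\inv{\beta}{q_1},\gamma,\cdot)$ have least significant octal symbols of parity $\alpha_{n-1}\oplus\beta_{n-1}\oplus\gamma_{n-1}\oplus q_0\oplus q_1$, so exactly two of them vanish, and of the two survivors one has first argument $\alpha$ and the other $\overline{\alpha}$. The induction hypothesis then gives $P(\alpha 1,\beta 1,\gamma 0,a)\le\frac14 R(\alpha)+\frac14 R(\overline{\alpha})=R(\alpha 1)$, not the lossy bound $\frac12\bigl(R(\alpha)+R(\overline{\alpha})\bigr)$ you derived by keeping all four terms; the case $(p_1,p_2)=(0,1)$ is symmetric and the case $p_0=0$ is as you describe. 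This is exactly how the paper closes the induction. Your proposed repair --- a strengthened vector-valued invariant on $A_{\omega_0}\cdots A_{\omega_{n-1}}e_0^T$ verified by a finite computation --- is therefore unnecessary, and since you only sketch it without constructing the invariant or verifying its one-step preservation, the proposal as written does not contain a complete proof.
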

\begin{proof}
    Let us prove the statement by induction. The base of the induction $n = 1$ directly follows from Table~\ref{table:pcadp1bit}. Suppose that it holds for all $\alpha, \beta, \gamma \in \zspc{n}$, $a \in \zspc{}$. We prove that it is true for $\alpha', \beta', \gamma' \in \zspc{n+1}$, $a' \in \zspc{}$.
    Let $p = (\alpha'_n, \beta'_n, \gamma'_n)$ and $\alpha' = \alpha p_0, \beta' = \beta p_1$, $\gamma' = \gamma p_2$. Theorem~\ref{th:pcadprecurrence} implies that the inequality holds if $\wt{p}$ is odd and provides the following if $\wt{p}$ is even: 
    \begin{align}\label{eq:padpsumrecurrence}
        \padp{a}{0}{\alpha'}{\beta'}{\gamma'} + \padp{\overline{a}}{1}{\alpha'}{\beta'&}{\gamma'} \nonumber\\
        = \frac{1}{2^{\wt{p}}}\sum\limits_{q \in \zspc{3}, q \preceq p} \big(&\padp{a\oplus q_0}{q_1}{\inv{\alpha}{q_0}}{\inv{\beta}{q_1}}{\inv{\gamma}{q_2}} \nonumber\\
        + \ &\padp{\overline{a \oplus q_0}}{\overline{q_1}}{\inv{\alpha}{q_0}}{\inv{\beta}{q_1}}{\inv{\gamma}{q_2}}\big).
    \end{align}
    
    \textbf{Case 1.} $p_0 = 0$. In this case $\padp{0}{0}{\alpha'}{\alpha'}{0} + \padp{1}{1}{\alpha'}{\alpha'}{0} = \padp{0}{0}{\alpha}{\alpha}{0} + \padp{1}{1}{\alpha}{\alpha}{0}$. If $(p_1, p_2) = 0$, the equality~(\ref{eq:padpsumrecurrence}) and the induction hypothesis proves the statement. Let $(p_1, p_2) = (1, 1)$. Then
    \begin{multline*}
        \padp{a}{0}{\alpha'}{\beta'}{\gamma'} + \padp{\overline{a}}{1}{\alpha'}{\beta'}{\gamma'} \\
        = \frac{1}{4}\sum\limits_{(q_1, q_2) \in \zspc{2}} (\padp{a}{q_1}{\alpha}{\inv{\beta}{q_1}}{\inv{\gamma}{q_2}}
        + \padp{\overline{a}}{\overline{q_1}}{\alpha}{\inv{\beta}{q_1}}{\inv{\gamma}{q_2}}).
    \end{multline*}
    At the same time,
    \begin{multline*}
    	\padp{a}{q_1}{\alpha}{\inv{\beta}{q_1}}{\inv{\gamma}{q_2}} + \padp{\overline{a}}{\overline{q_1}}{\alpha}{\inv{\beta}{q_1}}{\inv{\gamma}{q_2}}\\
    	 \leq \padp{0}{0}{\alpha}{\alpha}{0} + \padp{1}{1}{\alpha}{\alpha}{0}
    \end{multline*}
    by the induction hypothesis. 

    \textbf{Case 2.} $p_0 = 1$. First of all, we note that the equality~(\ref{eq:padpsumrecurrence}) consists of $4$ terms. Moreover, two of them are zero, since the last bits of their arguments are of odd weight (see Corollary~\ref{cor:AdpMatrixZeros}). Next, 
    \begin{align}
        \padp{0}{0}{\alpha'}{\alpha'}{0} &+ \padp{1}{1}{\alpha'}{\alpha'}{0} \nonumber\\
        &= \frac{1}{4}(\padp{0}{0}{\alpha}{\alpha}{0} + \padp{1}{1}{\alpha}{\alpha}{0})\label{eq:padpsumaa0}\\
        &+ \frac{1}{4}(\padp{1}{1}{\overline{\alpha}}{\overline{\alpha}}{0} + \padp{0}{0}{\overline{\alpha}}{\overline{\alpha}}{0}).\label{eq:padpsumnana0}
    \end{align}
    Let $(p_1, p_2) = (1, 0)$. According to the equality~(\ref{eq:padpsumrecurrence}),
    \begin{align}
        \padp{a}{0}{\alpha'}{\beta'}{&\gamma'} + \padp{\overline{a}}{1}{\alpha'}{\beta'}{\gamma'} \nonumber\\
        &= \frac{1}{4}\sum\limits_{q \in \zspc{}} (\padp{a}{q}{\alpha}{\inv{\beta}{q}}{\gamma}
        + \padp{\overline{a}}{\overline{q}}{\alpha}{\inv{\beta}{q}}{\gamma}) \label{eq:padpsumabc}\\
        &+ \frac{1}{4}\sum\limits_{q \in \zspc{}} (\padp{\overline{a}}{q}{\overline{\alpha}}{\inv{\beta}{q}}{\gamma}
        + \padp{a}{\overline{q}}{\overline{\alpha}}{\inv{\beta}{q}}{\gamma}).\label{eq:padpsumnabc}
    \end{align}
    Note that in both (\ref{eq:padpsumabc}) and (\ref{eq:padpsumnabc}) a term for $q = 0$ or for $q = 1$ equals to zero. Therefore, the induction hypothesis provides that (\ref{eq:padpsumabc}) $\leq$ (\ref{eq:padpsumaa0}) and (\ref{eq:padpsumnabc}) $\leq$ (\ref{eq:padpsumnana0}).
    The case of $(p_1, p_2) = (0, 1)$ is the same.
\end{proof}
Finally, the next theorem describes how to find $\max_{\beta, \gamma}  \adpxr{\alpha}{\beta}{\gamma}{1}$.
\begin{theorem}\label{th:max1left}
        Let us fix the first argument of $\adpxrsym$. Then
    $$
    \underset{\beta, \gamma \in \zspc{n}}{\mathrm{max}} \ \adpxr{\alpha}{\beta}{\gamma}{1} = \adpxr{\alpha}{\alpha}{0}{1} \text{, where } \alpha \in \zspc{n}.
    $$
\end{theorem}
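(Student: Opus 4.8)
The plan is to read off $\adpxr{\alpha}{\beta}{\gamma}{1}$ from Corollary~\ref{cor:adrxrformula} and then bound the result using Lemma~\ref{lemma:padpsum} together with the one-bit values in Table~\ref{table:pcadp1bit}. Fix the first argument and decompose all three $n$-bit differences as in Corollary~\ref{cor:adrxrformula} with $r=1$, so that the $\zspc{r}$-parts $\alpha',\beta',\gamma'$ are single (most significant) bits and the $\zspc{n-r}$-parts $\alpha,\beta,\gamma$ lie in $\zspc{n-1}$ (I will reuse the corollary's names for these parts). The corollary then expresses $\adpxr{\cctn{\alpha'}{\alpha}}{\cctn{\beta'}{\beta}}{\cctn{\gamma}{\gamma'}}{1}$ as $\padp{a'}{0}{\alpha}{\beta}{\inv{\gamma}{a}}\,\cadp{a}{\inv{\alpha'}{a'}}{\beta'}{\gamma'}+\padp{\overline{a'}}{1}{\alpha}{\beta}{\inv{\gamma}{a}}\,\cadp{a}{\inv{\overline{\alpha'}}{a'}}{\overline{\beta'}}{\gamma'}$, where $a,a'$ are the two carry bits defined there; note that each $\cadppar{a}$-factor is a one-bit quantity.

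First I would bound the two $\cadppar{a}$-factors. Every entry of the $\cadppar{0}$ and $\cadppar{1}$ columns in Table~\ref{table:pcadp1bit} is at most $1$, so both factors are $\le 1$ and may be discarded. This leaves the upper bound $\padp{a'}{0}{\alpha}{\beta}{\inv{\gamma}{a}}+\padp{\overline{a'}}{1}{\alpha}{\beta}{\inv{\gamma}{a}}$, which is exactly the left-hand side of Lemma~\ref{lemma:padpsum} with the lemma's $a$ chosen to be $a'$ and its third argument chosen to be $\inv{\gamma}{a}$. The lemma therefore gives the uniform bound $\padp{0}{0}{\alpha}{\alpha}{0}+\padp{1}{1}{\alpha}{\alpha}{0}$, which involves only the $\zspc{n-1}$-part $\alpha$ of the fixed first argument and is independent of $\beta$ and $\gamma$.

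It remains to check that equality holds at $\beta=\alpha$, $\gamma=0$. With these choices the $\zspc{n-1}$-parts of the second argument and of the output coincide with $\alpha$ and $0$, both carry bits become $a=a'=0$, and the two one-bit $\cadppar{0}$-factors are evaluations at the octal triples $000$ or $110$, each of which has $\cadppar{0}=1$ in Table~\ref{table:pcadp1bit}. The two surviving factors then reduce precisely to $\padp{0}{0}{\alpha}{\alpha}{0}$ and $\padp{1}{1}{\alpha}{\alpha}{0}$, so $\adpxr{\alpha}{\alpha}{0}{1}$ attains the upper bound, proving the theorem.

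The analytic content sits entirely in Lemma~\ref{lemma:padpsum}, which is already established by induction; granting it, the theorem is a short assembly of the factorization in Corollary~\ref{cor:adrxrformula}, the elementary bound $\cadppar{a}\le1$ on one-bit factors, and the equality verification. The only thing to watch is the bookkeeping: correctly matching the corollary's carry bits $a,a'$ and its complemented output part $\inv{\gamma}{a}$ to the variables of Lemma~\ref{lemma:padpsum}, so that the lemma's right-hand side is genuinely free of $\beta$ and $\gamma$.
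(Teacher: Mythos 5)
Your proposal is correct and follows essentially the same route as the paper: decompose via Corollary~\ref{cor:adrxrformula} with $r=1$, drop the two one-bit $\cadppar{a}$-factors since they are at most $1$, invoke Lemma~\ref{lemma:padpsum} for the uniform upper bound, and check that $\beta=\alpha$, $\gamma=0$ attains it. The only quibble is the parenthetical claim that all of $\alpha',\beta',\gamma'$ are most significant bits --- in the corollary's decomposition $\gamma'$ is the \emph{least} significant part of the output difference --- but since you work directly with the corollary's formula this does not affect the argument.
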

\begin{proof}
    Let $\alpha, \beta, \gamma \in \zspc{n - 1}$, $\alpha', \beta', \gamma' \in \zspc{}$, $a = \alpha_{n - 2} \oplus \beta_{n - 2} \oplus \gamma_{n - 2}$ and $a' = \alpha' \oplus \beta' \oplus \gamma'$. According to Corollary~\ref{cor:adrxrformula}, 
    $$\adpxr{\cctn{\alpha'}{\alpha}}{\cctn{\beta'}{\beta}}{\cctn{\gamma}{\gamma'}}{1} = p \cdot \padp{a'}{0}{\alpha}{\beta}{\inv{\gamma}{a}}
     + q \cdot \padp{\overline{a'}}{1}{\alpha}{\beta}{\inv{\gamma}{a}},$$
    where $p = \cadp{a}{\inv{\alpha'}{a'}}{\beta'}{\gamma'} \leq 1$ and $q = \cadp{a}{\inv{\overline{\alpha'}}{a'}}{\overline{\beta'}}{\gamma'} \leq 1$.
    At the same time, 
    \begin{align*}
     \adpxr{\cctn{\alpha'}{\alpha}}{\cctn{\alpha'}{\alpha}}{0}{1} = \padp{0}{0}{\alpha}{\alpha}{0} + \padp{1}{1}{\alpha}{\alpha}{0}.
    \end{align*}    
    Since $p,q \leq 1$, Lemma~\ref{lemma:padpsum} provides that $p \cdot \padp{a'}{0}{\alpha}{\beta}{\inv{\gamma}{a}}
     + q \cdot \padp{\overline{a'}}{1}{\alpha}{\beta}{\inv{\gamma}{a}} \leq \adpxr{\cctn{\alpha'}{\alpha}}{\cctn{\alpha'}{\alpha}}{0}{1}$.
\end{proof}
\begin{remark}
    The argument symmetries from Theorem~\ref{th:symmetries} guarantee that there are at least two optimal differentials if $\alpha \in \zspc{n} \setminus \{0, 2^{n - 1}\}$. Indeed,
    $$
        \underset{\beta, \gamma \in \zspc{n}}{\mathrm{max}} \ \adpxr{\alpha}{\beta}{\gamma}{1} = \adpxr{\alpha}{\alpha}{0}{1} = \adpxr{\alpha}{-\alpha}{0}{1}.
    $$ 
\end{remark} 

Note that the values of $\adpxrsym$ and $\adpsym$ coincide for the following differences.
\begin{proposition}\label{prop:adpxr_x_g0}
Let $\alpha, \beta \in \zspc{n}$. Then $\adpxr{\alpha}{\beta}{0}{r} = \adp{\alpha}{\beta}{0}$.
\end{proposition}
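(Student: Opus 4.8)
The plan is to argue directly from the definition of the additive differential probability, exploiting that the bit rotation $\lll r$ is a bijection of $\zspc{n}$. First I would expand $\adpxr{\alpha}{\beta}{0}{r}$ for the XR-operation $f(x,y) = (x \oplus y) \lll r$:
\[
    \adpxr{\alpha}{\beta}{0}{r} = 2^{-2n}\,\#\bigl\{x, y \in \zspc{n} : \bigl((x + \alpha) \oplus (y + \beta)\bigr) \lll r = (x \oplus y) \lll r\bigr\}.
\]

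The key observation is that $\lll r$ is a permutation of $\zspc{n}$, so for any $u, v \in \zspc{n}$ one has $u \lll r = v \lll r$ if and only if $u = v$. Applying this with $u = (x + \alpha) \oplus (y + \beta)$ and $v = x \oplus y$ strips the rotation from both sides, and the defining event becomes $(x + \alpha) \oplus (y + \beta) = x \oplus y$. This is exactly the defining event for $\adp{\alpha}{\beta}{0}$ associated with the map $(x,y) \mapsto x \oplus y$, so the two counting sets coincide and the probabilities are equal.

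There is essentially no obstacle here; the only point deserving attention is why the argument is special to the zero output difference. For $\gamma = 0$ the right-hand side $(x \oplus y) \lll r$ is compared without modification, which is what permits cancelling the rotation. For a nonzero $\gamma$ the comparison would involve $((x \oplus y) \lll r) + \gamma$, and since addition modulo $2^n$ does not commute with $\lll r$ in general, the rotation could no longer be removed and the reduction to $\adpsym$ would fail. Hence the statement holds precisely because the output difference vanishes, and its proof reduces to the bijectivity of the rotation.
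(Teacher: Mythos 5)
Your proof is correct and matches the paper's argument exactly: both expand the definition and use that $\lll r$ is a bijection of $\zspc{n}$ to insert or remove the rotation on both sides of the defining equality, which works precisely because the output difference is zero. Nothing further is needed.
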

\begin{proof}
    By definition:
    \begin{align*}
        \adp{\alpha}{\beta}{0} &= 4^{-n} \#\{x, y \in \zspc{n} : (x + \alpha) \oplus (y + \beta) = x \oplus y \} \\
        &= 4^{-n} \#\{x, y \in \zspc{n} : ((x + \alpha) \oplus (y + \beta)) \lll r = (x \oplus y) \lll r \} \\
        &= \adpxr{\alpha}{\beta}{0}{r}. 
    \end{align*}
\end{proof}
 This means that there is the connection between maximums of $\adpsym$ and $\adpxrsym$.
\begin{corollary}\label{cor:onebitleft_adp}
    Let $\alpha \in \zspc{n}$. Then
    $$
        \underset{\beta, \gamma \in \zspc{n}}{\mathrm{max}} \ \adpxr{\alpha}{\beta}{\gamma}{1} = \adp{\alpha}{\alpha}{0} = \underset{\beta, \gamma \in \zspc{n}}{\mathrm{max}} \ \adp{\alpha}{\beta}{\gamma}.
    $$
\end{corollary}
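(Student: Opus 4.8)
The plan is to split the corollary into its two equalities and settle each from material already in hand. For the left-hand equality $\max_{\beta,\gamma}\adpxr{\alpha}{\beta}{\gamma}{1} = \adp{\alpha}{\alpha}{0}$ I would simply combine the two immediately preceding results: Theorem~\ref{th:max1left} identifies the maximizer and gives $\max_{\beta,\gamma \in \zspc{n}}\adpxr{\alpha}{\beta}{\gamma}{1} = \adpxr{\alpha}{\alpha}{0}{1}$, while Proposition~\ref{prop:adpxr_x_g0}, applied with second argument $\alpha$, output difference $0$ and $r=1$, rewrites $\adpxr{\alpha}{\alpha}{0}{1} = \adp{\alpha}{\alpha}{0}$. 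Chaining the two identities disposes of this half with no further computation.

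The real content is the right-hand equality $\adp{\alpha}{\alpha}{0} = \max_{\beta,\gamma}\adp{\alpha}{\beta}{\gamma}$, asserting that, with the first argument of $\adpsym$ fixed, the maximum over the remaining two arguments is attained at $(\alpha,\alpha,0)$. I would establish this as the characterization of the maximum additive differential probability of XOR with one input difference fixed, appealing to the $\adpsym$-theory of~\cite{MouhaEtAl2021}. Once it is granted, the corollary is immediate: the common value $\adp{\alpha}{\alpha}{0}$ simultaneously equals $\max_{\beta,\gamma}\adpxr{\alpha}{\beta}{\gamma}{1}$ (by the first step) and $\max_{\beta,\gamma}\adp{\alpha}{\beta}{\gamma}$.

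The main obstacle is precisely this right-hand equality, and it cannot be read off from the earlier lemmas for free. Combining Proposition~\ref{prop:adpxr_x_g0} with Theorem~\ref{th:max1left} only yields $\max_{\beta}\adp{\alpha}{\beta}{0} = \adp{\alpha}{\alpha}{0}$, i.e.\ the maximum over $\beta$ when the output difference is held at $0$; ruling out a larger value for some nonzero output difference $\gamma$ is exactly what is missing. If a citation were unavailable, I would instead prove $\adp{\alpha}{\beta}{\gamma} \le \adp{\alpha}{\alpha}{0}$ by induction on $n$, using the $\adpsym$-recurrence (the sum over $a,b$ of Theorem~\ref{th:pcadprecurrence}, equivalently~\cite[Theorem~3]{MouhaEtAl2021}) together with a case analysis on the appended even-weight octal symbol, mirroring Lemma~\ref{lemma:padpsum} and Theorem~\ref{th:max1left}. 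The delicate point in that route is that $\adpsym$ is a sum of two $\padppar{a}{b}$-blocks rather than one, so applying Lemma~\ref{lemma:padpsum} termwise overcounts by a factor of two; the induction must therefore exploit the structure of the maximizer $(\alpha,\alpha,0)$ directly rather than bounding the two blocks independently.
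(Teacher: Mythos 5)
Your proposal matches the paper's proof exactly: the left equality is Theorem~\ref{th:max1left} chained with Proposition~\ref{prop:adpxr_x_g0}, and the right equality is the cited maximum characterization $\max_{\beta,\gamma}\adp{\alpha}{\beta}{\gamma}=\adp{\alpha}{\alpha}{0}$ from \cite[Theorem~2]{MouhaEtAl2021}, which the paper likewise invokes (and also uses inside Lemma~\ref{lemma:cadp0max}). Your observation that this last equality is genuinely external and not recoverable from the earlier lemmas alone is accurate, and the citation is the intended resolution.
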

\noindent The proof directly follows from Theorem~\ref{th:max1left}, Proposition~\ref{prop:adpxr_x_g0} and \cite[Theorem~2]{MouhaEtAl2021}. 

This property shows us the similarity between $\adpsym$ and $\adpxrsym$ if the rotation is one bit left. Overall, it is the most similar to $\adpsym$ case. This will also be confirmed in 
Section~\ref{sec:impossibledifferentialsest}.

Note that it is still not easy to calculate the exact value of $\max_{\beta, \gamma \in \zspc{n}}\adpxr{\alpha}{\beta}{\gamma}{1}$ without Theorem~\ref{th:AdpMatrix}/Theorem~\ref{th:adrxrformula}/Corollary~\ref{cor:adrxrformula}. According to~\cite[Proposition 6]{MouhaEtAl2021}, we can do this in the way similar to Theorem~\ref{th:AdpMatrix} with $2 \times 2$ matrices instead of $8 \times 8$. 
However, $\min_{\alpha \in \zspc{n}}\adp{\alpha}{\alpha}{0}$, i.e. $\min_{\alpha \in \zspc{n}} \max_{\beta, \gamma \in \zspc{n}}\adpxr{\alpha}{\beta}{\gamma}{1}$, that was studied in~\cite[Theorem 4]{MouhaEtAl2021}, can be calculated similarly to Fibonacci numbers.
A problem related to this minimum value is also presented in~\cite{NSUCRYPTO2021}.

\subsection{Maximums of \texorpdfstring{$\adpxrsym$}{} for \texorpdfstring{$r=n-1$}{}}\label{sec:maxfor1right}

Let us start with the following lemmas.
\begin{lemma}\label{lemma:cadp0max}
    Let $\alpha \in \zspc{n}$ and $c \in \zspc{}$. Then it is true that $\underset{\beta, \gamma \in \zspc{n}}{\mathrm{max}} \, \cadp{c}{\alpha}{\beta}{\gamma} \leq \underset{\beta, \gamma \in \zspc{n}}{\mathrm{max}} \, \cadp{0}{\alpha}{\beta}{\gamma} 
        = \underset{\beta, \gamma \in \zspc{n}}{\mathrm{max}} \, \adp{\alpha}{\beta}{\gamma}         = \adp{\alpha}{\alpha}{0}.$
\end{lemma}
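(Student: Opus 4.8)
The statement bundles three assertions: the outer inequality $\max_{\beta,\gamma}\cadp{c}{\alpha}{\beta}{\gamma} \le \max_{\beta,\gamma}\cadp{0}{\alpha}{\beta}{\gamma}$, the middle equality $\max_{\beta,\gamma}\cadp{0}{\alpha}{\beta}{\gamma} = \max_{\beta,\gamma}\adp{\alpha}{\beta}{\gamma}$, and the final identification $\max_{\beta,\gamma}\adp{\alpha}{\beta}{\gamma} = \adp{\alpha}{\alpha}{0}$. The plan is to treat each in turn, relying almost entirely on the symmetries already established in Theorem~\ref{th:cadpsymmetries} together with the known fixed-argument maximum of $\adpsym$. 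The last equality requires no new work: by \cite[Theorem~2]{MouhaEtAl2021} and the symmetry of $\adpsym$, the maximum of $\adpsym$ over $\beta,\gamma$ with $\alpha$ fixed is attained at $(\alpha, \alpha \to 0)$, so I would simply cite $\max_{\beta,\gamma}\adp{\alpha}{\beta}{\gamma} = \adp{\alpha}{\alpha}{0}$.

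For the middle equality I would start from the decomposition $\cadppar{0} + \cadppar{1} = \adpsym$, which is immediate from the definitions (since $\sum_{c}\cadp{c}{\alpha}{\beta}{\gamma} = \adp{\alpha}{\beta}{\gamma}$) together with the nonnegativity of each summand. The bound $\cadp{0}{\alpha}{\beta}{\gamma} \le \adp{\alpha}{\beta}{\gamma}$ gives $\le$ at once. For the reverse inequality the key observation is that the $\adpsym$-maximizer $(\alpha, \alpha \to 0)$ has vanishing $\cadppar{1}$-component: Theorem~\ref{th:cadpsymmetries}(7) yields $\cadp{1}{\alpha}{\alpha}{0} = 0$, hence $\cadp{0}{\alpha}{\alpha}{0} = \adp{\alpha}{\alpha}{0} = \max_{\beta,\gamma}\adp{\alpha}{\beta}{\gamma}$, which forces $\max_{\beta,\gamma}\cadp{0}{\alpha}{\beta}{\gamma} \ge \adp{\alpha}{\alpha}{0}$ and closes the middle equality.

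For the outer inequality only $c = 1$ is nontrivial. Here I would invoke Theorem~\ref{th:cadpsymmetries}(3), namely $\cadp{1}{\alpha}{\beta}{\gamma} = \cadp{0}{\alpha}{\beta}{-\gamma}$ for $\gamma \neq 0$, together with Theorem~\ref{th:cadpsymmetries}(7), namely $\cadp{1}{\alpha}{\beta}{0} = 0$. The first identity shows that for each $\beta$ and each nonzero $\gamma$ the value $\cadp{1}{\alpha}{\beta}{\gamma}$ coincides with a $\cadppar{0}$-value, and since $\gamma \mapsto -\gamma$ is a bijection of the nonzero elements of $\zspc{n}$ onto themselves, the two suprema over $(\beta,\gamma)$ with $\gamma \neq 0$ agree; the second identity disposes of the remaining case $\gamma = 0$. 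Taking the maximum over $\beta,\gamma$ then gives $\max_{\beta,\gamma}\cadp{1}{\alpha}{\beta}{\gamma} \le \max_{\beta,\gamma}\cadp{0}{\alpha}{\beta}{\gamma}$.

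I do not expect a genuine obstacle, since every ingredient is already in place; the work is bookkeeping. The only points requiring care are verifying that $\gamma \mapsto -\gamma$ is indeed a bijection on the nonzero elements (so that the maximum is unchanged), separating the degenerate case $\gamma = 0$ handled by part (7), and confirming that the $\adpsym$-maximizer $(\alpha, \alpha \to 0)$ really kills the $\cadppar{1}$ part — which is precisely the content of Theorem~\ref{th:cadpsymmetries}(7) and is what lets the three quantities line up.
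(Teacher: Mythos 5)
Your proposal is correct and follows essentially the same route as the paper's proof: both cite \cite[Theorem~2]{MouhaEtAl2021} for the final equality and use point~7 of Theorem~\ref{th:cadpsymmetries} to get $\cadp{1}{\alpha}{\alpha}{0}=0$, hence $\cadp{0}{\alpha}{\alpha}{0}=\adp{\alpha}{\alpha}{0}$, which pins down the middle equality. The only divergence is the outer inequality, where the paper just uses the pointwise bound $\cadp{c}{\alpha}{\beta}{\gamma}\leq\adp{\alpha}{\beta}{\gamma}$ to chain $\max\cadppar{c}\leq\max\adpsym=\cadp{0}{\alpha}{\alpha}{0}\leq\max\cadppar{0}$, while you route through the symmetry $\cadp{1}{\alpha}{\beta}{\gamma}=\cadp{0}{\alpha}{\beta}{-\gamma}$ of Theorem~\ref{th:cadpsymmetries}(3); both are valid, and yours in fact yields the slightly stronger observation that the two maxima over nonzero $\gamma$ coincide.
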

\begin{proof}
    It is clear that $\max_{\beta,\gamma} \cadp{c}{\alpha}{\beta}{\gamma} \leq \max_{\beta,\gamma} \adp{\alpha}{\beta}{\gamma}$. Also, \cite[Theorem 2]{MouhaEtAl2021} provides that $\max_{\beta,\gamma} \adp{\alpha}{\beta}{\gamma} = \adp{\alpha}{\alpha}{0}$. At the same time, $\adp{\alpha}{\alpha}{0} = \cadp{0}{\alpha}{\alpha}{0} + \cadp{1}{\alpha}{\alpha}{0} = \cadp{0}{\alpha}{\alpha}{0}$ by Theorem~\ref{th:padpsymmetries}.
\end{proof}
\begin{lemma}\label{lemma:adpanamin}
    Let $\alpha \in \zspc{n}$, $\alpha_{n - 1} = 0$. Then $\adp{\overline{\alpha}}{\overline{\alpha}}{0} \leq \adp{\alpha}{\alpha}{0}$.
\end{lemma}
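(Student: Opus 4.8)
The plan is to reduce the comparison to a two-dimensional linear recurrence and then exhibit a ratio invariant. First I would observe that for the arguments $(\alpha,\alpha,0)$ the octal word $\omega:=\omega(\alpha,\alpha,0)$ has $\omega_i=6\alpha_i\in\{0,6\}$, so by Theorem~\ref{th:AdpMatrix} only the matrices $A_0,A_6$ occur and $\adp{\alpha}{\alpha}{0}=L A_{\omega_0}\cdots A_{\omega_{n-1}}e_0^T$. For $(\overline\alpha,\overline\alpha,0)$ the symbols become $\omega_i\oplus 6$, and using $A_{t\oplus 6}=T_6 A_t T_6$ (as in the proof of Lemma~\ref{lemma:matrixindexXOR}) together with $L T_6=L$ and $T_6 e_0^T=e_6^T$, I would rewrite $\adp{\overline\alpha}{\overline\alpha}{0}=L A_{\omega_0}\cdots A_{\omega_{n-1}}e_6^T$. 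Thus both quantities are the \emph{same} matrix product evaluated on $e_0^T$ versus $e_6^T$, and it remains to compare these two evaluations.

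The structural point is that $A_0$ and $A_6$ leave $V=\mathrm{span}\{e_0,e_2,e_4,e_6\}$ invariant, that on $V$ the pair $(w_0,w_6)$ of coordinates evolves autonomously, and that $L$ collapses on $V$ to reading off $w_0+w_6$ (indeed one checks $L A_0=L A_6=(1,0,0,1,0,1,1,0)$). Concretely, $A_0$ acts on $(w_0,w_6)^T$ as $B_0=\left(\begin{smallmatrix}1 & 1/4\\0 & 1/4\end{smallmatrix}\right)$ and $A_6$ as $B_6=\left(\begin{smallmatrix}1/4 & 0\\1/4 & 1\end{smallmatrix}\right)$, with $e_0^T$ corresponding to $(1,0)^T$ and $e_6^T$ to $(0,1)^T$. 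Carrying these through the product (the leftmost factor $A_{\omega_0}$ drops out once $L$ is applied) yields $\adp{\alpha}{\alpha}{0}=(1,1)\,D\,(1,0)^T$ and $\adp{\overline\alpha}{\overline\alpha}{0}=(1,1)\,D\,(0,1)^T$, where $D=B_{\omega_1}\cdots B_{\omega_{n-1}}$. Hence the difference equals $(1,1)\,D\,(1,-1)^T$ and I must show it is nonnegative.

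Here the hypothesis $\alpha_{n-1}=0$ enters: it forces the rightmost factor to be $B_0$, and since $B_0(1,-1)^T=\tfrac14(3,-1)^T$, the difference becomes $\tfrac14\,(1,1)\,D''\,(3,-1)^T$ with $D''=B_{\omega_1}\cdots B_{\omega_{n-2}}$. Writing $(p,q):=(1,1)D''$, which has nonnegative (indeed positive) entries because $B_0,B_6$ and $(1,1)$ are nonnegative, the claim reduces to $q\le 3p$. I would prove the invariant $q/p\in[\tfrac13,3]$ by induction on the number of factors: it holds for the seed $(1,1)$, and a direct check shows that right-multiplication by $B_0$ sends the ratio $t=q/p$ to $(1+t)/4\in[\tfrac13,1]$ and by $B_6$ to $4t/(1+t)\in[1,3]$, so $[\tfrac13,3]$ is preserved while both coordinates stay positive. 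In particular $q\le 3p$, giving $\adp{\overline\alpha}{\overline\alpha}{0}\le\adp{\alpha}{\alpha}{0}$; the degenerate case $n=1$ (only $\alpha=0$) gives equality directly.

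The main obstacle I expect is justifying the two-dimensional reduction rigorously — verifying that $(w_0,w_6)$ genuinely decouples from $(w_2,w_4)$ under both $A_0$ and $A_6$ and that $L$ collapses to $w_0+w_6$ on $V$ — after which the invariant argument is short. An alternative that sidesteps the $T_6$-conjugation bookkeeping is to apply Lemma~\ref{lemma:matrixindexXOR} to identify the intermediate coordinates $L A_{\omega_0}\cdots A_{\omega_{n-2}}e_k^T$ with $\adpsym$-values of shifted words and then note, via Corollary~\ref{cor:AdpMatrixZeros}, that the $e_2,e_4$ contributions vanish (their words have odd-weight least-significant symbols); this routes to the same inequality $q\le 3p$.
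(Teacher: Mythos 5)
Your proof is correct, but it takes a genuinely different route from the paper. The paper disposes of this lemma in two lines by citing external results on $\adpsym$: Corollary~5 of Mouha et al.\ gives $\adp{\alpha'1}{\alpha'1}{0} < \adp{\alpha'0}{\alpha'0}{0}$, and the negation symmetry $\adp{x}{x}{0}=\adp{-x}{-x}{0}$ together with $-(\alpha'1)=\overline{\alpha'0}$ converts this into the stated inequality. You instead give a self-contained argument: the conjugation identity $A_{t\oplus 6}=T_6A_tT_6$ correctly turns $\adp{\overline{\alpha}}{\overline{\alpha}}{0}$ into the same matrix product evaluated at $e_6^T$ instead of $e_0^T$; the subspace $\mathrm{span}\{e_0,e_2,e_4,e_6\}$ is indeed invariant under $A_0$ and $A_6$ with the pair $(w_0,w_6)$ evolving via $B_0=\left(\begin{smallmatrix}1&1/4\\0&1/4\end{smallmatrix}\right)$ and $B_6=\left(\begin{smallmatrix}1/4&0\\1/4&1\end{smallmatrix}\right)$, and $LA_0=LA_6=(1,0,0,1,0,1,1,0)$ does collapse to $w_0+w_6$ there; finally the ratio invariant $q/p\in[\tfrac13,3]$ is preserved by both $B_0$ (sending $t\mapsto(1+t)/4$) and $B_6$ (sending $t\mapsto 4t/(1+t)$), which is exactly what is needed once the hypothesis $\alpha_{n-1}=0$ supplies the trailing factor $B_0$ and reduces the claim to $q\le 3p$. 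In effect you re-derive the $2\times 2$ reduction for $\adp{\alpha}{\alpha}{0}$ that the paper only alludes to (via~\cite[Proposition~6]{MouhaEtAl2021}) and then replace the cited monotonicity result by your own invariant argument. What your approach buys is independence from the external reference and an explicit quantitative bound ($3p-q\ge 0$ with $q/p\in[\tfrac13,3]$); what the paper's approach buys is brevity and strictness of the inequality for $\alpha\neq 0$, which your weak invariant does not immediately recover (though the lemma only asserts $\le$, so nothing is lost).
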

\begin{proof}
The case of $n = 1$ is straightforward. Next, $\adp{\alpha'1}{\alpha'1}{0} < \adp{\alpha'0}{\alpha'0}{0}$ holds for any $\alpha' \in \zspc{n - 1}$ by~\cite[Corollary 5]{MouhaEtAl2021}. Let $\alpha = \alpha'0$. Due to the symmetries of $\adpsym$ (see \cite[Proposition 3]{MouhaEtAl2021}), $\adp{\alpha'1}{\alpha'1}{0} = \adp{-(\alpha'1)}{-(\alpha'1)}{0}$. Also, $-(\alpha'1) = \overline{\alpha'}1 = \overline{\alpha'0} = \overline{\alpha}$.
\end{proof}

The next theorem describes how to find $\max_{\beta, \gamma}\adpxr{\alpha}{\beta}{\gamma}{n-1}$.
\begin{theorem}\label{th:max1right}
        Let us fix the first argument of $\adpxrsym$. Then 
    \begin{enumerate}
        \item $ \underset{\beta, \gamma \in \zspc{n}}{\mathrm{max}} \ \adpxr{\alpha0}{\beta}{\gamma}{n-1} = \adpxr{\alpha0}{\alpha0}{0}{n-1},$ $\alpha \in \zspc{n - 1}$,
        \item $ \underset{\beta, \gamma \in \zspc{n}}{\mathrm{max}} \ \adpxr{\alpha01}{\beta}{\gamma}{n-1} = \adpxr{\alpha01}{\alpha00}{2^{n-1}}{n-1},$ $\alpha \in \zspc{n - 2}$,
        \item $ \underset{\beta, \gamma \in \zspc{n}}{\mathrm{max}} \ \adpxr{\alpha11}{\beta}{\gamma}{n-1} = \adpxr{\alpha11}{\overline{\alpha}00}{2^{n-1}}{n-1},$ $\alpha \in \zspc{n - 2}$,
    \end{enumerate}
    where the vector $(\alpha_0, \ldots, \alpha_{n - 3}, a, b)$ is denoted by $\alpha a b$ for $\alpha \in \zspc{n-2}$, $a, b \in \zspc{}$. 
    \end{theorem}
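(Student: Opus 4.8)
The plan is to apply Corollary~\ref{cor:adrxrformula} with $r = n-1$, so the ``low'' block has length $n-r=1$. Writing the first argument as $\cctn{A}{x}$ with $A \in \zspc{n-1}$, $x \in \zspc{}$, the second as $\cctn{B}{y}$, and the output as $\cctn{z}{C}$ with $C \in \zspc{n-1}$, the corollary gives
\begin{equation*}
    \adpxr{\cctn{A}{x}}{\cctn{B}{y}}{\cctn{z}{C}}{n-1} = \padp{a'}{0}{x}{y}{\inv{z}{a}} \cadp{a}{\inv{A}{a'}}{B}{C} + \padp{\overline{a'}}{1}{x}{y}{\inv{z}{a}} \cadp{a}{\inv{\overline{A}}{a'}}{\overline{B}}{C},
\end{equation*}
with $a = x \oplus y \oplus z$ and $a' = A_{n-2} \oplus B_{n-2} \oplus C_{n-2}$. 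The first step is the simplification $\inv{z}{a} = z \oplus a = x \oplus y$, which makes the factors $\padp{a'}{0}{x}{y}{x\oplus y}$ and $\padp{\overline{a'}}{1}{x}{y}{x\oplus y}$ depend only on the fixed bit $x$, the free bit $y$, and the index $a'$; their values are read off the even-weight rows $(x, y, x\oplus y)$ of Table~\ref{table:pcadp1bit}. Since $\inv{\overline{A}}{a'} = \overline{\inv{A}{a'}}$, the two factors $\cadppar{a}$ carry complementary first arguments $\inv{A}{a'}$ and $\overline{\inv{A}{a'}}$, each equal to $A$ or $\overline{A}$ according to $a'$.

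I would then bound each $\cadppar{a}$ factor by Lemma~\ref{lemma:cadp0max}: $\cadp{a}{\inv{A}{a'}}{B}{C} \le \adp{\inv{A}{a'}}{\inv{A}{a'}}{0}$, with equality attainable at $a=0$, $B = \inv{A}{a'}$, $C = 0$, and similarly for the complementary term. This reduces the whole problem to weighing the coefficients from Table~\ref{table:pcadp1bit} against the two numbers $\adp{A}{A}{0}$ and $\adp{\overline{A}}{\overline{A}}{0}$. For $x = 0$ (Case 1, $A = \alpha$) the table shows that $(y, a') = (0, 0)$ is the unique configuration with coefficient $1$, in front of $\cadp{a}{A}{B}{C}$, whereas every other choice of $(y, a')$ has coefficient at most $\tfrac12$. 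Hence the maximum is $\adp{\alpha}{\alpha}{0}$, attained at $a = a' = 0$, i.e. $B = \alpha$, $C = 0$, $z = 0$, so $Y = \alpha0 = X$ and $Z = 0$; Proposition~\ref{prop:adpxr_x_g0} identifies this value with $\adpxr{\alpha0}{\alpha0}{0}{n-1}$.

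For $x = 1$ (Cases 2 and 3) the surviving coefficients are $\tfrac12$ when $y = 0$ (for either value of $a'$, in front of a single $\cadppar{a}$ factor) and $\tfrac14$ each when $y = 1$ (in front of the two complementary factors). Choosing $a'$ freely lets $\inv{A}{a'}$ be either $A$ or $\overline{A}$, so the $y = 0$ bound is $\tfrac12 \max\{\adp{A}{A}{0}, \adp{\overline{A}}{\overline{A}}{0}\}$, while the $y = 1$ bound equals $\tfrac14\big(\adp{A}{A}{0} + \adp{\overline{A}}{\overline{A}}{0}\big)$, which never exceeds the $y = 0$ bound. The decisive point---and the step I expect to be the main obstacle---is deciding which of $\adp{A}{A}{0}$ and $\adp{\overline{A}}{\overline{A}}{0}$ is larger, since this is exactly what separates Case 2 from Case 3. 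Here Lemma~\ref{lemma:adpanamin} is the key tool: if $A_{n-2} = 0$ (Case 2, $A = \alpha0$) then $\adp{\overline{A}}{\overline{A}}{0} \le \adp{A}{A}{0}$, forcing the optimal $a' = 0$ and $\inv{A}{a'} = \alpha0$; if $A_{n-2} = 1$ (Case 3, $A = \alpha1$, $\overline{A} = \overline{\alpha}0$) the inequality reverses, forcing $a' = 1$ and $\inv{A}{a'} = \overline{\alpha}0$.

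I would finish by substituting each claimed maximizer back into the displayed formula and verifying equality through the attainment clause of Lemma~\ref{lemma:cadp0max} together with the relevant entry of Table~\ref{table:pcadp1bit}: in Case 2 this yields $Y = \alpha00$, $Z = 2^{n-1}$ and value $\tfrac12\adp{\alpha0}{\alpha0}{0}$, and in Case 3 it yields $Y = \overline{\alpha}00$, $Z = 2^{n-1}$ and value $\tfrac12\adp{\overline{\alpha}0}{\overline{\alpha}0}{0}$. The only bookkeeping left is to confirm that the constraints $a = 0$ and the prescribed value of $a'$ are simultaneously consistent with these choices of $B$, $C$ and $z$, which they are since $z = 1$ gives $a = 0$ and $B_{n-2} = C_{n-2} = 0$ gives $a' = A_{n-2}$.
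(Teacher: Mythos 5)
Your proposal is correct and follows essentially the same route as the paper's own proof: both reduce via Corollary~\ref{cor:adrxrformula} with $r=n-1$ to the one-bit table of $\padppar{a}{b}$ values, bound the $\cadppar{a}$ factors by Lemma~\ref{lemma:cadp0max}, and decide between $\adp{A}{A}{0}$ and $\adp{\overline{A}}{\overline{A}}{0}$ with Lemma~\ref{lemma:adpanamin}, finishing with the same attainment check. The only difference is cosmetic reorganization of the case analysis over $(y,a')$.
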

\begin{proof}
 According to Corollary~\ref{cor:adrxrformula}, $\adpxr{\cctn{\alpha'}{\alpha}}{\cctn{\beta'}{\beta}}{\cctn{\gamma}{\gamma'}}{n - 1} = p \cdot \cadp{a}{\inv{\alpha'}{a'}}{\beta'}{\gamma'} + q \cdot \cadp{a}{\inv{\overline{\alpha'}}{a'}}{\overline{\beta'}}{\gamma'}$,
where $p = \padp{a'}{0}{\alpha}{\beta}{\inv{\gamma}{a}}$, $q = \padp{\overline{a'}}{1}{\alpha}{\beta}{\inv{\gamma}{a}}$.
Since $\alpha', \beta', \gamma' \in \zspc{n - 1}$ and $\alpha, \beta, \gamma \in \zspc{}$, Table~\ref{table:pcadp1bit} gives us all possible values of $p$ and $q$:

    \begin{center}
    \begin{minipage}{23em}
    \begin{tabular}{@{}lcccccccc@{}}
        \toprule
        $\alpha$ & \multicolumn{4}{@{}c@{}}{$0$} & \multicolumn{4}{@{}c@{}}{$1$}\\
        \cmidrule(lr){2-5}\cmidrule(l){6-9}
        $\beta$ & \multicolumn{2}{@{}c@{}}{$0$} & \multicolumn{2}{@{}c@{}}{$1$} & \multicolumn{2}{@{}c@{}}{$0$} & \multicolumn{2}{@{}c@{}}{$1$}\\
        \cmidrule(lr){2-3}\cmidrule(lr){4-5}\cmidrule(lr){6-7}\cmidrule(l){8-9}
        $\alpha \beta \inv{\gamma}{a}$ & $000$ & $000$ & $011$ & $011$ & $101$ & $101$ & $110$ & $110$\\
        \addlinespace
        $a'$ & $0$ & $1$ & $0$ & $1$ & $0$ & $1$ & $0$ & $1$\\
        \midrule
        $p$ & $1$   & $0$ & $1/2$ &  $0$   & $1/2$ & $1/2$ & $1/4$ & $1/4$\\
        $q$ & $0$   & $0$ & $0$   &  $1/2$ & $0$   & $0$   & $1/4$ & $1/4$\\
        \bottomrule
    \end{tabular}
    \end{minipage}
    \end{center}

\textbf{Case 1.} $\alpha = 0$, i.e. $\cctn{\alpha'}{\alpha} = \alpha'0$.  First of all, \begin{align*}
     \adpxr{\alpha'0}{\alpha'0}{0}{n-1} \stackrel{a'= 0}= \cadp{0}{\alpha'}{\alpha'}{0} = \adp{\alpha'}{\alpha'}{0}.
\end{align*}    
According to the first four columns of the table above, $\adpxr{\alpha'0}{\beta'\beta}{\cctn{\gamma}{\gamma'}}{n-1}$ takes one the following values: $\cadp{a}{\alpha'}{\beta'}{\gamma'}$, $0$, $\frac{1}{2} \cadp{a}{\alpha'}{\beta'}{\gamma'}$ and $\frac{1}{2} \cadp{a}{\overline{\overline{\alpha'}}}{\overline{\beta'}}{\gamma'}$.
In light of Lemma~\ref{lemma:cadp0max}, it is not difficult to see that any of them is not more than  $\adp{\alpha'}{\alpha'}{0}$. The first point is proved.

\textbf{Case 2.} $\alpha = 1$, i.e. $\cctn{\alpha'}{\alpha} = \alpha'1$. According to the last four columns of the table above, $\adpxr{\alpha'1}{\beta'\beta}{\cctn{\gamma}{\gamma'}}{n-1}$ takes one of the following values:
\begin{align}
       &\frac{1}{2}\cadp{a}{\alpha'}{\beta'}{\gamma'} \label{eq:tablecol5} \\
       &\frac{1}{2}\cadp{a}{\overline{\alpha'}}{\beta'}{\gamma'} \label{eq:tablecol6}\\
       &\frac{1}{4}\cadp{a}{\alpha'}{\beta'}{\gamma'} + \frac{1}{4}\cadp{a}{\overline{\alpha'}}{\overline{\beta'}}{\gamma'} \label{eq:tablecol7}\\
       &\frac{1}{4}\cadp{a}{\overline{\alpha'}}{\beta'}{\gamma'} + \frac{1}{4}\cadp{a}{\alpha'}{\overline{\beta'}}{\gamma'} \label{eq:tablecol8}
\end{align}

\textbf{Case 2.1} $\alpha'_{n - 2} = 0$. According to the second point of the theorem, let us define
$
  m_0 = \adpxr{\alpha'1}{\alpha'0}{2^{n-1}}{n-1} \stackrel{a'= 0}{=} 
  \frac{1}{2} \cadp{0}{\alpha'}{\alpha'}{0} = \frac{1}{2} \adp{\alpha'}{\alpha'}{0}.
$
Then by Lemmas~\ref{lemma:cadp0max} and \ref{lemma:adpanamin} (they are marked bellow as \ref{lemma:cadp0max} and \ref{lemma:adpanamin} respectively) we obtain that
\begin{align*}
    m_0 &= \frac{1}{2}\adp{\alpha'}{\alpha'}{0} \stackrel{\ref{lemma:cadp0max}}{\geq} (\ref{eq:tablecol5}),\\
    m_0 &\stackrel{\ref{lemma:adpanamin}}{\geq} \frac{1}{2}\adp{\overline{\alpha'}}{\overline{\alpha'}}{0} \stackrel{\ref{lemma:cadp0max}}{\geq}  (\ref{eq:tablecol6}),\\
    m_0 &= \frac{1}{4}\adp{\alpha'}{\alpha'}{0} + \frac{1}{4}\adp{\alpha'}{\alpha'}{0}\\ &\stackrel{\ref{lemma:adpanamin}}{\geq} \frac{1}{4}\adp{\alpha'}{\alpha'}{0} + \frac{1}{4}\adp{\overline{\alpha'}}{\overline{\alpha'}}{0} \stackrel{\ref{lemma:cadp0max}}{\geq}  (\ref{eq:tablecol7}),  (\ref{eq:tablecol8}).
\end{align*}
The second point is proved.

\textbf{Case 2.2} $\alpha'_{n - 2} = 1$. According to the third point of the theorem, let us define
$
    m_1 = \adpxr{\alpha'1}{\overline{\alpha'}0}{2^{n-1}}{n-1} \stackrel{a'= 1}{=} 
  \frac{1}{2} \cadp{0}{\overline{\alpha'}}{\overline{\alpha'}}{0} = \frac{1}{2} \adp{\overline{\alpha'}}{\overline{\alpha'}}{0}.
$
Similarly to the previous case, we obtain that
\begin{align*}
    m_1 &\stackrel{\ref{lemma:adpanamin}}{\geq} \frac{1}{2}\adp{\alpha'}{\alpha'}{0} \stackrel{\ref{lemma:cadp0max}}{\geq} (\ref{eq:tablecol5}),\\
    m_1 &= \frac{1}{2}\adp{\overline{\alpha'}}{\overline{\alpha'}}{0}
    \stackrel{\ref{lemma:cadp0max}}{\geq}  (\ref{eq:tablecol6}),\\
    m_1 &= \frac{1}{4}\adp{\overline{\alpha'}}{\overline{\alpha'}}{0} + \frac{1}{4}\adp{\overline{\alpha'}}{\overline{\alpha'}}{0}\\
    &\stackrel{\ref{lemma:adpanamin}}{\geq} \frac{1}{4}\adp{\alpha'}{\alpha'}{0} + \frac{1}{4}\adp{\overline{\alpha'}}{\overline{\alpha'}}{0}
    \stackrel{\ref{lemma:cadp0max}}{\geq}  (\ref{eq:tablecol7}),  (\ref{eq:tablecol8}).
\end{align*}
The theorem is proved.
\end{proof}
\begin{remark}
   The argument symmetries from Theorem~\ref{th:symmetries} guarantee that there are at least two optimal differentials if the first argument $\alpha' \in \zspc{n} \setminus \{0, 1, 2^{n - 1} - 1, 2^{n - 1}, 2^{n - 1} + 1, 2^{n} - 1\}$. Indeed, 
$$
    \max_{\beta, \gamma \in \zspc{n}}\adpxr{\alpha'}{\beta}{\gamma}{n - 1} = \adpxr{\alpha'}{\beta'}{\gamma'}{n - 1} = \adpxr{\alpha'}{-\beta'}{\gamma'}{n - 1}
$$
for $\beta'$ and $\gamma'$ given in the theorem. Also, $\beta' \neq -\beta'$ for the considered $\alpha'$.  Unfortunately, we cannot extend this symmetry since $\alpha'$ is fixed and $\gamma' = -\gamma'$.
\end{remark} 

Similarly to Corollary~\ref{cor:onebitleft_adp}, there is the connection with the maximums of $\adpsym$.
\begin{corollary}\label{cor:onebitright_adp}
    Let $\alpha \in \zspc{n-1}$. Then
    $$
        \underset{\beta, \gamma \in \zspc{n}}{\mathrm{max}} \ \adpxr{\alpha0}{\beta}{\gamma}{n-1} = \adp{\alpha0}{\alpha0}{0} = \underset{\beta, \gamma \in \zspc{n}}{\mathrm{max}} \ \adp{\alpha0}{\beta}{\gamma}.
    $$
\end{corollary}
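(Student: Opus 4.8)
The plan is to chain the three previously established facts together, exactly as in the proof of Corollary~\ref{cor:onebitleft_adp}, only now using the one bit right rotation instead of the one bit left rotation. First I would invoke the first point of Theorem~\ref{th:max1right}, which already pins down the maximizing differential: for $\alpha \in \zspc{n-1}$ it states that $\max_{\beta, \gamma \in \zspc{n}} \adpxr{\alpha0}{\beta}{\gamma}{n-1} = \adpxr{\alpha0}{\alpha0}{0}{n-1}$. This is the only substantial ingredient, and it is already proved.

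Next I would translate this XR-probability into the corresponding XOR-probability via Proposition~\ref{prop:adpxr_x_g0}. Because the output difference of the maximizing differential equals $0$, the proposition applies directly and gives $\adpxr{\alpha0}{\alpha0}{0}{n-1} = \adp{\alpha0}{\alpha0}{0}$, independently of the rotation amount $r = n-1$.

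Finally I would match this value against the known maximum of $\adpsym$. By \cite[Theorem 2]{MouhaEtAl2021}, for any fixed first argument the maximum over the remaining two is attained on the diagonal with zero output difference, i.e. $\max_{\beta, \gamma \in \zspc{n}} \adp{\alpha0}{\beta}{\gamma} = \adp{\alpha0}{\alpha0}{0}$. Stringing the three equalities together yields the full chain asserted by the corollary.

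There is essentially no obstacle here, since every step is a direct citation and all of the analytic work (the case split on $\alpha'_{n-2}$ and the appeals to Lemmas~\ref{lemma:cadp0max} and~\ref{lemma:adpanamin}) has already been absorbed into Theorem~\ref{th:max1right}. The only point that needs a moment of care is to confirm that the maximizer furnished by Theorem~\ref{th:max1right} has output difference exactly $0$, so that Proposition~\ref{prop:adpxr_x_g0} is applicable; it is, the maximizer being $\adpxr{\alpha0}{\alpha0}{0}{n-1}$. This also explains why the corollary is restricted to first arguments of the form $\alpha0$: in the other two cases of Theorem~\ref{th:max1right} the optimal output difference is $2^{n-1} \neq 0$, so the clean reduction through Proposition~\ref{prop:adpxr_x_g0} is unavailable and the coincidence with the maximum of $\adpsym$ need not persist.
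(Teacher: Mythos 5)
Your proposal is correct and follows exactly the paper's own argument: the paper also derives this corollary by combining the first point of Theorem~\ref{th:max1right}, Proposition~\ref{prop:adpxr_x_g0} (applicable because the maximizing output difference is $0$), and \cite[Theorem~2]{MouhaEtAl2021}. Your closing remark about why the statement is restricted to first arguments of the form $\alpha0$ is a helpful observation consistent with the paper's comment that the maximums coincide only ``in a half cases.''
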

\noindent The proof directly follows from Theorem~\ref{th:max1right}, Proposition~\ref{prop:adpxr_x_g0} and \cite[Theorem~2]{MouhaEtAl2021}.
Thus, in a half cases the considered maximums of $\adpsym$ and $\adpxrsym$ coincide. 

\section{Impossible differentials of the XR-operation}\label{sec:inconsistentdiffxr}

In this section, we describe all arguments such that $\adpxrsym$ takes $0$ on them, i.e. all impossible differentials of the function $(x \oplus y) \lll r$. The section is divided into three parts. Section~\ref{sec:patterns} defines patterns for octal words that are used to describe differentials. In Section~\ref{sec:impossibledifferentialsdescr}, all impossible differentials of the XR-operation 
are described. Finally, Section~\ref{sec:impossibledifferentialsest} provides some estimations for the number of the all impossible differentials. Note that the knowledge of all impossible differentials may be helpful for constructing impossible differential attacks.

\subsection{Patterns for octal words}\label{sec:patterns}

Let $\alpha, \beta, \gamma \in \zspc{n}$. To classify octal words $\omega = \omega(\alpha, \beta, \gamma)$, we use patterns that are similar to regular expressions. Pattern elements are presented in Table~\ref{table:octalwordpatterns}. The word $\omega$ satisfies a pattern if all octal symbols of the word satisfy corresponding pattern symbols. Let us consider several examples.
\begin{itemize}
    \item The word $\omega$ satisfies \texttt{[.*e]} $\Longleftrightarrow$ the least significant octal symbol $\omega_{n - 1}$ is $0, 3, 5$ or $6$. In other words, $\omega_{n - 1}$ is of even weight or, equivalently, $\alpha_{n - 1} \oplus \beta_{n - 1} \oplus \gamma_{n - 1} = 0$. All other $\omega_{0}, \ldots, \omega_{n - 2}$ are arbitrary.
        
    \item The word $\omega$ satisfies \texttt{[.*d0*]} $\Longleftrightarrow$ the first nonzero octal symbol of $\omega$ (starting with the least significant bits) exists and belongs to $\{1, 2, 4, 7\}$, i.e. it is of odd weight.

\end{itemize}

\begin{table}[h]
\begin{center}
\begin{tabular}{@{}ll@{}}
        \toprule
        Pattern symbol & Octal word elements\\
        \midrule
        \texttt{[} & the begin (most significant bits)\\
        \texttt{]} & the end (least significant bits)\\
        \texttt{.} & $0, 1, \ldots, 7$ (any octal symbol)\\
        \texttt{e} and \texttt{d} & $0, 3, 5, 6$ and $1, 2, 4, 7$ resp.\\
        \texttt{$\widehat{\mathtt{t}}$} & $t$, $t \oplus 3$, $t \oplus 5$\\
        \texttt{$\mathtt{\alpha_0}$} and \texttt{$\mathtt{\alpha_1}$}& $0, 1, 2, 3$ and $4, 5, 6, 7$ respectively\\
        \texttt{$\mathtt{\beta_0}$} and \texttt{$\mathtt{\beta_1}$} & $0, 1, 4, 5$ and $2, 3, 6, 7$ respectively\\
        \texttt{$\mathtt{\gamma_0}$} and \texttt{$\mathtt{\gamma_1}$} & $0, 2, 4, 6$ and $1, 3, 5, 7$ respectively\\
        \texttt{$\mathtt{0_6}$} and \texttt{$\mathtt{1_7}$} & $0, 6$ and $1, 7$ respectively\\
        \texttt{s*} & no symbols, \texttt{s}, \texttt{ss}, \texttt{sss}, \texttt{ssss}, \ldots \\
        \bottomrule
    \end{tabular}
\caption{Pattern elements}
\label{table:octalwordpatterns}
\end{center}
\end{table}    

For instance, Corollary~\ref{cor:AdpMatrixZeros} is more compact in terms of patterns.
\begin{proposition}[Lipmaa et al. \cite{LipmaaEtAl2004}]\label{prop:adpzeropattern}
    Let $\alpha, \beta, \gamma \in \zspc{n}$. Then $\adp{\alpha}{\beta}{\gamma} = 0$ $\iff$ $\omega(\alpha, \beta, \gamma)$ satisfies {\rm\texttt{[.*d0*]}}.
\end{proposition}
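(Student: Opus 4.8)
The plan is to treat this proposition as nothing more than a reformulation of Corollary~\ref{cor:AdpMatrixZeros} in the pattern language of Table~\ref{table:octalwordpatterns}, so the whole argument reduces to translating the regular expression \texttt{[.*d0*]} into the positional condition appearing there. First I would spell out the semantics of the pattern: reading octal symbols from the most significant $\omega_0$ toward the least significant $\omega_{n-1}$, the word $\omega(\alpha,\beta,\gamma)$ matches \texttt{[.*d0*]} precisely when there is an index $k \in \{0, \ldots, n-1\}$ with $\omega_k \in \{1,2,4,7\}$ (the odd-weight symbol consumed by \texttt{d}) and $\omega_{k+1} = \cdots = \omega_{n-1} = 0$ (the possibly empty block consumed by \texttt{0*}), the prefix $\omega_0, \ldots, \omega_{k-1}$ being arbitrary (consumed by \texttt{.*}).

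Next I would observe that, since the symbol matched by \texttt{d} is nonzero while every less significant symbol must vanish, the index $k$ is forced to be the least significant nonzero position of $\omega$. This is exactly the index $k$ singled out in Corollary~\ref{cor:AdpMatrixZeros}, so matching \texttt{[.*d0*]} is equivalent to saying that the first nonzero octal symbol of $\omega$, counted from the least significant end, exists and lies in $\{1,2,4,7\}$. By that corollary this condition is equivalent to $\adp{\alpha}{\beta}{\gamma} = 0$, which settles both implications simultaneously.

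Finally I would dispose of the degenerate case $\alpha = \beta = \gamma = 0$, where $\omega = (0, \ldots, 0)$ has no nonzero symbol and hence cannot supply the odd-weight symbol required by \texttt{d}; so $\omega$ does not match \texttt{[.*d0*]}, and correspondingly $\adp{0}{0}{0} = 1 > 0$, making both sides of the equivalence false. The only point demanding any care is the precise reading of the regex, in particular that \texttt{0*} may match the empty block (so that a word whose least significant symbol is odd is allowed) and that the all-zero word is excluded; past this bookkeeping the statement is a mechanical restatement of the corollary, and I expect no genuine obstacle.
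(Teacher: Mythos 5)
Your proposal is correct and coincides with the paper's treatment: the paper presents Proposition~\ref{prop:adpzeropattern} as an immediate restatement of Corollary~\ref{cor:AdpMatrixZeros} in the pattern language (``Corollary~\ref{cor:AdpMatrixZeros} is more compact in terms of patterns'') and offers no separate proof, so your careful unpacking of the semantics of \texttt{[.*d0*]} --- including the observation that \texttt{0*} may be empty and that the all-zero word fails to match --- is exactly the bookkeeping the paper leaves implicit.
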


Let us define symmetries for patterns.
\begin{enumerate}
    \item A pattern $p_1$ is $\alpha\beta$-symmetric to a pattern $p_2$ if for any $\alpha, \beta, \gamma \in \zspc{n}$ the following holds: $\omega(\alpha, \beta, \gamma)$ satisfies $p_1$ $\Longleftrightarrow$ $\omega(\beta, \alpha, \gamma)$ satisfies $p_2$;
    \item $p_1$ is $\overline{\gamma}$-symmetric to $p_2$ if for any $\alpha, \beta, \gamma \in \zspc{n}$ the following holds: $\omega(\alpha, \beta, \gamma)$ satisfies $p_1$ $\Longleftrightarrow$ $\omega(\alpha, \beta, \overline{\gamma})$ satisfies $p_2$. 
\end{enumerate}
Also, a pattern $p_1$ satisfies a pattern $p_2$ if any word satisfying $p_1$ satisfies $p_2$ as well.

\subsection{The characterization of the impossible differentials}\label{sec:impossibledifferentialsdescr}

We start with determining if auxiliary $\cadppar{c}$ and $\padppar{a}{b}$ are zero.
\begin{proposition}\label{prop:cadpzeros}
    Let $\alpha, \beta, \gamma \in \zspc{n}$. Then
    \begin{enumerate}
        \item $\cadp{0}{\alpha}{\beta}{\gamma} \neq 0$ $\iff$ $\adp{\alpha}{\beta}{\gamma} \neq 0$;
        \item $\cadp{1}{\alpha}{\beta}{\gamma} \neq 0$ $\iff$ $\adp{\alpha}{\beta}{\gamma} \neq 0$ and $\gamma \neq 0$.
    \end{enumerate}
\end{proposition}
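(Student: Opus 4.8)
The plan is to derive both equivalences from the identity $\adp{\alpha}{\beta}{\gamma} = \cadp{0}{\alpha}{\beta}{\gamma} + \cadp{1}{\alpha}{\beta}{\gamma}$ (Theorem~\ref{th:AdpMatrix}) together with the observation that $v = A_{\omega_0}\cdots A_{\omega_{n-1}} e_0^T$ has nonnegative entries, since every $A_t$ and $e_0^T$ are entrywise nonnegative. Consequently $0 \le \cadp{0}{\alpha}{\beta}{\gamma} \le \adp{\alpha}{\beta}{\gamma}$ and $0 \le \cadp{1}{\alpha}{\beta}{\gamma} \le \adp{\alpha}{\beta}{\gamma}$, so both forward implications are immediate: if either $\cadppar{c}$ is nonzero then $\adp{\alpha}{\beta}{\gamma}>0$. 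For the second part I would additionally invoke item~7 of Theorem~\ref{th:padpsymmetries}, namely $\cadp{1}{\alpha}{\beta}{0} = 0$, which yields $\cadp{1}{\alpha}{\beta}{\gamma} \neq 0 \Rightarrow \gamma \neq 0$.

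The substance lies in the two backward implications, and I would first reduce the second to the first. Assuming $\adp{\alpha}{\beta}{\gamma} \neq 0$ and $\gamma \neq 0$, the symmetry $\cadp{1}{\alpha}{\beta}{\gamma} = \cadp{0}{\alpha}{\beta}{-\gamma}$ (item~3 of Theorem~\ref{th:cadpsymmetries}, valid for $\gamma \neq 0$) turns the goal into $\cadp{0}{\alpha}{\beta}{-\gamma} \neq 0$, which by part~1 is equivalent to $\adp{\alpha}{\beta}{-\gamma} \neq 0$. So it suffices to prove the auxiliary fact that negating the third argument preserves the (non)vanishing of $\adpsym$: $\adp{\alpha}{\beta}{\gamma} = 0 \iff \adp{\alpha}{\beta}{-\gamma} = 0$. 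I would establish this from Proposition~\ref{prop:adpzeropattern} by showing $\omega(\alpha,\beta,\gamma)$ and $\omega(\alpha,\beta,-\gamma)$ have the same least significant nonzero octal symbol: writing $\ell$ for the position of the least significant $1$ of $\gamma$, the bits of $\gamma$ and $-\gamma$ coincide at $\ell$ and at all less significant positions, while the differing more significant positions are irrelevant once a nonzero symbol has appeared. Hence both words satisfy {\rm\texttt{[.*d0*]}} or neither does.

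It then remains to prove part~1 backward, $\cadp{0}{\alpha}{\beta}{\gamma} = 0 \Rightarrow \adp{\alpha}{\beta}{\gamma} = 0$, which I would obtain by proving both backward implications simultaneously by induction on $n$ via the recurrence of Theorem~\ref{th:pcadprecurrence}, with Table~\ref{table:pcadp1bit} as the base case $n=1$. For the step, let $p$ be the new least significant octal symbol: the odd-weight case is immediate (everything vanishes) and the weight-zero case reduces verbatim to the hypothesis. When $\wt{p}=2$, nonnegativity forces every summand of $\cadp{0}{\ldots}$ to vanish; pairing each such summand with the corresponding summand of $\cadp{1}{\ldots}$ (same flipped arguments, complementary first index), the summands with $q_2=0$ are killed by the part-1 hypothesis at level $n$ and those with $q_2=1$ by the part-2 hypothesis at level $n$.

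The main obstacle is the lone degenerate configuration of the $\wt{p}=2$ step where $p_2 = 1$ and $\overline{\gamma} = 0$ (the high part of the third argument is all ones): there item~7 of Theorem~\ref{th:padpsymmetries} makes the part-2 hypothesis vacuous, so the $q_2=1$ summands cannot be disposed of directly. I would resolve this by showing that the premise $\cadp{0}{\ldots}=0$ is in this case never satisfied. Its surviving ($q_2=0$) summands are $\cadp{0}{\inv{\alpha}{q_0}}{\inv{\beta}{q_1}}{\overline{0}}$ for the two admissible pairs $(q_0,q_1)\preceq p$, and by the part-1 hypothesis together with Proposition~\ref{prop:adpzeropattern} applied to the all-ones third argument, $\adp{\mu}{\nu}{\overline{0}}=0$ holds exactly when $\mu$ and $\nu$ agree in their least significant bit. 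Since the two admissible pairs differ precisely by complementing that bit in one argument, exactly one of the two summands is nonzero, contradicting the premise. Hence the implication holds vacuously, which closes the induction and the proof.
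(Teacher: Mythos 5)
Your proof is correct, and while it rests on the same skeleton as the paper's — induction on $n$ through the recurrence of Theorem~\ref{th:pcadprecurrence}, with Table~\ref{table:pcadp1bit} as the base and nonnegativity giving the easy directions — the inductive step is organized genuinely differently. The paper argues positivity directly: for each even-weight appended symbol $p \neq 0$ it exhibits, by inspecting the eight possible last symbols, a single $q \preceq p$ whose flipped arguments have a nonzero even-weight last symbol (and, where needed, a nonzero third argument), so the induction hypothesis makes that one summand positive; no contradiction and no degenerate configuration ever appear. You instead assume $\cadp{0}{\alpha p_0}{\beta p_1}{\gamma p_2} = 0$, propagate vanishing to the complementary summands of $\cadppar{1}$ via the pairing $q_2 \mapsto \overline{q_2}$ together with both induction hypotheses, and contradict $\adpsym = \cadppar{0} + \cadppar{1} > 0$ from Corollary~\ref{cor:AdpMatrixZeros}. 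This makes the generic step cleaner but forces the separate treatment of $\gamma = \overline{0}$ with $p_2 = 1$, where the part-2 hypothesis is vacuous and you fall back on exactly the kind of direct positivity computation (via \texttt{[.*d0*]} applied to an all-ones third argument) that the paper uses throughout. Your reduction of part~2 to part~1 through item~3 of Theorem~\ref{th:cadpsymmetries}, combined with the observation that replacing $\gamma$ by $-\gamma$ does not change whether $\omega$ satisfies \texttt{[.*d0*]}, is an extra ingredient the paper does not need; it is legitimate, since Theorem~\ref{th:cadpsymmetries} is established before and independently of this proposition, so there is no circularity. Both routes are sound; the paper's is shorter because a single well-chosen positive summand settles each case, whereas yours trades that case-by-case choice for a uniform pairing argument at the cost of one exceptional configuration.
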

\begin{proof}
    By induction: if $n = 1$, Table~\ref{table:pcadp1bit} provides the base of the induction.
    We suppose that the proposition holds for $n$. Let us show that the statement holds for $n + 1$.
    Let $\alpha, \beta, \gamma \in \zspc{n}, p \in \zspc{3}, c \in \zspc{}$. According to Theorem~\ref{th:pcadprecurrence}, 
    \begin{align*}
        \cadp{c}{\alpha p_0}{\beta p_1}{\gamma p_2} &=
            \frac{1}{2^{\wt{p}}}\sum\limits_{q \in \zspc{3}, q \preceq p}\cadp{c \oplus q_2}{\inv{\alpha}{q_0}}{\inv{\beta}{q_1}}{\inv{\gamma}{q_2}}
    \end{align*}
    if $\wt{p}$ is even. Also, $\cadp{c}{\alpha p_0}{\beta p_1}{\gamma p_2} = 0$ if $\wt{p}$ is odd and, similarly, $\adp{\alpha p_0}{\beta p_1}{\gamma p_2} = 0$ by Corollary~\ref{cor:AdpMatrixZeros}. Thus, we need only to consider the case of even $\wt{p}$.

    \textbf{Case 1.} $p = (0, 0, 0)$. According to the formula above, $\cadp{c}{\alpha p_0}{\beta p_1}{\gamma p_2} = \cadp{c}{\alpha}{\beta}{\gamma}$. The induction hypothesis and Corollary~\ref{cor:AdpMatrixZeros} prove the induction step.
    
    For all other cases $\adp{\alpha p_0}{\beta p_1}{\gamma p_2} > 0$ by Corollary~\ref{cor:AdpMatrixZeros}.
    
    \textbf{Case 2.} $p = \{(0, 1, 1), (1, 0, 1)\}$. Exactly one of $\{(\alpha_{n - 1}, \beta_{n - 1}, \gamma_{n - 1}) \oplus q : q \preceq p \text{ and } \gamma_{n - 1} \oplus q_2 = 1\}$ is a nonzero vector of even weight. Choosing corresponding $q$, we obtain that $\adp{\inv{\alpha}{q_0}}{\inv{\beta}{q_1}}{\inv{\gamma}{q_2}} > 0$ by Corollary~\ref{cor:AdpMatrixZeros}. By induction, $\cadp{c \oplus q_2}{\inv{\alpha}{q_0}}{\inv{\beta}{q_1}}{\inv{\gamma}{q_2}} > 0$ as well since $\inv{\gamma}{q_2} \neq 0$. Thus, $\cadp{c}{\alpha p_0}{\beta p_1}{\gamma p_2} > 0$.  
    
    \textbf{Case 3.} $p = (1, 1, 0)$, i.e. $q_2 = 0$ if $q \preceq p$. At least one of $\{(\alpha_{n - 1}, \beta_{n - 1}, \gamma_{n - 1}) \oplus q : q \preceq p \}$ is a nonzero vector of even weight. Choosing corresponding $q$, we obtain that $\adp{\inv{\alpha}{q_0}}{\inv{\beta}{q_1}}{\inv{\gamma}{q_2}} > 0$ by Corollary~\ref{cor:AdpMatrixZeros}. 
    Hence, $\cadp{0}{\alpha p_0}{\beta p_1}{\gamma p_2} > 0$ by induction. 
    If $\gamma \neq 0$, $\cadp{1}{\alpha p_0}{\beta p_1}{\gamma p_2} > 0$ as well.
    If $\gamma = 0$, i.e. $\gamma p_2 = 0$, then $\cadp{1}{\inv{\alpha}{q_0}}{\inv{\beta}{q_1}}{\inv{\gamma}{q_2}} = \cadp{1}{\inv{\alpha}{q_0}}{\inv{\beta}{q_1}}{\gamma} = 0$ for all $q \preceq p$ by induction. As a result, $\cadp{1}{\alpha p_0}{\beta p_1}{\gamma p_2} = 0$. 
\end{proof}

However, it is not easy to determine if $\padppar{a}{b}$ is zero.
\begin{theorem}\label{th:padpzeros}
    Let $\alpha, \beta, \gamma \in \zspc{n}$ and $a, b \in \zspc{}$. Then $\padp{a}{b}{\alpha}{\beta}{\gamma} = 0$ $\iff$ $\omega(\alpha, \beta, \gamma)$ satisfies some pattern given in Table~\ref{table:padpzeros} in the column marked as $\padppar{a}{b}$.
\end{theorem}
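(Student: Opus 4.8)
The plan is to argue by induction on $n$, mirroring the proof of Proposition~\ref{prop:cadpzeros} but now tracking a full pattern rather than a single bit of information. The engine is the recurrence of Theorem~\ref{th:pcadprecurrence}. First I would record the elementary fact that every value $\padp{a}{b}{\alpha}{\beta}{\gamma}$ is nonnegative: this holds for $n = 1$ by Table~\ref{table:pcadp1bit}, and the inductive step is immediate because the recurrence expresses a length-$(n+1)$ value of $\padppar{a}{b}$ as a sum of length-$n$ values of $\padppar{\cdot}{\cdot}$ with the positive weight $2^{-\wt{p}}$. Nonnegativity is what makes the characterization tractable: a sum of nonnegative terms vanishes iff each term vanishes, so for $\wt{p}$ even, $\padp{a}{b}{\alpha p_0}{\beta p_1}{\gamma p_2} = 0$ iff $\padp{a \oplus q_0}{b \oplus q_1}{\inv{\alpha}{q_0}}{\inv{\beta}{q_1}}{\inv{\gamma}{q_2}} = 0$ for every $q \preceq p$. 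The base case $n = 1$ is then just a direct comparison of the zero entries of Table~\ref{table:pcadp1bit} against the length-one instances of the patterns in Table~\ref{table:padpzeros}.

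For the inductive step, fix the least significant octal symbol $p = (\alpha_n, \beta_n, \gamma_n)$ of a length-$(n+1)$ word. If $\wt{p}$ is odd, Theorem~\ref{th:pcadprecurrence} gives $\padppar{a}{b} = 0$ outright, and I would check that each claimed pattern indeed forces a zero here, consistently with Corollary~\ref{cor:AdpMatrixZeros} (an odd-weight final symbol already makes $\adpsym = \sum_{a,b}\padppar{a}{b}$ vanish). If $\wt{p}$ is even, the observation above turns the question into a purely combinatorial one: the length-$(n+1)$ word with final symbol $p$ is a zero of $\padppar{a}{b}$ exactly when, for every $q \preceq p$, the prefix word with each octal symbol XOR-shifted by $q$ is a zero of $\padppar{a \oplus q_0}{b \oplus q_1}$, i.e. (by the induction hypothesis) satisfies the length-$n$ pattern from Table~\ref{table:padpzeros} in the column $\padppar{a \oplus q_0}{b \oplus q_1}$. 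Thus the step reduces to showing that the intersection, over $q \preceq p$, of these XOR-shifted prefix patterns, restricted to words ending in $p$, coincides with the single pattern in the $\padppar{a}{b}$ column. This must be checked for each even-weight $p \in \{000, 011, 101, 110\}$ and each $(a,b) \in \zspc{2}$.

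To keep this casework manageable I would exploit the symmetries of Theorem~\ref{th:padpsymmetries}. Point~4 shows that $\padppar{a}{b}$ is invariant under $\gamma \mapsto -\gamma$, which constrains how the patterns depend on the $\gamma$-coordinate and explains why, when $p_2 = 1$, the two summands arising from $q_2 \in \{0,1\}$ share the index $(a \oplus q_0, b \oplus q_1)$ and merge cleanly. Points~5 and~6 relate the four patterns to one another via negation of $\alpha$ and $\beta$, cutting the number of genuinely distinct verifications. The degenerate situations $\alpha = 0$, $\beta = 0$, or the all-zero word are treated separately using point~7, which forces the corresponding values to vanish and must be matched against the boundary behaviour of the patterns (the $\alpha_0, \beta_0, \gamma_0$ symbol classes of Table~\ref{table:octalwordpatterns}).

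The main difficulty lies in the pattern arithmetic of the inductive step: one must verify that the family of patterns in Table~\ref{table:padpzeros} is closed under the operation ``append an even-weight least significant symbol $p$ and intersect the XOR-$q$-shifted prefix patterns over $q \preceq p$'', so that this intersection always collapses to a single pattern of the prescribed shape. This is a finite, symbol-by-symbol check for each triple $(p, a, b)$, but it is delicate: the wildcard prefix \texttt{[.*} and the Kleene-starred blocks interact with the shifted symbol classes, and the special classes (such as $\widehat{\mathtt{t}}$) are engineered precisely to make the family closed. I would isolate the exact append-and-shift rules as auxiliary lemmas in Appendix~\ref{sec:proofofpadpzeros} and invoke them to drive the induction.
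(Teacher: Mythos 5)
Your overall strategy is genuinely different from the paper's. The paper does \emph{not} induct on the least significant symbol: its proof (Appendix~\ref{sec:proofofpadpzeros}) works from the most significant end of the word. After Lemma~\ref{lemma:padpzeros} disposes of the cases $\adp{\alpha}{\beta}{\gamma}=0$, $\alpha=0$ or $\beta=0$, and forces $\omega_0=(\overline{a},\overline{b},c)$, it shows that $\padp{a}{b}{\alpha}{\beta}{\gamma}=\frac14 v_{(a,b,c)}$ for $v=A_{\omega_1}\cdots A_{\omega_{n-1}}e_0^T$, identifies $(a,b,c)=\omega_0\oplus 6$ as a ``determining coordinate'', proves that symbols in the class $\widehat{\mathtt{t}}$ preserve its vanishing, and then uses Lemma~\ref{lemma:akv} and Lemma~\ref{lemma:sumtwozero} to characterize when the first symbol outside that class kills the coordinate. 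The patterns of Table~\ref{table:padpzeros} are thereby \emph{derived} rather than verified, which is why they are all anchored at the most significant symbol. Your bottom-up induction via Theorem~\ref{th:pcadprecurrence} is a legitimate alternative in principle: nonnegativity of $\padppar{a}{b}$ is correct, so for even $\wt{p}$ the value on $\omega p$ vanishes iff all $2^{\wt{p}}$ summands vanish, and the odd-weight case is covered because \texttt{[.*d0*]} sits in every column. Spot checks (e.g.\ length-two words ending in the symbol $6$ for $\padppar{0}{0}$) confirm that the resulting identities do hold.

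The gap is that the decisive step is neither stated accurately nor carried out. What must be proved is: for each even-weight $p$ and each $(a,b)$, a word $\omega p$ satisfies \emph{some} pattern of the $\padppar{a}{b}$ column (a union of up to eleven starred patterns) if and only if, for \emph{every} $q\preceq p$, the shifted prefix $\omega\oplus(q,\ldots,q)$ satisfies some pattern of the $\padppar{a\oplus q_0}{b\oplus q_1}$ column. This is an equality between a union of regular languages and an intersection of up to four shifted unions, sixteen times over; it does not ``collapse to a single pattern of the prescribed shape'' as you assert, and it is not a symbol-by-symbol check --- the Kleene-starred blocks force a genuine automaton-level argument (for instance, whether appending $0$ to a word matching \texttt{[7$\widehat{\mathtt{7}}$*]} lands you in \texttt{[7$\widehat{\mathtt{7}}$*0$\mathtt{\gamma_0}$*]} depends on state information not captured by which patterns the prefix satisfies). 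Since every other ingredient of your plan (nonnegativity, the recurrence, the base case against Table~\ref{table:pcadp1bit}) is routine, these sixteen closure identities \emph{are} the theorem under your approach, and deferring them to unstated ``append-and-shift'' lemmas leaves the proof without its content. To complete it you would need to exhibit those lemmas explicitly, or else follow the paper's route, where the pattern structure falls out of the analysis of the determining coordinate instead of having to be confirmed after the fact.
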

Its proof is based on the rational series approach and can be found in Appendix~\ref{sec:proofofpadpzeros}. Now we are ready to determine if $\adpxrsym$ is zero.
\begin{table}
\begin{center}
\begin{tabular}{@{}cllll@{}}
	\toprule
	 & \multicolumn{1}{@{}c@{}}{$\padppar{0}{0}$} & \multicolumn{1}{@{}c@{}}{$\padppar{0}{1}$} & \multicolumn{1}{@{}c@{}}{$\padppar{1}{0}$} & \multicolumn{1}{@{}c@{}}{$\padppar{1}{1}$} \\
	\midrule
	1 & \multicolumn{4}{@{}c@{}}{\texttt{[.*d0*]}} \\
 \addlinespace
	2 &  \texttt{[6$\widehat{\mathtt{6}}$*7.*]} & \texttt{[4$\widehat{\mathtt{4}}$*5.*]} & \texttt{[2$\widehat{\mathtt{2}}$*3.*]} & \texttt{[0$\widehat{\mathtt{0}}$*]}\\
	3 &  \texttt{[7$\widehat{\mathtt{7}}$*6.*]} & \texttt{[4$\widehat{\mathtt{4}}$*0$\mathtt{\beta_0}$*]} & \texttt{[2$\widehat{\mathtt{2}}$*0$\mathtt{\alpha_0}$*]} & \texttt{[0$\widehat{\mathtt{0}}$*1.*]} \\
	4 &  \texttt{[7$\widehat{\mathtt{7}}$*0$\mathtt{\gamma_0}$*]} & \texttt{[5$\widehat{\mathtt{5}}$*]} & \texttt{[3$\widehat{\mathtt{3}}$*]} & \texttt{[0$\widehat{\mathtt{0}}$*2$\mathtt{\alpha_0}$*]} \\
	5 &  & \texttt{[5$\widehat{\mathtt{5}}$*4.*]} & \texttt{[3$\widehat{\mathtt{3}}$*2.*]} & \texttt{[0$\widehat{\mathtt{0}}$*4$\mathtt{\beta_0}$*]} \\
	6 &  & \texttt{[5$\widehat{\mathtt{5}}$*1$\mathtt{\beta_0}$*]} & \texttt{[3$\widehat{\mathtt{3}}$*1$\mathtt{\alpha_0}$*]} & \texttt{[1$\widehat{\mathtt{1}}$*0.*]} \\
	7 &  & \texttt{[5$\widehat{\mathtt{5}}$*2$\mathtt{\gamma_0}$*]} & \texttt{[3$\widehat{\mathtt{3}}$*4$\mathtt{\gamma_0}$*]} & \texttt{[1$\widehat{\mathtt{1}}$*3$\mathtt{\alpha_0}$*]} \\
	8 &  &  \texttt{[$\mathtt{\beta_0}$*]} &  \texttt{[$\mathtt{\alpha_0}$*]} & \texttt{[1$\widehat{\mathtt{1}}$*5$\mathtt{\beta_0}$*]} \\
	9 &  &  &  & \texttt{[1$\widehat{\mathtt{1}}$*6$\mathtt{\gamma_0}$*]} \\
	10 &  &  &  & \texttt{[$\mathtt{\alpha_0}$*]} \\
	11 &  &  &  & \texttt{[$\mathtt{\beta_0}$*]}  \\
	\bottomrule
\end{tabular}
\caption{Zero values of $\padppar{a}{b}$}
\label{table:padpzeros}
\end{center}
\end{table}

\begin{theorem}\label{th:adpXRzeros}
	Let $\alpha, \beta, \gamma \in \zspc{n -r}$, $\alpha', \beta', \gamma' \in \zspc{r}$, $\omega = \omega(\alpha, \beta, \gamma)$ and $\omega' = \omega(\alpha', \beta', \gamma')$. Then $\adpxr{\cctn{\alpha'}{\alpha}}{\cctn{\beta'}{\beta}}{\cctn{\gamma}{\gamma'}}{r} = 0$ $\iff$ $(\omega, \omega')$ satisfies some pattern X.Y from Table~\ref{table:zerosOfXR}, i.e. $\omega'$ satisfies the mark of column X and $\omega$ satisfies the element of row Y and column X. 
\end{theorem}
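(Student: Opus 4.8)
The plan is to start from the two-term decomposition of $\adpxrsym$ furnished by Corollary~\ref{cor:adrxrformula},
\begin{multline*}
\adpxr{\cctn{\alpha'}{\alpha}}{\cctn{\beta'}{\beta}}{\cctn{\gamma}{\gamma'}}{r} = \padp{a'}{0}{\alpha}{\beta}{\inv{\gamma}{a}}\,\cadp{a}{\inv{\alpha'}{a'}}{\beta'}{\gamma'} \\ + \padp{\overline{a'}}{1}{\alpha}{\beta}{\inv{\gamma}{a}}\,\cadp{a}{\inv{\overline{\alpha'}}{a'}}{\overline{\beta'}}{\gamma'},
\end{multline*}
and to exploit that each of the four factors is nonnegative: every matrix $A_i$ and every vector $L_{a,b}$, $L_c$, $e_0$ has nonnegative entries, so $\padppar{}{}$ and $\cadppar{}$ are sub-probabilities. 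Hence the left-hand side vanishes if and only if each of the two products vanishes, i.e. if and only if each product contains a zero factor. This reduces the theorem to a purely logical combination of four vanishing conditions: two on the $\padppar{}{}$-factors, which depend on $\omega$ (together with the boundary parity $a'$ coming from $\omega'$), and two on the $\cadppar{}$-factors, which depend on $\omega'$ (together with the boundary parity $a$ coming from $\omega$). Note that $a$ is the parity of the last octal symbol of $\omega$ and $a'$ that of $\omega'$, so these two parities will drive the eventual case split.

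Next I would translate each of the four conditions into a pattern. For the two $\cadppar{a}$-factors I would invoke Proposition~\ref{prop:cadpzeros}: for $a=0$ the factor vanishes exactly when the corresponding $\adpsym$-value does, i.e. when its octal word satisfies {\rm\texttt{[.*d0*]}} (Proposition~\ref{prop:adpzeropattern}), while for $a=1$ there is the extra clause that the third argument equal $0$. The subtlety is that the arguments carry the complementations $\inv{\alpha'}{a'}$, $\overline{\beta'}$ and $\inv{\overline{\alpha'}}{a'} = \inv{\alpha'}{\overline{a'}}$, so the octal word fed to the pattern is $\omega'$ with some coordinates flipped; I would track how complementing $\alpha'$ or $\beta'$ alters the bitwise parities, and hence the {\rm\texttt{[.*d0*]}} membership, recording each result as a condition on $\omega'$ indexed by $a$ and $a'$. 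For the two $\padppar{}{}$-factors I would apply Theorem~\ref{th:padpzeros} to the octal word $\omega(\alpha,\beta,\inv{\gamma}{a})$, selecting the column $\padppar{a'}{0}$ or $\padppar{\overline{a'}}{1}$ of Table~\ref{table:padpzeros} according to the subscripts, and using the $\overline{\gamma}$-symmetry of patterns to replace $\inv{\gamma}{a}$ by $\gamma$ when $a=1$.

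Finally I would assemble the four conditions. Within each of the cases on $(a,a')$, the vanishing criterion has the shape $\bigl(\text{pattern on }\omega' \text{ or pattern on }\omega\bigr) \land \bigl(\text{pattern on }\omega' \text{ or pattern on }\omega\bigr)$, which distributes into exactly the row/column cells of Table~\ref{table:zerosOfXR}: the column mark carries a pattern on $\omega'$ and the row entry a pattern on $\omega$, mirroring the product structure of the decomposition. I would verify cell by cell that the disjunctive normal form of the combined condition reproduces precisely the listed patterns, eliminating redundant or non-minimal patterns with the help of the $\alpha\beta$- and $\overline{\gamma}$-symmetries from Section~\ref{sec:patterns}, and separately handling the degenerate arguments ($\omega=0$, $\omega'=0$, or $\gamma'=0$) that the statements of Proposition~\ref{prop:cadpzeros} and Theorem~\ref{th:padpzeros} single out.

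The hard part will be the bookkeeping of the complementations. Both the $\gamma$-flip $\inv{\gamma}{a}$ inside the $\padppar{}{}$-factors and the $\alpha'$-/$\beta'$-flips inside the $\cadppar{}$-factors depend on the boundary parities $a,a'$, and these flips interact with the direction-sensitive patterns of Table~\ref{table:padpzeros} in an asymmetric way, so a naive expansion produces a far longer list than Table~\ref{table:zerosOfXR}. Organizing the case analysis so that the conditions collapse cleanly onto the comparatively small target table — rather than an unreduced one — is where the real effort lies, and the pattern symmetries defined in Section~\ref{sec:patterns} should be the principal tool for carrying out that reduction.
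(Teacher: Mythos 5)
Your plan coincides with the paper's proof: the same decomposition from Corollary~\ref{cor:adrxrformula}, the same observation that nonnegativity reduces vanishing to the four conditions $(x,x')=(0,0)$, $(y,y')=(0,0)$, $(x,y')=(0,0)$, $(x',y)=(0,0)$, and the same invocation of Proposition~\ref{prop:cadpzeros} and Theorem~\ref{th:padpzeros} followed by bookkeeping of the complementations via the $\overline{\gamma}$- and $\alpha\beta$-symmetries. What remains is only the cell-by-cell verification you defer, which is exactly how the paper's proof spends the bulk of its length.
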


\begin{table}
\begin{center}
\begin{minipage}{\linewidth}
{\footnotesize
\begin{tabular*}{\linewidth}{@{\extracolsep{\fill}}clllllll@{\extracolsep{\fill}}}
	\toprule
	 & \multicolumn{1}{@{}c@{}}{1} & \multicolumn{1}{@{}c@{}}{2} & \multicolumn{1}{@{}c@{}}{3} & \multicolumn{1}{@{}c@{}}{4} & \multicolumn{1}{@{}c@{}}{5} & \multicolumn{1}{@{}c@{}}{6} & \multicolumn{1}{@{}c@{}}{7}\\
	 & \multicolumn{1}{@{}c@{}}{\texttt{[.*d00*]}} & \multicolumn{1}{@{}c@{}}{\texttt{[.*e22*]}} & \multicolumn{1}{@{}c@{}}{\texttt{[.*e44*]}} & \multicolumn{1}{@{}c@{}}{\texttt{[.*d66*]}} & \multicolumn{1}{@{}c@{}}{\texttt{[.*d]}} & \multicolumn{1}{@{}c@{}}{\texttt{[.*]}} & \multicolumn{1}{@{}c@{}}{\texttt{[$\gamma_0$*]}}\\
	\midrule
	1 & \texttt{[$\alpha_0$*]} & \texttt{[$\alpha_0$*]} & \texttt{[$\beta_0$*]} & \texttt{[6$\widehat{\mathtt{6}}$*7.*]} & \texttt{[0*]} & \texttt{[.*d00*]} & \texttt{[.*d]} \\
	2 & \texttt{[$\beta_0$*]} & \texttt{[2$\widehat{\mathtt{2}}$*]} & \texttt{[4$\widehat{\mathtt{4}}$*]} & \texttt{[6$\widehat{\mathtt{6}}$*1$\gamma_1$*$\mathtt{1_7}$]} & \texttt{[1*]} &  \texttt{[.*e11*]} & \\
	3 & \texttt{[0$\widehat{\mathtt{0}}$*]} & \texttt{[2$\widehat{\mathtt{2}}$*3.*]} & \texttt{[4$\widehat{\mathtt{4}}$*5.*]} & \texttt{[7$\widehat{\mathtt{7}}$*6.*]} & & & \\
	4 & \texttt{[0$\widehat{\mathtt{0}}$*1.*]} & \texttt{[2$\widehat{\mathtt{2}}$*0$\alpha_0$*]} & \texttt{[4$\widehat{\mathtt{4}}$*0$\beta_0$*]} & \texttt{[7$\widehat{\mathtt{7}}$*0$\gamma_0$*$\mathtt{0_6}$]} &  &  & \\
	5 & \texttt{[0$\widehat{\mathtt{0}}$*2$\alpha_0$*]} & \texttt{[2$\widehat{\mathtt{2}}$*5$\gamma_1$*$\mathtt{1_7}$]} & \texttt{[4$\widehat{\mathtt{4}}$*3$\gamma_1$*$\mathtt{1_7}$]} &  &  &  & \\
	6 & \texttt{[0$\widehat{\mathtt{0}}$*4$\beta_0$*]} & \texttt{[3$\widehat{\mathtt{3}}$*]} & \texttt{[5$\widehat{\mathtt{5}}$*]} &  &  &  & \\
	7 & \texttt{[0$\widehat{\mathtt{0}}$*7]} & \texttt{[3$\widehat{\mathtt{3}}$*2.*]} & \texttt{[5$\widehat{\mathtt{5}}$*4.*]} &  &  &  & \\
	8 & \texttt{[0$\widehat{\mathtt{0}}$*7$\gamma_1$*$\mathtt{1_7}$]} & \texttt{[3$\widehat{\mathtt{3}}$*1$\alpha_0$*]} & \texttt{[5$\widehat{\mathtt{5}}$*1$\beta_0$*]} &  &  &  & \\
	9 & \texttt{[1$\widehat{\mathtt{1}}$*]} & \texttt{[3$\widehat{\mathtt{3}}$*4$\gamma_0$*$\mathtt{0_6}$]} & \texttt{[5$\widehat{\mathtt{5}}$*2$\gamma_0$*$\mathtt{0_6}$]}  &  &  &  & \\
	10 & \texttt{[1$\widehat{\mathtt{1}}$*0.*]} &  &  &  &  &  & \\
	11 & \texttt{[1$\widehat{\mathtt{1}}$*3$\alpha_0$*]} &  &  &  &  &  & \\
	12 & \texttt{[1$\widehat{\mathtt{1}}$*5$\beta_0$*]} &  &  &  &  &  & \\
	13 & \texttt{[1$\widehat{\mathtt{1}}$*6]} &  &  &  &  &  & \\
	14 & \texttt{[1$\widehat{\mathtt{1}}$*6$\gamma_0$*$\mathtt{0_6}$]} &  &  &  &  &  & \\
	\bottomrule
\end{tabular*}
}
\end{minipage}
\end{center}
\caption{Zero values of $\adpxrsym$}
\label{table:zerosOfXR}
\end{table}

\noindent For instance, $(\omega, \omega')$ satisfies 7.1 $\iff$ $\omega'$ satisfies \texttt{[$\mathtt{\gamma_0}$*]} and $\omega$ satisfies \texttt{[.*d]}. 
Also, $\omega$-parts of impossible differentials satisfy the following property: 
\begin{itemize}
    \item 1.1, 1.2, 2.1 and 3.1 are $\overline{\gamma}$-symmetric to itself;
    \item 1.3--1.8 are $\overline{\gamma}$-symmetric to 1.9--1.14;
    \item 2.2--2.5 are $\overline{\gamma}$-symmetric to 2.6--2.9;
    \item 3.2--3.5 are $\overline{\gamma}$-symmetric to 3.6--3.9; overall,  3.1--3.9 are $\alpha\beta$-symmetric to 2.1--2.9;
    \item 4.1--4.2 are $\overline{\gamma}$-symmetric to 4.3--4.4;
    \item 5.1 and 6.1 are $\overline{\gamma}$-symmetric to 5.2 and 6.2;
    \item 7.1 is the only case that is not $\overline{\gamma}$-symmetric to some other pattern.
\end{itemize}
\begin{proof}
    Let 
    \begin{align*}
        &a = \alpha_{n - r - 1} \oplus \beta_{n - r - 1} \oplus \gamma_{n - r - 1}, &&a' = \alpha'_{r - 1} \oplus \beta'_{r - 1} \oplus \gamma'_{r - 1},\\
        &x = \padp{a'}{0}{\alpha}{\beta}{\inv{\gamma}{a}}, &&y = \cadp{a}{\inv{\alpha'}{a'}}{\beta'}{\gamma'},\\
        &x' = \padp{\overline{a'}}{1}{\alpha}{\beta}{\inv{\gamma}{a}}, &&y' = \cadp{a}{\inv{\overline{\alpha'}}{a'}}{\overline{\beta'}}{\gamma'}.
    \end{align*}
    According to Corollary~\ref{cor:adrxrformula},
    \begin{align*}
     \adpxr{\cctn{\alpha'}{\alpha}}{\cctn{\beta'}{\beta}}{\cctn{\gamma}{\gamma'}}{r} &= x y + x' y'.
    \end{align*} 
    The value of $\adpxr{\cctn{\alpha'}{\alpha}}{\cctn{\beta'}{\beta}}{\cctn{\gamma}{\gamma'}}{r}$ is zero $\iff$ $x$, $x'$, $y$ and $y'$ satisfy any of the following conditions: $(y, y') = (0, 0)$; $(x, x') = (0, 0)$; $(x, y') = (0, 0)$; $(x', y) = (0, 0)$.

    \textbf{Case 1.} $y = 0$ and $y' = 0$. It is not difficult to see that both $\cadp{a}{\inv{\alpha'}{a'}}{\beta'}{\gamma'}$ and $\cadp{a}{\inv{\overline{\alpha'}}{a'}}{\overline{\beta'}}{\gamma'}$ are zero $\iff$ $\gamma' = 0$ and $a = 1$. Indeed, the least significant octal symbols of their arguments are even and at least one of them is not zero. This means that the corresponding $\adpsym$ is not zero due to Corollary~\ref{cor:AdpMatrixZeros}. Therefore, the corresponding $\cadppar{a}$ is zero $\iff$ $\gamma' = 0$ and $a = 1$ (see Proposition~\ref{prop:cadpzeros}). Thus, it corresponds to 7.1.

    \textbf{Case 2.} $x = 0$ and $x' = 0$. We divide this case into two subcases. 
    
    \textbf{Case 2.1.} $\adp{\alpha}{\beta}{\inv{\gamma}{a}} = 0$. Indeed, in this case both 
    $\padp{a'}{0}{\alpha}{\beta}{\inv{\gamma}{a}}$ and $\padp{\overline{a'}}{1}{\alpha}{\beta}{\inv{\gamma}{a}}$ are zero. Next, $\adp{\alpha}{\beta}{\inv{\gamma}{a}} = 0$ $\iff$ $\omega(\alpha, \beta, \inv{\gamma}{a})$ satisfies \texttt{[.*d0*]} (Proposition~\ref{prop:adpzeropattern}). But $\omega(\alpha, \beta, \inv{\gamma}{a})$ is an arbitrary word whose least significant symbol is of even weight. Hence, $\omega(\alpha, \beta, \inv{\gamma}{a})$ satisfies \texttt{[.*d00*]}. This means that $\omega(\alpha, \beta, \gamma)$ satisfies either 6.1 (i.e. $a = 0$) or 6.2 (i.e. $a = 1$).

    Next, we assume that $\adp{\alpha}{\beta}{\inv{\gamma}{a}} \neq 0$ for any other cases.

    \textbf{Case 2.2.} Let $a' = 0$. According to Theorem~\ref{th:padpzeros}, both $\padp{a'}{0}{\alpha}{\beta}{\inv{\gamma}{a}}$ and $\padp{\overline{a'}}{1}{\alpha}{\beta}{\inv{\gamma}{a}}$ cannot be zero. Indeed, $\padp{0}{0}{\alpha}{\beta}{\inv{\gamma}{a}} = 0$ only if the most significant octal symbol of $\omega(\alpha, \beta,\inv{\gamma}{a})$ is $6$ or $7$. Consequently, $\alpha \neq 0$ and $\beta \neq 0$, i.e. $\padp{1}{1}{\alpha}{\beta}{\inv{\gamma}{a}} \neq 0$. 

    Let $a' = 1$. According to Theorem~\ref{th:padpzeros}, both $\padp{1}{0}{\alpha}{\beta}{\inv{\gamma}{a}}$ and $\padp{0}{1}{\alpha}{\beta}{\inv{\gamma}{a}}$ are zero only if $\alpha = 0$ and $\beta = 0$. Indeed, in other cases $\padppar{0}{1}$ is zero only if the most significant octal symbol of $\omega(\alpha, \beta,\inv{\gamma}{a})$ is $5$ or $6$ which implies that $\alpha \neq 0$. Similarly, in other cases $\padppar{1}{0}$ is zero only if the symbol is $2$ or $3$ which implies that $\beta \neq 0$.
    Let $\alpha = 0$ and $\beta = 0$. Then $\inv{\gamma}{a} = 0$ as well (otherwise $\adp{\alpha}{\beta}{\inv{\gamma}{a}} = 0$ by Corollary~\ref{cor:AdpMatrixZeros} which is the case~2.1). Since both  $\padp{1}{0}{0}{0}{0}$ and $\padp{0}{1}{0}{0}{0}$ are zero (see Theorem~\ref{th:cadpsymmetries}), we obtain either $\alpha = \beta = \gamma = 0$ (i.e. $a = 0$) or $\alpha = \beta = 0$, $\gamma = 2^{n - r} - 1$ (i.e. $a = 1$). They correspond to 5.1 and 5.2.

    The patterns from the columns $5$, $6$ and $7$ are obtained.

    \textbf{Case 3.} $x = 0$ and $y' = 0$. Since $y' = \cadp{a}{\inv{\overline{\alpha'}}{a'}}{\overline{\beta'}}{\gamma'} = 0$, Proposition~\ref{prop:cadpzeros} gives us that  $\adp{\inv{\overline{\alpha'}}{a'}}{\overline{\beta'}}{\gamma'} = 0$ (we can exclude the case of $a = 1$ and $\gamma' = 0$ since it corresponds to 7.1). This means that $\omega(\inv{\overline{\alpha'}}{a'}, \overline{\beta'},\gamma')$ satisfies \texttt{[.*d00*]} (its least significant octal symbol is always of even weight, i.e. it must be zero). In other words, $\omega(\alpha', \beta', \gamma')$ satisfies either \texttt{[.*d66*]} (if $a' = 0$) or \texttt{[.*e22*]} (if $a' = 1$). Let us consider these two subcases.

    \textbf{Case 3.1.} $\omega(\alpha', \beta', \gamma')$ satisfies \texttt{[.*d66*]}, i.e. $a' = 0$. Thus, $x = \padp{0}{0}{\alpha}{\beta}{\inv{\gamma}{a}} = 0$ and $\omega(\alpha, \beta, \inv{\gamma}{a})$ must satisfy some pattern provided by Theorem~\ref{th:padpzeros}. For the rest of the proof we use the following.
    \begin{itemize}
        \item Since $\omega(\alpha, \beta, \inv{\gamma}{a})$ is an arbitrary word whose least significant symbol is of even weight, we have to choose only patterns satisfying both Table~\ref{table:padpzeros} (taking into account the column) and \texttt{[.*e]} (we can divide some patterns into several parts if it is necessary).
        \item After choosing a pattern for $\omega(\alpha, \beta, \inv{\gamma}{a})$, we choose it and its $\overline{\gamma}$-symmetric pattern as a pattern for $\omega(\alpha, \beta, \gamma)$ (it is already used in the previous cases).
        \item However, if Theorem~\ref{th:padpzeros} provides two patterns that are $\overline{\gamma}$-symmetric to each other (or some $\overline{\gamma}$-symmetric to itself pattern), we just choose them for $\omega(\alpha, \beta, \gamma)$ without any changes. Indeed, if a word satisfying a pattern does not satisfy \texttt{[.*e]}, then its $\overline{\gamma}$-symmetric word satisfies both \texttt{[.*e]} and its $\overline{\gamma}$-symmetric pattern. 
    \end{itemize}

    In this case, $\omega(\alpha, \beta, \inv{\gamma}{a})$ must satisfy \texttt{[6$\widehat{\mathtt{6}}$*7.*]}, \texttt{[7$\widehat{\mathtt{7}}$*6.*]}, or \texttt{[7$\widehat{\mathtt{7}}$*0$\mathtt{\gamma_0}$*]}. The first two are $\overline{\gamma}$-symmetric and correspond to 4.1 and 4.3. To satisfy \texttt{[.*e]}, we need to divide the last one into two patterns: \texttt{[7$\widehat{\mathtt{7}}$*0]} and \texttt{[7$\widehat{\mathtt{7}}$*0$\mathtt{\gamma_0}$*$\mathtt{0_6}$]}. 
    At the same time, \texttt{[7$\widehat{\mathtt{7}}$*0]} and its $\overline{\gamma}$-symmetric pattern together with $\omega(\alpha', \beta', \gamma')$ satisfy 6.1 and 6.2 respectively. 
    Finally, \texttt{[7$\widehat{\mathtt{7}}$*0$\mathtt{\gamma_0}$*$\mathtt{0_6}$]} and its $\overline{\gamma}$-symmetric pattern are 4.4 and 4.2. The fourth column of Table~\ref{table:zerosOfXR} is completed.

    \textbf{Case 3.2.} $\omega(\alpha', \beta', \gamma')$ satisfies \texttt{[.*e22*]}, i.e. $a' = 1$. Thus, $x = \padp{1}{0}{\alpha}{\beta}{\inv{\gamma}{a}} = 0$ and $\omega(\alpha, \beta, \inv{\gamma}{a})$ must satisfy one of the following patterns: 
    $\overline{\gamma}$-symmetric \texttt{[2$\widehat{\mathtt{2}}$*3.*]} and \texttt{[3$\widehat{\mathtt{3}}$*2.*]}, \texttt{[2$\widehat{\mathtt{2}}$*0$\mathtt{\alpha_0}$*]} and \texttt{[3$\widehat{\mathtt{3}}$*1$\mathtt{\alpha_0}$*]} (they correspond to 2.3, 2.4, 2.7 and 2.8); 
    $\overline{\gamma}$-symmetric to itself \texttt{[$\mathtt{\alpha_0}$*]} (2.1); two rest \texttt{[3$\widehat{\mathtt{3}}$*]} and \texttt{[3$\widehat{\mathtt{3}}$*4$\mathtt{\gamma_0}$*]}.

    The pattern \texttt{[3$\widehat{\mathtt{3}}$*]} satisfies \texttt{[.*e]}, it and its $\overline{\gamma}$-symmetric pattern correspond to 2.2 and 2.6. Also, we transform \texttt{[3$\widehat{\mathtt{3}}$*4$\mathtt{\gamma_0}$*]} to \texttt{[3$\widehat{\mathtt{3}}$*4$\mathtt{\gamma_0}$*$\mathtt{0_6}$]} (it and its $\overline{\gamma}$-symmetric pattern correspond to 2.5 and 2.9). We do not divide it into two patterns since the least significant symbol of \texttt{[3$\widehat{\mathtt{3}}$*4]} is of odd weight. The second column of Table~\ref{table:zerosOfXR} is completed.

    \textbf{Case 4.} $x' = 0$ and $y = 0$. Similarly to the previous case, we exclude the case of $a = 1$ and $\gamma' = 0$. Therefore, Proposition~\ref{prop:cadpzeros} gives us that $\adp{\inv{\alpha'}{a'}}{\beta'}{\gamma'} = 0$. This means that $\omega(\inv{\alpha'}{a'}, \beta',\gamma')$ satisfies \texttt{[.*d00*]}. In other words, $\omega(\alpha', \beta', \gamma')$ satisfies \texttt{[.*d00*]} (if $a' = 0$) or \texttt{[.*e44*]} (if $a' = 1$). Let us consider these two subcases.

    \textbf{Case 4.1.} $\omega(\alpha', \beta', \gamma')$ satisfies \texttt{[.*d00*]}, i.e. $a' = 0$. Thus, $x = \padp{1}{1}{\alpha}{\beta}{\inv{\gamma}{a}} = 0$ and $\omega(\alpha, \beta, \inv{\gamma}{a})$ must satisfy some pattern provided by Theorem~\ref{th:padpzeros}: $\overline{\gamma}$-symmetric to itself \texttt{[$\mathtt{\alpha_0}$*]} and \texttt{[$\mathtt{\beta_0}$*]} (1.1 and 1.2); $\overline{\gamma}$-symmetric \texttt{[0$\widehat{\mathtt{0}}$*1.*]} and \texttt{[1$\widehat{\mathtt{1}}$*0.*]}, 
    \texttt{[0$\widehat{\mathtt{0}}$*2$\mathtt{\alpha_0}$*]} and \texttt{[1$\widehat{\mathtt{1}}$*3$\mathtt{\alpha_0}$*]}, \texttt{[0$\widehat{\mathtt{0}}$*4$\mathtt{\beta_0}$*]} and \texttt{[1$\widehat{\mathtt{1}}$*5$\mathtt{\beta_0}$*]} (1.4, 1.5, 1.6, 1.10, 1.11 and 1.12); the two rest \texttt{[0$\widehat{\mathtt{0}}$*]} and   \texttt{[1$\widehat{\mathtt{1}}$*6$\mathtt{\gamma_0}$*]}.
    
    The pattern \texttt{[0$\widehat{\mathtt{0}}$*]} satisfies \texttt{[.*e]}, it and its $\overline{\gamma}$-symmetric pattern correspond to 1.3 and 1.9. Finally, we divide \texttt{[1$\widehat{\mathtt{1}}$*6$\mathtt{\gamma_0}$*]} into two patterns: \texttt{[1$\widehat{\mathtt{1}}$*6]} and \texttt{[1$\widehat{\mathtt{1}}$*6$\mathtt{\gamma_0}$*$\mathtt{0_6}$]}. They are 1.13 and 1.14, their $\overline{\gamma}$-symmetric patterns are 1.7 and 1.8. The first column of Table~\ref{table:zerosOfXR} is completed.
  
    \textbf{Case 4.2.} $\omega(\alpha', \beta', \gamma')$ satisfies \texttt{[.*e44*]}, i.e. $a' = 1$. Thus, $x = \padp{0}{1}{\alpha}{\beta}{\inv{\gamma}{a}} = 0$. This case is symmetric to the already considered case 3.2 relatively swapping $\alpha$ and $\beta$. Hence, we just need to choose $\alpha\beta$-symmetric patterns to the ones described in the case 3.2. The third column of Table~\ref{table:zerosOfXR} is completed.

    The theorem is proved.
\end{proof}

\subsection{Estimations for the number of all impossible differentials}\label{sec:impossibledifferentialsest}

Let $\nzer{n}{r}$ be the number of distinct $(\alpha, \beta, \gamma) \in \zspc{n} \times \zspc{n} \times \zspc{n}$ such that $\adpxr{\alpha}{\beta}{\gamma}{r} = 0$.
\begin{corollary}\label{cor:adpXRzeros_estimations}
 The following estimations hold.
    \begin{enumerate}
        \item $\nzer{n}{r} \leq (\frac{1}{7} + \frac{1}{2^{r + 1}}) 8^n - \frac{1}{7}8^r + \frac{4}{7}(8^{r - 1} - 1) (\frac{1}{5} 8^{n - r} + 7 \cdot 4^{n - r} - \frac{124}{5}3^{n-r-2})$, where $2 \leq r \leq n - 2$;
        \item $\nzer{n}{n-1} \leq (\frac{9}{28} + \frac{1}{2^n}) 8^n -\frac{88}{7}$;
        \item $\nzer{n}{r} \geq \frac{1}{7} 8^n - \frac{1}{7} 8^r$ for any $1 \leq r \leq n - 1$;
        \item $\nzer{n}{1} = \frac{5}{14} 8^{n} - \frac{6}{7}$ for any $n \geq 2$.
    \end{enumerate}
\end{corollary}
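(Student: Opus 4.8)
The plan is to derive all four estimates from the complete pattern characterization of impossible differentials in Theorem~\ref{th:adpXRzeros} and Table~\ref{table:zerosOfXR}. Since $\adpxr{\cctn{\alpha'}{\alpha}}{\cctn{\beta'}{\beta}}{\cctn{\gamma}{\gamma'}}{r} = 0$ is equivalent to $(\omega,\omega')$ matching one of the listed patterns X.Y, where $\omega = \omega(\alpha,\beta,\gamma)$ has length $n-r$ and $\omega' = \omega(\alpha',\beta',\gamma')$ has length $r$, the whole problem reduces to counting octal words matching a given pattern. First I would prove a short counting lemma: the number of words of length $m$ matching a pattern built from the Kleene-starred classes of Table~\ref{table:octalwordpatterns} is a closed form obtained by summing geometric series whose ratios are the class sizes $8$ (for \texttt{.}), $4$ (for \texttt{e}, \texttt{d} and the single-coordinate classes) and $3$ (for $\widehat{\mathtt{t}}$). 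This is exactly why the bound in part~1 carries the three geometric scales $8^{n-r}$, $4^{n-r}$ and $3^{n-r}$.

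The lower bound (part~3) and the exact count (part~4) are the clean cases, and I would do them first. For part~3 it suffices to exhibit a subfamily that is easy to count \emph{and} pairwise disjoint. Columns~$6$ and~$5$ work: rows 6.1 and 6.2 pair an arbitrary $\omega'$ (that is, $8^r$ choices) with $\omega\in$\,\texttt{[.*d00*]} and $\omega\in$\,\texttt{[.*e11*]}, contributing $\tfrac{8}{7}(8^{n-1}-8^r)=\tfrac17 8^n-\tfrac87 8^r$; rows 5.1 and 5.2 pair an $\omega'$ ending in an odd-weight symbol ($4\cdot 8^{r-1}$ choices) with $\omega\in\{\mathbf 0,\mathbf 1\}$, contributing $8^r$. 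One checks these four families are mutually disjoint (distinct trailing symbols of $\omega$, and $\{\mathbf 0,\mathbf 1\}$ avoids \texttt{[.*d00*]} and \texttt{[.*e11*]}), so their union already has $\tfrac17 8^n-\tfrac17 8^r$ elements. For part~4 the length $r=1$ collapses $\omega'$ to a single symbol $s\in\{0,\dots,7\}$; only columns~$5$, $6$ and~$7$ can be matched by a one-symbol word, so for each $s$ I would take the union of the applicable $\omega$-patterns of length $n-1$, use the absorptions \texttt{[.*e11*]}, \texttt{[1*]} $\subseteq$ \texttt{[.*d]} together with the remaining disjointnesses to count it, and sum over the eight values of $s$; the contributions $N_a=\tfrac47(8^{n-2}-1)$ and $N_e=4\cdot 8^{n-2}$ then assemble to exactly $\tfrac{5}{14}8^n-\tfrac67$.

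The two upper bounds are heavier. Part~2 ($r=n-1$) is the mirror image of part~4: now $\omega$ is the single symbol, so I would enumerate its eight values, read off which columns and rows are active, and count the matching $\omega'$ of length $n-1$ (the long factor now sits in $\omega'$, so columns~$1$--$4$ contribute and produce the $\tfrac{1}{2^n}8^n$ and the constant terms). For the general range $2\le r\le n-2$ in part~1 both factors are genuinely long, so I would estimate the full union over all seven columns and all rows. Because several patterns can be matched by the same $(\omega,\omega')$, the exact count requires inclusion--exclusion; since part~1 is only an inequality, I would instead bound the union from above by a controlled sum of per-pattern counts---equivalently, keep the leading positive contributions of the four events $(y,y')=0$, $(x,x')=0$, $(x,y')=0$, $(x',y)=0$ coming from the identity $\adpxrsym=xy+x'y'$ in the proof of Theorem~\ref{th:adpXRzeros} and discard the inclusion--exclusion corrections that only decrease the count---which yields the stated mixture of $\tfrac{1}{2^{r+1}}8^n$, $8^r$, $8^{n-r}$, $4^{n-r}$ and $3^{n-r}$ terms.

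The main obstacle is precisely the overlap bookkeeping in part~1: the columns of Table~\ref{table:zerosOfXR} are far from disjoint (column~$6$ is \texttt{[.*]}, column~$5$ is \texttt{[.*d]}, and so on), and within a column the rows overlap through shared suffixes, so one must decide exactly which absorptions to apply before the geometric sums can be collapsed. Compounding this, the two factors are coupled through the parities $a=\alpha_{n-r-1}\oplus\beta_{n-r-1}\oplus\gamma_{n-r-1}$ and $a'=\alpha'_{r-1}\oplus\beta'_{r-1}\oplus\gamma'_{r-1}$ and through boundary cases such as $\gamma'=0$ or $\alpha=\beta=0$, which must be excluded or counted separately. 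The exact $r=1$ verification of part~4 is the reassurance that, once the absorptions are handled correctly, the mixed-ratio series really do telescope to the advertised closed forms.
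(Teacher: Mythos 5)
Your proposal is correct and follows essentially the same route as the paper: both reduce the problem to counting words matching the patterns of Table~\ref{table:zerosOfXR} via geometric sums with ratios $8$, $4$ and $3$, use a union bound (sum of per-pattern counts $\sigma_r(\cdot)\cdot\sigma_{n-r}(\cdot)$) for the upper bounds in parts~1 and~2, exhibit the pairwise disjoint families from columns~5 and~6 for the lower bound in part~3, and handle the column~5/6/7 overlaps by explicit inclusion--exclusion for the exact count at $r=1$. Your explicit verification of part~3 and your identification of the absorptions \texttt{[.*e11*]}, \texttt{[1*]} $\subseteq$ \texttt{[.*d]} match the paper's computation exactly.
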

\begin{proof}
    Since each pattern presented in Table~\ref{table:zerosOfXR} is of fixed length ($r$ for the column marks and $n - r$ for the elements) and  contains at most two \texttt{*}, we just only need the following formula:
    $$
        \frac{a^{k + 1} - b^{k + 1}}{a - b} = a^k b^0 + a^{k-1}b^1 + \ldots + a^{0}b^{k}, \text{ where } a, b \in \mathbb{Z}.
    $$
    Table~\ref{table:adpXRzero_classification} classifies all patterns for words of length $n-r$. Also, it provides the number of patterns that have the same $\sigma_{n-r}$. This is the number of words of length $n - r$ that satisfy the pattern which is calculated using the formula above. 
    Let us consider the eighth pattern. The number $\sigma_k($\texttt{[.*d00*]}$)$ of all words of length $k$ (we are interested in $k = n-r$ or $k = r$) satisfying the pattern is $4 \cdot (8^{k-2} \cdot 1 + 8^{k - 3} \cdot 1 + \ldots + 8^0 \cdot 1) = \frac{4}{7} (8^{k - 1} - 1)$. We have the restriction $k \geq 2$, otherwise there are no such words at all. However, $\frac{4}{7} (8^{k - 1} - 1) = 0$ for $k = 1$. Thus, the formula for $\sigma_k($\texttt{[.*d00*]}$)$ is correct for any $k \geq 1$, i.e. in fact the restriction is $k \geq 1$. Similar patterns for words of length $n - r$ are presented in Table~\ref{table:zerosOfXR} (column~$6$) two times. We combine them since $\sigma_{n-r}($\texttt{[.*d00*]}$) = \sigma_{n-r}($\texttt{[.*e11*]}$)$. 
    Also, the marks of columns~$1$--$4$ have the same type, but they are for words of length $r$.
    The other parameters presented in Table~\ref{table:adpXRzero_classification} are obtained in the same way. 
    
\begin{table}
\begin{center}
\begin{tabular}{@{}cllll@{}}
	\toprule
         &  Pattern & \# of the same/similar & $\sigma_{n - r}$  & Restriction \\ 
	\midrule
    1 & \texttt{[$\alpha_0$*]}  & $4$ (columns $1-4$) & $4^{n-r}$ & $n - r \geq 1$ \\
    2 & \texttt{[0$\widehat{\mathtt{0}}$*]}  & $6$ (columns $1-4$) & $3^{n - r - 1}$ & $n - r \geq 1$ \\
    3 & \texttt{[0$\widehat{\mathtt{0}}$*7]}  & $2$ (columns $1-4$) & $3^{n - r - 2}$ & $n - r \geq 2$ \\
    4 & \texttt{[0$\widehat{\mathtt{0}}$*1.*]}  & $8$ (columns $1-4$) & $\frac{1}{5} (8^{n - r - 1} - 3^{n - r - 1})$ & $n - r \geq 1$ \\
    5 & \texttt{[0$\widehat{\mathtt{0}}$*2$\alpha_0$*]}  & $8$ (columns $1-4$) & $4^{n - r - 1} - 3^{n - r - 1}$ & $n - r \geq 1$ \\
    6 & \texttt{[0$\widehat{\mathtt{0}}$*7$\gamma_1$*$\mathtt{1_7}$]}  & $8$ (columns $1-4$) & $2 (4^{n - r - 2} - 3^{n - r - 2})$ & $n - r \geq 2$ \\
    7 & \texttt{[0*]}  & $2$ (column $5$) & $1$ & $n - r \geq 1$ \\
    8 & \texttt{[.*d00*]}  & $2$ (column $6$) & $\frac{4}{7} (8^{n - r - 1} - 1)$ & $n - r \geq 1$ \\
    9 & \texttt{[.*d]}  & $1$ (column $7$) & $\frac{1}{2} 8^{n - r}$ & $n - r \geq 1$ \\
    \bottomrule
\end{tabular}
\caption{Classification of the patterns from Table~\ref{table:zerosOfXR}}
\label{table:adpXRzero_classification}
\end{center}
\end{table}

To prove the estimations, we sum $\sigma_r(\cdot) \cdot \sigma_{n-r}(\cdot)$ for some (all) pair of patterns. For columns $5$--$7$ and natural $n - r \geq 1$ we obtain the following sum:
$$
    \frac{1}{2}8^r \cdot 2 + 8^{r} \cdot 2 \cdot \frac{4}{7} (8^{n - r - 1} - 1) + 4^{r} \cdot \frac{1}{2}8^{n - r} = (\frac{1}{7} + \frac{1}{2^{r + 1}}) 8^n - \frac{1}{7}8^r.
$$

The marks of columns~$1$--$4$ of Table~\ref{table:zerosOfXR} have the same $\sigma_r(\cdot) = \frac{4}{7}(8^{r - 1} - 1)$. Let us find the sum of $\sigma_{n-r}(\cdot)$ for all patterns presented in these columns. If $n - r \geq 2$, we get the following number: 
\begin{multline*}
    4 \cdot 4^{n - r} + (6 \cdot 3 + 2) \cdot 3^{n - r - 2} + 8 \cdot (\frac {1}{5}(8^{n - r - 1} - 3^{n - r - 1}) + 4^{n - r - 1} - 3^{n - r - 1} + \\ 2 \cdot (4^{n - r - 2} - 3^{n - r - 2})) =
    \frac{1}{5} 8^{n - r} + 7 \cdot 4^{n - r} - \frac{124}{5}3^{n-r-2}.
\end{multline*}
The first point is proved.

Let us estimate $\nzer{n}{n-1}$. We need to consider only patterns with the restriction $n - r \geq 1$. Columns~$1$-$4$ generate at most $\frac{4}{7}(8^{n - 2} - 1) \cdot (4 \cdot 4 + 6 \cdot 3^0) = \frac{11}{7 \cdot 8} 8^n - \frac{88}{7}$ words. Column~$5$ and Column~$7$ provide $\frac{1}{2}8^{n-1} \cdot 2 \cdot 1 = \frac{1}{8} 8^n$ and $4^{n-1} \cdot 1 \cdot 4 = 4^n$ words respectively. Column~$6$ does not provide any word. Thus,
$$
    \nzer{n}{n-1} \leq (\frac{11 + 7}{7 \cdot 8} + \frac{1}{2^n}) 8^n - \frac{88}{7} = (\frac{9}{28} + \frac{1}{2^n}) 8^n -\frac{88}{7}.
$$

Let us prove the lower bound. We note that there is no word satisfying both a pattern from column~$5$ and a pattern from column~$6$. Thus, $\nzer{n}{r} \geq  \frac{1}{2}8^r \cdot 2 + 8^{r} \cdot 2 \cdot \frac{4}{7} (8^{n - r - 1} - 1) = \frac{1}{7} 8^n - \frac{1}{7} 8^r$.

Let us calculate $\nzer{n}{1}$. The lower bound gives us the exact number of words satisfying the patterns of columns $5$ or $6$ since their $(n-r)$-parts do not intersect. The intersections of the marks are \texttt{[$\mathtt{2_4}$]} (for  columns $5$ and $7$) and \texttt{[$\gamma_0$]} (for columns $6$ and $7$). 
Also, the intersection of \texttt{[1*]} and \texttt{[.*d]} is \texttt{[1*]} (columns $5$ and $7$). The intersection of \texttt{[*e11*]} and \texttt{[.*d]} is \texttt{[*e11*]} (columns $6$ and $7$). There are no other intersections. Thus,
$$
    \nzer{n}{1} = \frac{1}{7} 8^n - \frac{1}{7} 8 + 4 \cdot \frac{1}{2}8^{n - 1} - 2 \cdot 1 - 4 \cdot \frac{4}{7}(8^{n - 2} - 1) = \frac{5}{14}8^{n} - \frac{6}{7}.
$$
The corollary is proved. 
\end{proof}
\begin{corollary}\label{cor:adpXRzeros_estimations_r_1}
Let $n \geq 5$. Then $\nzer{n}{r} < \nzer{n}{1}$ for any $r$ satisfying $2 \leq r \leq  n - 1$.
\end{corollary}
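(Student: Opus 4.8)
The plan is to play the two \emph{upper} bounds on $\nzer{n}{r}$ from Corollary~\ref{cor:adpXRzeros_estimations} (parts~1 and~2) against the \emph{exact} value $\nzer{n}{1} = \frac{5}{14}8^n - \frac{6}{7}$ supplied by part~4, splitting the range into the right rotation $r = n-1$ and the interior range $2 \le r \le n-2$. Since $\nzer{n}{r}$ is bounded above and $\nzer{n}{1}$ is known exactly, in each case it suffices to show that the relevant upper bound is strictly below $\frac{5}{14}8^n - \frac{6}{7}$.

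First I would dispose of $r = n-1$ using part~2. As $\frac{5}{14} - \frac{9}{28} = \frac{1}{28}$, the required inequality $(\frac{9}{28} + \frac{1}{2^n})8^n - \frac{88}{7} < \frac{5}{14}8^n - \frac{6}{7}$ rearranges to $4^n < \frac{1}{28}8^n + \frac{82}{7}$. This is easily verified at $n=5$ (where $1024 < 32768/28 + \tfrac{82}{7}$) and holds a fortiori for larger $n$, since $8^n/28$ outgrows $4^n$.

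The substance is the interior range $2 \le r \le n-2$, where I would expand part~1 as a sum of monomials in powers of $2$ and $3$. I would first discard the negative $3^{n-r-2}$-summand (dropping a negative term only weakens the bound in my favour) and merge $\frac{1}{7}8^n$ with the leading part $\frac{1}{70}8^n$ of $\frac{4}{35}(8^{r-1}-1)8^{n-r}$ into $\frac{11}{70}8^n$. After subtracting $\frac{11}{70}8^n = \frac{5}{14}8^n - \frac{1}{5}8^n$ from both sides, the claim reduces to
\[
\frac{8^n}{2^{r+1}} + 2^{2n+r-1} - \tfrac{1}{7} 8^r - \tfrac{4}{35} 8^{n-r} - 2^{2n-2r+2} < \tfrac{1}{5} 8^n - \tfrac{6}{7},
\]
where $\frac{8^n}{2^{r+1}} = 2^{3n-r-1}$ is the leading part of the $\frac{1}{2^{r+1}}8^n$-term and $2^{2n+r-1}$ is the leading part of $\frac{4}{7}(8^{r-1}-1)\cdot 7 \cdot 4^{n-r}$. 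Dropping the three remaining negative summands on the left, it suffices to control $f(r) := 2^{3n-r-1} + 2^{2n+r-1}$: as a sum of one decreasing and one increasing exponential in $r$, $f$ is convex, so on $2 \le r \le n-2$ it is maximized at an endpoint, and both endpoints give $f(2) = f(n-2) = \frac{1}{8}8^n + 2\cdot 4^n$. It then remains to check $\frac{1}{8}8^n + 2\cdot 4^n < \frac{1}{5}8^n - \frac{6}{7}$, i.e. $2\cdot 4^n + \frac{6}{7} < \frac{3}{40}8^n$; dividing by $4^n$ gives $2 + \frac{6}{7\cdot 4^n} < \frac{3}{40}2^n$, which already holds at $n=5$ (where $\frac{3}{40}\cdot 32 = 2.4 > 2.00\ldots$) and improves with $n$.

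The only genuinely delicate point is recognizing that the two dominant terms $2^{3n-r-1}$ and $2^{2n+r-1}$ trade off as $r$ varies and are simultaneously largest at the two \emph{ends} of the interval, so that the uniform constant $\frac{1}{8}$ (rather than something near $\frac{1}{5}$) governs their sum; everything else is routine monomial bookkeeping, with the tightest numerical margin occurring at the boundary case $n=5$, which must be confirmed by hand.
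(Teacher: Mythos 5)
Your proposal is correct and takes essentially the same route as the paper's proof: both split off $r=n-1$ via part~2 of Corollary~\ref{cor:adpXRzeros_estimations}, and for $2\le r\le n-2$ both discard the negative summands of the part~1 upper bound and reduce the claim to bounding the two $r$-dependent exponentials $\frac{8^n}{2^{r+1}}+\frac{8^n}{2^{n-r+1}}$ by their common endpoint value $\frac{1}{8}8^n+\frac{1}{16}8^n$ (your convexity argument is exactly the paper's observation that $r+1,\,n-r+1\ge 3$ with $(r+1)+(n-r+1)\ge 7$). The final numerical margin is identical in both arguments, namely $\frac{1}{8}+\frac{1}{16}=\frac{3}{16}<\frac{1}{5}$, tight at $n=5$.
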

\begin{proof}
    We rewrite the upper bound for $\nzer{n}{r}$ obtained in Corollary~\ref{cor:adpXRzeros_estimations}. First of all, $\nzer{n}{1} = \frac{5}{14} 8^n - \frac{6}{7}$.
    Let $2 \leq r \leq n - 2$, i.e. $r + 1, n-r+1 \geq 3$. Also, $n \geq 5$, i.e. $(r + 1) + (n - r + 1) \geq 7$. Removing all ``$-$'' from the estimation for $\nzer{n}{r}$, we get
    \begin{multline*}
      \nzer{n}{r} + 1 \leq (\frac{1}{7} + \frac{1}{2^{r + 1}}) 8^n + \frac{4}{7}8^{r - 1} (\frac{1}{5} 8^{n - r} + 7 \cdot 4^{n - r}) = \frac{8^n}{7} + \frac{8^n}{70} + \frac{8^n}{2^{r + 1}} + \frac{8^n}{2^{n - r + 1}} \\
      \leq (\frac{1}{7} + \frac{1}{70} + \frac{1}{8} + \frac{1}{16}) 8^n = \frac{80 + 8 + 70 + 35}{5 \cdot 8 \cdot 14} 8^n < \frac{5}{14} 8^n.
    \end{multline*}
    Corollary~\ref{cor:adpXRzeros_estimations} provides that $\nzer{n}{n-1} < \nzer{n}{1}$ as well since $(\frac{9}{28} + \frac{1}{2^n}) 8^n < \frac{5}{14} 8^n$ for any $n \geq 5$. The statement is proved. 
\end{proof}
Computational experiments show us that $\nzer{n}{r} < \nzer{n}{1}$ for any $1 < r < n \leq 4$ as well, see Table~\ref{table:adpXRzeroNb}. 
\begin{table}
\begin{center}
\begin{tabular}{@{}cll|cll|cll@{}}
	\toprule
       $n$  & $r$ & $\nzer{n}{r}$ & $n$  & $r$ & $\nzer{n}{r}$ & $n$  & $r$ & $\nzer{n}{r}$ \\ 
	\midrule
    2 & 1 & 22 & 3 & 2 & 150 & 4 & 2 & 1166\\
    3 & 1 & 182 & 4 & 1 & 1462 & 4 & 3 & 1046\\
    \bottomrule
\end{tabular}
\caption{$\nzer{n}{r}$ for $n \leq 4$}
\label{table:adpXRzeroNb}
\end{center}
\end{table}
Thus, the most significant estimations are
$$
    \frac{1}{8} 8^n \leq \frac{1}{7} 8^n - \frac{1}{7} 8^r \leq \nzer{n}{r} < \lfloor \frac{5}{14} 8^n \rfloor = \nzer{n}{1} \text{ for any } r \text{ and } n,\ 2 \leq r \leq n - 1.
$$
At the same time, it was proved~\cite{LipmaaEtAl2004} that the number of impossible differentials of the function $x \oplus y$ is equal to $\frac{4}{7}(8^n - 1) = \lfloor \frac{4}{7} 8^n \rfloor$. Thus, even the one bit left rotation noticeably reduces the number of impossible differentials.

\section{Conclusion}

By rewriting the formula from~\cite{VelichkovEtAl2011}, we have obtained some argument symmetries of $\adpxrsym$. They turn out to be similar to the symmetries of $\adpsym$~\cite{MouhaEtAl2021}. Also, if the rotation is one bit left/right, we have found maximums of $\adpxrsym$, where one of its input differences is fixed. 
Though these rotations are not often used, the results obtained show us that the optimal differentials of the XR-operation may be theoretically found in some cases. 
The proposed auxiliary notions also turn out to be sufficient for finding all impossible differentials of $(x \oplus y) \lll r$.

Note that anything that is true for $\adpxrsym$ works for $\adprxsym$ as well (taking into account positions of arguments and the rotation parameter). The maximums for other argument fixations as well as for other rotations are the topics for future research.

\bigskip

\noindent {\normalsize \textbf{Acknowledgements} 
The work is supported by the Mathematical Center in Akademgorodok under the agreement No. 075--15--2022--282 with the Ministry of Science and Higher Education of the Russian Federation. The authors are very grateful to Nicky Mouha.}

\begin{appendices}

\section{Proof of Theorem~\ref{th:padpzeros}}\label{sec:proofofpadpzeros}

First of all, we prove some auxiliary lemmas. 
\begin{lemma}\label{lemma:psymmetries}
    Let $\alpha, \beta, \gamma \in \zspc{n}$, $\omega = \omega(\alpha, \beta, \gamma)$ and $\pi: \zspc{3} \to \zspc{3}$ be a permutation of coordinates, i.e. $\pi(x_0, x_1, x_2) = (x_{i_0}, x_{i_1}, x_{i_2})$, where $(x_0, x_1, x_2) \in \zspc{3}$ and $\{i_0, i_1, i_2\} = \{0, 1, 2\}$. Then
    \begin{align*}
            e_{\pi(i)} A_{\pi(\omega_{0})} \ldots A_{\pi(\omega_{n - 1})} e^T_0 = e_{i} A_{\omega_{0}} \ldots A_{\omega_{n - 1}} e^T_0.
    \end{align*}
    We consider $\{0, \ldots, 7\}$ as the integer representations for elements of $\zspc{3}$.
\end{lemma}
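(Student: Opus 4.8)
The plan is to reduce the whole identity to a single invariance property of the base matrix $A_0$ and then to a reindexing of the matrix product. First I note that $\pi$, being a permutation of the three coordinates of $\zspc{3}$, is $\mathbb{F}_2$-linear: $\pi(u \oplus v) = \pi(u) \oplus \pi(v)$ for all $u, v \in \zspc{3}$, and in particular $\pi(0) = 0$. The lemma will follow once I establish the entrywise conjugation relation
$$(A_{\pi(k)})_{\pi(i),\, \pi(j)} = (A_k)_{i,j} \qquad \text{for all } i, j, k \in \zspc{3}.$$
By Theorem~\ref{th:AdpMatrix}, $(A_k)_{i,j} = (A_0)_{i \oplus k,\, j \oplus k}$, so using linearity of $\pi$ I can rewrite $(A_{\pi(k)})_{\pi(i),\pi(j)} = (A_0)_{\pi(i)\oplus\pi(k),\,\pi(j)\oplus\pi(k)} = (A_0)_{\pi(i\oplus k),\,\pi(j\oplus k)}$. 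Hence the conjugation relation is equivalent to the single statement that $A_0$ is invariant under simultaneous permutation of its row and column indices by $\pi$:
$$(A_0)_{\pi(m),\, \pi(l)} = (A_0)_{m, l} \qquad \text{for all } m, l \in \zspc{3}.$$

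To prove this invariance I would use that the group $S_3$ of coordinate permutations is generated by the two transpositions swapping the first two and the last two coordinates; since the property ``$(A_0)_{\pi(m),\pi(l)} = (A_0)_{m,l}$ for all $m,l$'' is preserved under composition of permutations (and holds for the identity), the set of $\pi$ satisfying it is a subgroup, so it suffices to verify it for these two generators. This is a finite check against the explicit entries of $A_0$ listed in Example~\ref{example:matriceslist}; conceptually it is nothing but the argument symmetry of $\adpsym$ read off at the level of the transition matrix. With the invariance in hand, the conjugation relation above holds for every $k$.

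Finally I would expand the left-hand side of the lemma as a sum over intermediate indices,
$$e_{\pi(i)} A_{\pi(\omega_0)} \cdots A_{\pi(\omega_{n-1})} e_0^T = \sum_{s_1, \ldots, s_{n-1} \in \zspc{3}} (A_{\pi(\omega_0)})_{\pi(i),\, s_1} (A_{\pi(\omega_1)})_{s_1,\, s_2} \cdots (A_{\pi(\omega_{n-1})})_{s_{n-1},\, 0},$$
and substitute $s_m = \pi(t_m)$, a bijective reindexing of each summation variable. Using $\pi(0) = 0$ for the trailing column index and the conjugation relation on each factor, every term $(A_{\pi(\omega_m)})_{\pi(t_m),\, \pi(t_{m+1})}$ collapses to $(A_{\omega_m})_{t_m,\, t_{m+1}}$ (with the boundary values $t_0 = i$ and $t_n = 0$), so the sum equals $e_i A_{\omega_0} \cdots A_{\omega_{n-1}} e_0^T$, as claimed. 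The only real obstacle is the invariance of $A_0$: once that finite check is settled the remainder is bookkeeping, and I would take care to present the invariance as the matrix incarnation of the symmetry of $\adpsym$ rather than re-deriving that symmetry by hand.
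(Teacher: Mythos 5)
Your proof is correct and follows essentially the same route as the paper: your entrywise relation $(A_{\pi(k)})_{\pi(i),\pi(j)} = (A_k)_{i,j}$ is exactly the paper's conjugation identity $S_{\pi} A_k S_{\pi}^{-1} = A_{\pi(k)}$ written out in coordinates, and your path-sum reindexing is its telescoping argument with $S_{\pi}^{-1}S_{\pi}$ inserted between factors. The only cosmetic difference is that the paper verifies the invariance of $A_0$ structurally --- $(A_0)_{i,j}\neq 0$ iff $\wt{j}$ is even and $j \preceq i$, both conditions manifestly invariant under coordinate permutations, with the unique entry equal to $1$ sitting at $(0,0)$ and all other nonzero entries equal to $\frac{1}{4}$ --- rather than by your finite check on two generating transpositions of $S_3$.
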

\begin{proof}
    Let $S_{\pi}: \mathbb{Q}^8 \to \mathbb{Q}^8$ such that 
    $$
        S(x_0, \ldots, x_7) = (x_{\pi^{-1}(0)}, \ldots x_{\pi^{-1}(7)}) \text{ for any } x \in \mathbb{Q}^8.
    $$
    We can consider $S_{\pi}$ as a binary matrix of size $8 \times 8$ since it is a linear function over $\mathbb{Q}$.
    Let us show that $S_{\pi} A_k S_{\pi}^{-1} = A_{\pi(k)}$.
    It is not difficult to see that for any binary matrices $B$ and $B'$ of size $8\times8$ the equality $B' = S_{\pi} B S_{\pi}^{-1}$ means that $B'_{i, j} = B_{\pi^{-1}(i), \pi^{-1}(j)}$.

    We can easily check that $(A_0)_{i, j} \neq 0$ $\Longleftrightarrow$ $\wt{j}$ is even and $j \preceq i$. Also we know that $(A_k)_{i, j} = (A_0)_{i \oplus k, j \oplus k}$. Thus, 
    \begin{align*}
        (S_{\pi} A_k S_{\pi}^{-1})_{i, j} \neq 0 &\Longleftrightarrow (A_k)_{\pi^{-1}(i), \pi^{-1}(j)} \neq 0 \\ &\Longleftrightarrow \wt{\pi^{-1}(j) \oplus k} \text{ is even and } \pi^{-1}(j) \oplus k \preceq \pi^{-1}(i) \oplus k\\
        &\Longleftrightarrow \wt{j \oplus \pi(k)} \text{ is even and } j \oplus \pi(k) \preceq i \oplus \pi(k)\\
        &\Longleftrightarrow (A_{\pi(k)})_{i, j} \neq 0.
    \end{align*}
    Matrices $A_k$ contain only $0, 1$ and $\frac{1}{4}$. Moreover, $(A_{k})_{i, j} = 1$ $\Longleftrightarrow$ $i = j = k$, which is consistent with
    $$
        (A_{\pi(k)})_{\pi(k), \pi(k)} = 1 = (A_k)_{\pi^{-1}(\pi(k)), \pi^{-1}(\pi(k))} = (S_{\pi} A_k S_{\pi}^{-1})_{\pi(k), \pi(k)}.
    $$
    It means that $S_{\pi} A_k S_{\pi}^{-1} = A_{\pi(k)}$. 
    Hence,
    \begin{align*}
        e_{i} A_{\omega_{0}} \ldots A_{\omega_{n - 1}} e^T_0 &= e_{i} (S_{\pi}^{-1} S_{\pi}) A_{\omega_{0}}  (S_{\pi}^{-1} S_{\pi})  \ldots  (S_{\pi}^{-1} S_{\pi}) A_{\omega_{n - 1}} (S_{\pi}^{-1} S_{\pi}) e^T_0 \\
        &= (e_{i} S_{\pi}^{-1}) (S_{\pi} A_{\omega_{0}}  S_{\pi}^{-1}) \ldots (S_{\pi} A_{\omega_{n - 1}} S_{\pi}^{-1}) (S_{\pi} e^T_0)\\
        &= e_{\pi(i)} A_{\pi(\omega_{0})} \ldots A_{\pi(\omega_{n - 1})} e^T_0,
    \end{align*}
    since $e_{i} S_{\pi}^{-1} = e_{\pi(i)}$ and $S_{\pi} e^T_0 = e^T_0$.
\end{proof}
\begin{remark}\label{remark:psymmetriesusage}
    We will use Lemma~\ref{lemma:psymmetries} to unite symmetric cases. For instance, let us suppose that for any $\alpha, \beta, \gamma \in \zspc{n}$ the following holds: $
    (e_3 + e_5) A_{\omega_{0}} \ldots A_{\omega_{n - 1}} e_0^T = 0 \iff \gamma = 0, \text{ where } \omega = \omega(\alpha, \beta, \gamma).$ Next, we choose the permutation $\pi: \zspc{3} \to \zspc{3}$ that swaps $\beta$ and $\gamma$, i.e. $\pi(a, b, c) = (a, c, b)$ for all $a,b,c \in \zspc{}$. Lemma~\ref{lemma:psymmetries} gives us that 
    $$
        (e_3 + e_5) A_{\pi(\omega_{0})} \ldots A_{\pi(\omega_{n - 1})} e_0^T = (e_{\pi(3)} + e_{\pi(5)}) A_{\pi(\pi(\omega_{0}))} \ldots A_{\pi(\pi(\omega_{n - 1}))} e_0^T.
    $$
    Thus, $(e_{3} + e_{6}) A_{\omega_{0}} \ldots A_{\omega_{n - 1}} e_0^T = 0 \iff \beta = 0$ for any $\alpha, \beta, \gamma \in \zspc{n}$. Indeed, $\pi(3) = 3$, $\pi(5) = 6$, $\pi(\pi(\omega_i)) = \omega_i$ for $i \in \{0, \ldots, n - 1\}$ and $(e_3 + e_5) A_{\pi(\omega_{0})} \ldots A_{\pi(\omega_{n - 1})} e_0^T = 0$ $\iff$ $\gamma = 0$ by the assumption.
\end{remark}
\begin{table}
\begin{center}
\begin{minipage}{\linewidth}
\begin{tabular*}{\linewidth}{@{\extracolsep{\fill}}ccccccccc@{\extracolsep{\fill}}}
	\toprule
       $j \setminus i$     & $0$ & $1$ & $2$ & $3$ & $4$ & $5$ & $6$ & $7$ \\
	\midrule
        $0$ & $-$ & $v_2 + v_4$ & $v_1 + v_4$ & $-$ & $v_1 + v_2$ & $-$ & $-$ & $0$ \\
        $1$ & $v_3 + v_5$ & $-$ & $-$ & $v_0 + v_5$ & $-$ & $v_0 + v_3$ & $0$ & $-$ \\
        $2$ & $v_3 + v_6$ & $-$ & $-$ & $v_0 + v_6$ & $-$ & $0$ & $v_0 + v_3$ & $-$ \\
        $3$ & $-$ & $v_2 + v_7$ & $v_1 + v_7$ & $-$ & $0$ & $-$ & $-$ & $v_1 + v_2$ \\
        $4$ & $v_5 + v_6$ & $-$ & $-$ & $0$ & $-$ & $v_0 + v_6$ & $v_0 + v_5$ & $-$ \\
        $5$ & $-$ & $v_4 + v_7$ & $0$ & $-$ & $v_1 + v_7$ & $-$ & $-$ & $v_1 + v_4$ \\
        $6$ & $-$ & $0$ & $v_4 + v_7$ & $-$ & $v_2 + v_7$ & $-$ & $-$ & $v_2 + v_4$ \\
        $7$ & $0$ & $-$ & $-$ & $v_5 + v_6$ & $-$ & $v_3 + v_6$ & $v_3 + v_5$ & $-$ \\
	\bottomrule
\end{tabular*}
\end{minipage}
\end{center}
\caption{The $j$-th coordinate of $4 \cdot A_i v$, i.e. $4 \cdot e_j A_i v$, where $v^T \in \mathbb{Q}^8$}
\label{table:nexttwocoords}
\end{table}
\begin{lemma}\label{lemma:akv}
    Let $v^T \in \mathbb{Q}^8$ and $i \in \{0, \ldots, 7\}$. Then $j$-th coordinates of $4 \cdot A_i v$, where the parities of $\wt{i}$ and $\wt{j}$ are different, are presented in Table~\ref{table:nexttwocoords}.
\end{lemma}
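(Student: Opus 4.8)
The plan is to reduce the entire statement to a single row-read of the explicit matrix $4A_0$ from Theorem~\ref{th:AdpMatrix}, exploiting the conjugation identity already used in the proof of Lemma~\ref{lemma:matrixindexXOR}. Recall that $A_i = T_i A_0 T_i$, where $T_i$ is the involution swapping the coordinates $m$ and $m \oplus i$. Since $e_j T_i = e_{j \oplus i}$ and $(T_i v)_m = v_{m \oplus i}$, putting $w = T_i v$ (so that $w_m = v_{m \oplus i}$) yields $e_j A_i v = e_{j \oplus i} A_0 w$, and hence $4 \cdot e_j A_i v = (4A_0 w)_{j \oplus i}$. It therefore remains only to read off the appropriate coordinate of $4A_0 w$.

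The key observation is a parity reduction. Since $\wt{j \oplus i} \equiv \wt{j} + \wt{i} \pmod 2$, the hypothesis that $\wt{i}$ and $\wt{j}$ have opposite parities forces $j \oplus i$ to be of odd weight, i.e. $j \oplus i \in \{1, 2, 4, 7\}$. Thus only the four odd-weight rows of $4A_0$ are relevant. Inspecting the explicit matrix, rows $1$, $2$ and $4$ each contain exactly two unit entries — in columns $\{3,5\}$, $\{3,6\}$ and $\{5,6\}$ respectively — while row $7$ is identically zero. This already explains the shape of the nonzero table entries (a sum of two $v$-coordinates, or $0$).

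Carrying out the substitution, if $j \oplus i = 1$ then $4 \cdot e_j A_i v = w_3 + w_5 = v_{3 \oplus i} + v_{5 \oplus i}$, and similarly one obtains $v_{3 \oplus i} + v_{6 \oplus i}$ for $j \oplus i = 2$, $v_{5 \oplus i} + v_{6 \oplus i}$ for $j \oplus i = 4$, and $0$ for $j \oplus i = 7$. Re-expressing each formula in terms of the original $v$ and distributing the result over the cells $(i,j)$ with $i \oplus j = 1, 2, 4, 7$ reproduces exactly the non-dash entries of Table~\ref{table:nexttwocoords}; the dash cells are precisely the complementary pairs with $\wt{i} \equiv \wt{j} \pmod 2$, which the statement does not address.

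The computation is entirely mechanical, so there is no genuine obstacle; the only point requiring care is the index bookkeeping — tracking the XOR shift $m \mapsto m \oplus i$ so that each formula lands in its correct $(i,j)$ cell, and confirming that the four odd-weight rows of $4A_0$ genuinely exhaust all relevant cases. Verifying one representative cell in each of the four sub-cases $j \oplus i \in \{1,2,4,7\}$ suffices to confirm the alignment with the table.
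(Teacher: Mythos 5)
Your proof is correct. Every step checks out: the identity $A_i = T_i A_0 T_i$ is exactly the one recorded in the proof of Lemma~\ref{lemma:matrixindexXOR}, the parity argument correctly forces $j \oplus i \in \{1,2,4,7\}$, and the four relevant rows of $4A_0$ do read $w_3 + w_5$, $w_3 + w_6$, $w_5 + w_6$ and $0$; substituting $w_m = v_{m \oplus i}$ reproduces all non-dash cells of Table~\ref{table:nexttwocoords} (I spot-checked several, e.g.\ $(i,j)=(3,1)$ giving $v_0 + v_5$ and $(i,j)=(7,3)$ giving $v_1 + v_2$).

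The paper itself offers no argument here beyond ``straightforward,'' i.e.\ it implicitly asks the reader to inspect all eight explicit matrices $A_0, \ldots, A_7$ listed in Example~\ref{example:matriceslist} and verify $32$ cells one by one. Your route is a genuine refinement: by conjugating everything back to $A_0$ you reduce the verification to the four odd-weight rows of a single matrix, and the uniform formula $4\, e_j A_i v = v_{3 \oplus i} + v_{5 \oplus i}$ (resp.\ $v_{3\oplus i}+v_{6\oplus i}$, $v_{5\oplus i}+v_{6\oplus i}$, $0$) for $j \oplus i = 1$ (resp.\ $2$, $4$, $7$) makes the structure of the table transparent rather than accidental. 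The cost is only the index bookkeeping you already flag; the benefit is that the same computation also explains \emph{why} each nonzero entry is a sum of exactly two coordinates and why the zero entries sit at $j = i \oplus 7$, which the brute-force check does not.
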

The proof is straightforward (the matrices $A_0, \ldots, A_7$ can be found in Section~\ref{sec:preliminaries}).

\begin{lemma}\label{lemma:sumtwozero}
    Let $\omega = \omega(\alpha, \beta, \gamma)$, $\alpha, \beta, \gamma \in \zspc{n}$ such that $(\alpha, \beta, \gamma) \neq (0,0,0)$.
    Let $v = A_{\omega_{0}} \ldots A_{\omega_{n - 1}} e^T_0$ and $v_0 + v_1 + \ldots + v_7 \neq 0$, i.e. $\adp{\alpha}{\beta}{\gamma} \neq 0$. Then for the sums from Table~\ref{table:nexttwocoords} the following holds:
    \begin{enumerate}
        \item $v_3 + v_5 = 0$ $\Longleftrightarrow$ $v_1 + v_7 = 0$ $\Longleftrightarrow$ $\gamma = 0$;
        \item $v_3 + v_6 = 0$ $\Longleftrightarrow$ $v_2 + v_7 = 0$ $\Longleftrightarrow$ $\beta = 0$;
        \item $v_5 + v_6 = 0$ $\Longleftrightarrow$ $v_4 + v_7 = 0$ $\Longleftrightarrow$ $\alpha = 0$;
        \item all $v_0 + v_3$, $v_0 + v_5$ and $v_0 + v_6$ are not zero;
        \item all $v_1 + v_2$, $v_1 + v_4$ and $v_2 + v_4$ are not zero.
    \end{enumerate}
    Also, $v_0 = 1$ and $v_1 = v_2 = \ldots = v_7 = 0$ if $(\alpha, \beta, \gamma) = (0, 0, 0)$.
\end{lemma}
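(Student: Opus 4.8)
The plan is to prove the five numbered equivalences simultaneously by induction on $n$, peeling off the most significant octal symbol: write $v = A_{\omega_0} u$ with $u = A_{\omega_1}\cdots A_{\omega_{n-1}} e^T_0$, so that $u$ is the vector attached to the $(n-1)$-bit triple $(\widetilde\alpha,\widetilde\beta,\widetilde\gamma)$ obtained by deleting the leading bit of each argument. Two preliminary facts drive everything. First, every entry of $A_0,\ldots,A_7$ lies in $\{0,\tfrac14,1\}$, so all coordinates of $v$ are nonnegative; hence a sum of coordinates vanishes iff each summand does, and any bound of the form (sum)~$\ge$~(a known positive quantity) proves nonvanishing. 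Second, deleting the leading symbol does not change the least significant nonzero symbol of $\omega$, so by Corollary~\ref{cor:AdpMatrixZeros}, whenever $\adp{\alpha}{\beta}{\gamma}\neq 0$ and $(\widetilde\alpha,\widetilde\beta,\widetilde\gamma)\neq 0$ we also have $\adp{\widetilde\alpha}{\widetilde\beta}{\widetilde\gamma}\neq 0$, making the induction hypothesis available for $u$. If instead $(\widetilde\alpha,\widetilde\beta,\widetilde\gamma)=0$, then $u=e^T_0$, and since the full triple is nonzero with positive $\mathrm{adp}$, the symbol $\omega_0$ is the least significant nonzero one and must lie in $\{3,5,6\}$; this collapses to the base situation $v=A_{\omega_0}e^T_0$, verified by hand (as is the case $n=1$).

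Before the case analysis I would cut the workload down using Lemma~\ref{lemma:psymmetries}. A coordinate permutation $\pi$ of $\zspc{3}$ permutes the entries of $v$ and is a bijection on the set of admissible triples, so it suffices to establish one representative per symmetry class. Swapping $\beta$ and $\gamma$ carries part~1 to part~2, and swapping $\alpha$ and $\gamma$ carries it to part~3, so only part~1 needs direct treatment. The weight-two indices $\{3,5,6\}$ form a single $S_3$-orbit, so part~4 reduces to a single sum, say $v_0+v_3\neq 0$; likewise the weight-one indices $\{1,2,4\}$ form an orbit, so part~5 reduces to $v_1+v_2\neq 0$. Thus the inductive step must produce only part~1 (for both of the equivalent sums $v_3+v_5$ and $v_1+v_7$), the sum $v_0+v_3$, and the sum $v_1+v_2$; the rest follow by relabelling.

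For each of these I would run through the eight values of $\omega_0$. When $\wt{\omega_0}$ has parity opposite to that of the target coordinate, I read the relevant entry off Table~\ref{table:nexttwocoords} via Lemma~\ref{lemma:akv}; when the parities agree, I use the explicit row of $4A_{\omega_0}$ from Section~\ref{sec:preliminaries}. In every case the outcome is a nonnegative combination of entries of $u$. For part~1, the columns with $\gamma_0=0$ (that is, $\omega_0\in\{0,2,4,6\}$) give $v_3+v_5=\tfrac14(u_3+u_5)$ or $\tfrac14(u_1+u_7)$, which vanishes iff $\widetilde\gamma=0$ by the induction hypothesis, i.e. iff $\gamma=0$; the columns with $\gamma_0=1$ give a combination bounded below by one of the sums occurring in parts~4 and~5 for $u$ (for instance $u_0+u_3$, $u_0+u_5$, or $u_2+u_4$), hence positive, matching $\gamma\neq 0$. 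The sums $v_0+v_3$ and $v_1+v_2$ are handled identically and are bounded below by one of those positive quantities for $u$.

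The single place needing a separate remark is $v_1+v_2$ when $\omega_0=0$, where $4(v_1+v_2)=2u_3+u_5+u_6$ and I must know that some weight-two coordinate of $u$ is positive. Here $\omega_0=0$ forces $(\widetilde\alpha,\widetilde\beta,\widetilde\gamma)\neq 0$, so at least one of $\widetilde\alpha,\widetilde\beta,\widetilde\gamma$ is nonzero, and then parts~1--3 for $u$ give $u_3+u_5>0$, $u_3+u_6>0$, or $u_5+u_6>0$ accordingly, forcing one of $u_3,u_5,u_6$ to be positive. This interlocking—parts~1--3 feeding the nonvanishing of the sums in part~5 and parts~4--5 feeding part~1—is precisely why all five must be carried through the induction as one package, and the bookkeeping of which hypothesis is invoked in which of the eight columns is the main, though entirely mechanical, obstacle.
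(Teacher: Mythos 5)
Your argument is correct, but it takes a genuinely different route from the paper's. The paper does not set up a new induction at all: it observes that each pairwise sum in the lemma is (up to a factor of $2$) a value of $\cadppar{c}$ on the word extended by one extra least-significant symbol --- for instance $v_3+v_5 = 2\,L_1 A_0 v = 2\,\cadp{1}{\cctn{\alpha}{0}}{\cctn{\beta}{0}}{\cctn{\gamma}{0}}$ and $v_0+v_6 = 2\,L_0 A_3 v$ --- and then reads the vanishing conditions directly off Proposition~\ref{prop:cadpzeros}, finishing with the same coordinate-permutation symmetry (Lemma~\ref{lemma:psymmetries} and Remark~\ref{remark:psymmetriesusage}) that you use to cover the remaining sums. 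That makes the paper's proof about a dozen lines, at the price of relying on the already-established characterization of when $\cadppar{0}$ and $\cadppar{1}$ vanish. You instead run a self-contained simultaneous induction peeling off the most significant symbol, which is more elementary but pushes all the work into the $8$-column case analysis; I checked the identities you commit to (e.g.\ $v_3+v_5=\tfrac14(u_3+u_5)$ or $\tfrac14(u_1+u_7)$ for $\omega_0\in\{0,6\}$ resp.\ $\{2,4\}$, the lower bounds $u_0+u_3$ or $u_2+u_4$ for $\omega_0\in\{1,3,5,7\}$, and $4(v_1+v_2)=2u_3+u_5+u_6$ for $\omega_0=0$) and they are right, as are the two structural points your induction hinges on: nonnegativity of all matrix entries, and the fact that truncating the leading symbol preserves the least significant nonzero symbol so that Corollary~\ref{cor:AdpMatrixZeros} keeps the induction hypothesis available. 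Your approach makes the mutual dependence of the five parts explicit (parts 4--5 feeding part 1 and vice versa), which the paper's proof hides inside Proposition~\ref{prop:cadpzeros}; the paper's approach avoids the bookkeeping entirely. To be a complete proof yours still needs the remaining mechanical columns written out (and the trivial closing remark that $v=e_0^T$ when $(\alpha,\beta,\gamma)=(0,0,0)$), but there is no gap in the method.
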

\begin{proof}
    \textbf{Points 1, 2, 3.} 
    Let us consider 
    $$
        L_1 A_0 v = \frac{1}{4}(v_3 + v_5) + \frac{1}{4} v_3 + \frac{1}{4}v_5 + 0 = \frac{1}{2}v_3 + \frac{1}{2}v_5.
    $$
    By definition, $\cadp{1}{\alpha'}{\beta'}{\gamma'} = L_1 A_0 v$, where $\alpha' = \cctn{0}{\alpha}, \beta' = \cctn{0}{\beta}, \gamma' = \cctn{0}{\gamma}$.
    At the same time, $\adp{\alpha'}{\beta'}{\gamma'} \neq 0$ since $\adp{\alpha}{\beta}{\gamma} \neq 0$ and $(\alpha,\beta, \gamma) \neq (0, 0, 0)$, see Proposition~\ref{prop:adpzeropattern}. Thus, Proposition~\ref{prop:cadpzeros} guaranties that $\cadp{1}{\alpha'}{\beta'}{\gamma'} \neq 0$ if and only if $\gamma' \neq 0$, which is equivalent to the following: $v_3 + v_5 \neq 0$ if and only if $\gamma \neq 0$.
    Similarly,
    $$
        L_1 A_2 v = \frac{1}{4} v_1 + \frac{1}{4}(v_1 + v_7) + 0  + \frac{1}{4}v_7 = \frac{1}{2}v_1 + \frac{1}{2}v_7
    $$
    and $\cadp{1}{\alpha''}{\beta''}{\gamma''} = L_1 A_2 v$, where $\alpha'' = \cctn{0}{\alpha}, \beta'' = \cctn{1}{\beta}, \gamma'' = {0}\cctn{\gamma}$. The reasons above guarantee that $v_1 + v_7 \neq 0$ if and only if $\gamma \neq 0$.

    The rest points can be proved by symmetries. Let $\pi_{\beta \gamma}, \pi_{\alpha \gamma}: \zspc{3} \to \zspc{3}$ such that $\pi_{\beta \gamma}(a, b, c) = (a, c, b)$ and $\pi_{\alpha \gamma}(a, b, c) = (c, b, a)$, $a,b,c \in \zspc{}$. Since $v_3 + v_5 = (e_3 + e_5) v$, Lemma~\ref{lemma:psymmetries} and Remark~\ref{remark:psymmetriesusage} give us that 
    $(e_{\pi_{\beta \gamma}(3)} + e_{\pi_{\beta \gamma}(5)}) v = (e_{3} + e_{6}) v = 0 \iff \beta = 0$. Similarly, $\pi_{\beta \gamma}(1) = 2$ and $\pi_{\beta \gamma}(7) = 7$, $\pi_{\alpha \gamma}(3) = 6$ and $\pi_{\alpha \gamma}(5) = 5$, $\pi_{\alpha \gamma}(1) = 4$ and $\pi_{\alpha \gamma}(7) = 7$.

    \textbf{Points 4, 5.} Let us consider 
    \begin{align*}
        L_0 A_3 v &= \frac{1}{4} v_0 + \frac{1}{4}(v_0 + v_6) + 0 + \frac{1}{4}v_6 = \frac{1}{2}v_0 + \frac{1}{2}v_6.
    \end{align*}
    Proposition~\ref{prop:cadpzeros} guaranties that $L_0 A_3 v = \cadp{0}{\alpha'}{\beta'}{\gamma'} \neq 0$, where $\alpha' = \cctn{0}{\alpha}, \beta' = \cctn{1}{\beta}, \gamma' = \cctn{1}{\gamma}$.
    Next, 
    \begin{align*}
        L_0 A_1 v &= \frac{1}{4}(v_2 + v_4) + \frac{1}{4} v_2 + \frac{1}{4}v_4 + 0 = \frac{1}{2}v_2 + \frac{1}{2}v_4.
    \end{align*}
    Proposition~\ref{prop:cadpzeros} guaranties that $L_0 A_1 v = \cadp{0}{\alpha'}{\beta'}{\gamma'} \neq 0$, where $\alpha' = \cctn{0}{\alpha}, \beta' = \cctn{0}{\beta}, \gamma' = \cctn{1}{\gamma}$. 
    Similarly to the previous points, $\pi_{\beta \gamma}$, $\pi_{\alpha \gamma}$, Lemma~\ref{lemma:psymmetries} and Remark~\ref{remark:psymmetriesusage} give us all the rest.
\end{proof}

The following lemma shows us that the condition for $\padp{a}{b}{\alpha}{\beta}{\gamma} = 0$ depends on the most significant bits of $\alpha$ and $\beta$. 
\begin{lemma}\label{lemma:padpzeros}
    Let $\alpha, \beta, \gamma \in \zspc{n}$, $a, b \in \zspc{}$, $\adp{\alpha}{\beta}{\gamma} \neq 0$ and $(a, b) \neq (\overline{\alpha_{0}}, \overline{\beta_{0}})$. Then $\padp{a}{b}{\alpha}{\beta}{\gamma} = 0$ $\iff$ any of the following two conditions holds: $a = 1$ and $\alpha = 0$; $b = 1$ and $\beta = 0$.
\end{lemma}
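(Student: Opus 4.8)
The plan is to prove the two directions separately. The backward implication is immediate from Theorem~\ref{th:padpsymmetries}(7): if $a=1$ and $\alpha=0$ then $\padp{1}{b}{0}{\beta}{\gamma}=0$, and symmetrically if $b=1$ and $\beta=0$ then $\padp{a}{1}{\alpha}{0}{\gamma}=0$. All the work is in the forward direction, and I would carry it out by a direct matrix computation that isolates the most significant octal symbol $\omega_0=(\alpha_0,\beta_0,\gamma_0)$. Writing $w=A_{\omega_1}\cdots A_{\omega_{n-1}}e_0^T$ and using the relations $A_{\omega_0}=T_{\omega_0}A_0T_{\omega_0}$ and $L_{a,b}T_{\omega_0}=L_{a\oplus\alpha_0,\,b\oplus\beta_0}$ from the proof of Lemma~\ref{lemma:matrixindexXOR} (here $T_k$ swaps the $i$ and $i\oplus k$ coordinates), I obtain
\[
 \padp{a}{b}{\alpha}{\beta}{\gamma}=L_{a\oplus\alpha_0,\,b\oplus\beta_0}\,A_0\,w',\qquad w'=T_{\omega_0}w,\ \ w'_i=w_{i\oplus\omega_0}.
\]

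Computing the four row vectors $L_{a',b'}A_0$ explicitly gives, up to the factor $\tfrac14$, the expressions $4w'_0+2w'_3+2w'_5+w'_6$, $\,2w'_3+w'_6$, $\,2w'_5+w'_6$, $\,w'_6$ for $(a',b')=(0,0),(0,1),(1,0),(1,1)$ respectively. Since $w$ (hence $w'$) is a product of nonnegative matrices applied to $e_0^T$ and so is entrywise nonnegative, each value vanishes iff the coordinates it involves all vanish. The hypothesis $(a,b)\neq(\overline{\alpha_0},\overline{\beta_0})$ is exactly $(a',b')\neq(1,1)$, so I only treat $(a',b')\in\{(0,0),(0,1),(1,0)\}$; the excluded case reduces to the single-coordinate condition $w'_6=0$, which is not controlled by $\alpha,\beta$ alone, and that is precisely why it is ruled out.

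Next I would determine when the relevant coordinates $w_{\omega_0},w_{3\oplus\omega_0},w_{5\oplus\omega_0},w_{6\oplus\omega_0}$ vanish, splitting on the tail. If $(\alpha',\beta',\gamma')=0$ then $w=e_0^T$ and the claim is verified directly against the finitely many $\omega_0\in\{0,3,5,6\}$ admitted by $\adp{\alpha}{\beta}{\gamma}\neq0$. If $(\alpha',\beta',\gamma')\neq0$, then the least significant nonzero octal symbol lies in the tail, so $\adp{\alpha'}{\beta'}{\gamma'}\neq0$ by Corollary~\ref{cor:AdpMatrixZeros}, and Lemma~\ref{lemma:sumtwozero} applies to $w$. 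I would then use that $\{\omega_0\oplus s:s\in\{0,3,5,6\}\}$ equals $\{0,3,5,6\}$ or $\{1,2,4,7\}$ according to the parity of $\wt{\omega_0}$. For $(a',b')=(0,0)$ the four coordinates are exactly $\{w_i:i=\omega_0\oplus s,\ s\in\{0,3,5,6\}\}$, which can never all vanish because $w_0+w_3\neq0$ (even-weight case) or $w_1+w_2\neq0$ (odd-weight case) by Lemma~\ref{lemma:sumtwozero}(4),(5); hence this value is always positive. For $(a',b')=(0,1)$ the condition $w_{3\oplus\omega_0}+w_{6\oplus\omega_0}=0$ matches a ``$\beta'=0$'' pair of Lemma~\ref{lemma:sumtwozero}(2) exactly when $\beta_0=0$ (i.e. $\omega_0\in\{0,1,4,5\}$) and otherwise matches an always-nonzero pair, so it vanishes iff $\beta_0=0$ and $\beta'=0$, that is iff $\beta=0$; the case $(a',b')=(1,0)$ is the mirror image in $\alpha$ and vanishes iff $\alpha=0$.

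Finally I would translate back to $(a,b)$: $(a',b')=(0,1)$ forces $b=\overline{\beta_0}$, and vanishing forces $\beta_0=0$, hence $b=1$, giving $(b=1,\beta=0)$; symmetrically $(a',b')=(1,0)$ gives $(a=1,\alpha=0)$; and $(a',b')=(0,0)$ never vanishes, consistent with $(a,b)=(\alpha_0,\beta_0)$ satisfying neither alternative. The main obstacle is the bookkeeping forced by the fact that Lemma~\ref{lemma:sumtwozero} pins down only the pair-sums of $w$, not its individual entries; the argument succeeds only because in every admitted case $(a',b')\neq(1,1)$ the vanishing condition collapses onto one of these controlled pair-sums, and establishing that collapse needs the coset identity for $\{\omega_0\oplus s:s\in\{0,3,5,6\}\}$ together with the membership facts $\beta_0=0\Leftrightarrow\omega_0\in\{0,1,4,5\}$ and $\alpha_0=0\Leftrightarrow\omega_0\in\{0,1,2,3\}$.
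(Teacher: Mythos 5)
Your proposal is correct and follows essentially the same route as the paper: isolate the leading matrix via $A_{\omega_0}=T_{\omega_0}A_0T_{\omega_0}$, compute the row vectors $L_{a',b'}A_0$, use nonnegativity to reduce vanishing to vanishing of the pair $w_{3\oplus\omega_0}+w_{6\oplus\omega_0}$ (resp.\ its $\alpha\beta$-mirror), and classify these pairs with Lemma~\ref{lemma:sumtwozero}, treating the all-zero tail separately. The only differences are cosmetic (your coset bookkeeping versus the paper's case list over $k$, and your explicit handling of $\adp{\alpha'}{\beta'}{\gamma'}\neq 0$ for the tail, which the paper leaves implicit).
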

\begin{proof}
    By Theorem~\ref{th:cadpsymmetries}, $\padp{1}{b}{0}{\beta}{\gamma} = 0$ and $\padp{a}{1}{\alpha}{0}{\gamma} = 0$. Hence, we need to show that it is not zero in all other cases.

    Let $\omega = \omega(\alpha, \beta, \gamma)$, $v = A_{\omega_{1}} \ldots A_{\omega_{n - 1}}e^T_0$, $k = \omega_{0}$ and $T_k$ be the involution matrix that swaps $i$ and $i \oplus k$ coordinates of $x$, $x \in \mathbb{Q}^8$, $i \in \{0, \ldots, 7\}$.
    Then 
    \begin{equation}\label{eq:padpmsb}
        \padp{a}{b}{\alpha}{\beta}{\gamma} = L_{a,b} A_k v = L_{a,b} T_k A_0 T_k v = L_{a \oplus k_2, b \oplus k_1} A_0 (T_k v).
    \end{equation}
    
    Since $(a, b) \neq (\overline{k_2}, \overline{k_1})$, we need to consider $(a \oplus k_2, b \oplus k_1) \in \{ (0, 0), (0, 1), (1, 0) \}$. 
    
    \textbf{Case 1.} $(a \oplus k_2, b \oplus k_1) = (0, 0)$. According to (\ref{eq:padpmsb}),
    $$
         \padp{a}{b}{\alpha}{\beta}{\gamma} = L_{0,0} A_0 (T_k v) = v_k + \frac{1}{2}v_{k \oplus 3} + \frac{1}{2}v_{k \oplus 5} + \frac{1}{4}v_{k \oplus 6},
    $$
    that is zero $\iff$ $v_0 + v_3 + v_5 + v_6 = 0$ (for $k \in \{0, 3, 5, 6\}$) and  $v_1 + v_2 + v_4 + v_7 = 0$ (for $k \in \{1, 2, 4, 7\}$). 

    If $n > 1$ and $(\omega_{1}, \ldots, \omega_{n-1}) \neq (0, \ldots, 0)$, they are both not zero due to Lemma~\ref{lemma:sumtwozero}.
    Otherwise, $v = e^T_0$ and $k \in \{0, 3, 5, 6\}$, where $k \in \{0, 3, 5, 6\}$ (due to $\adp{\alpha}{\beta}{\gamma} \neq 0$ and Proposition~\ref{prop:adpzeropattern}). It means that $v_0 + v_3 + v_5 + v_6$ is not zero.

    \textbf{Case 2.} $(a \oplus k_2, b \oplus k_1) = (0, 1)$. According to (\ref{eq:padpmsb}),
    $$
         \padp{a}{b}{\alpha}{\beta}{\gamma} = L_{0,1} A_0 (T_k v) = \frac{1}{2}v_{k \oplus 3} + \frac{1}{4}v_{k \oplus 6},
    $$
    that is zero $\iff$ the sum $v_{k \oplus 3} + v_{k \oplus 6}$ is zero. 
    
    Let $k = 2, 3, 6, 7$, i.e. $\beta_{0} = 1$ and $\beta \neq 0$. This sum is equal to $v_1 + v_4$, $v_0 + v_5$, $v_0 + v_5$ and $v_1 + v_4$ respectively. If $n > 1$ and $(\omega_{1}, \ldots, \omega_{n-1}) \neq (0, \ldots, 0)$, all of them are not zero due to Lemma~\ref{lemma:sumtwozero}. Otherwise, $v = e^T_0$ and $k \in \{0, 3, 5, 6\}$, i.e. $k = 3, 6$. This means that $v_0 + v_5$ is not zero.

    Let $k = 0,1,4,5$, i.e. $\beta_{0} = 0$. Then the sum is equal to  $v_3 + v_6$, $v_2 + v_7$, $v_2 + v_7$ and $v_3 + v_6$ respectively. If $n > 1$ and $(\omega_{1}, \ldots, \omega_{n-1}) \neq (0, \ldots, 0)$, all of them are not zero $\iff$ $\beta \neq 0$. Indeed, $\beta_{0} = 0$ due to the values of $k$. Together with Lemma~\ref{lemma:sumtwozero}, this transforms to $\beta \neq 0$.
    Otherwise, $v = e^T_0$ (therefore, $\beta = 0$) and $k \in \{0, 3, 5, 6\}$, i.e. $k = 0, 5$. In this case $v_3 + v_6 = 0$. 

    At the same time, $k = 0, 1$ corresponds to $(a,b) = (0, 1)$ and $k = 4,5$ corresponds $(a,b) = (1,1)$. This means that the sum is zero only if $b = 1$ and $\beta = 0$.

    \textbf{Case 3.} $(a \oplus k_2, b \oplus k_1) = (1, 0)$. According to (\ref{eq:padpmsb}),
    $
         \padp{a}{b}{\alpha}{\beta}{\gamma} = \frac{1}{2}v_{k \oplus 5} + \frac{1}{4}v_{k \oplus 6}.
    $
    This is symmetric to the previous case by Lemma~\ref{lemma:psymmetries} and Remark~\ref{remark:psymmetriesusage} since we can choose $\pi$ that swaps the bits of $\alpha$ and $\beta$ and obtain that $\frac{1}{2}v_{k \oplus 5} + \frac{1}{4}v_{k \oplus 6}$ and $\frac{1}{2}v_{\pi(k) \oplus 3} + \frac{1}{4}v_{\pi(k) \oplus 6}$ have symmetric properties relatively swapping the conditions for $\alpha$ and $\beta$.
\end{proof}

Finally, we prove Theorem~\ref{th:padpzeros}.
\begin{proof}
    Let $\omega = \omega(\alpha, \beta, \gamma)$. 
    First of all, $\padp{a}{b}{\alpha}{\beta}{\gamma} = 0$ if $\adp{\alpha}{\beta}{\gamma} = 0$ by definition and $\adp{\alpha}{\beta}{\gamma} = 0 \iff \omega$ satisfies {\rm\texttt{[.*d0*]}} (see Proposition~\ref{prop:adpzeropattern}).
    It corresponds to the first row of Table~\ref{table:padpzeros}.
    
    Next, we assume that $\adp{\alpha}{\beta}{\gamma} \neq 0$.
    Lemma~\ref{lemma:padpzeros} guaranties that $\padp{a}{b}{\alpha}{\beta}{\gamma} = 0$ if $\omega(\alpha, \beta, \gamma)$ satisfies \texttt{[$\mathtt{\alpha_0}$*]} and \texttt{[$\mathtt{\beta_0}$*]} for $(a, b) \in \{(1, 0), (1, 1)\}$ and $(a, b) \in \{(0, 1), (1, 1)\}$ respectively. 
    Also, it provides that $(a, b) = (\overline{\alpha_{0}}, \overline{\beta_{0}})$ is a necessary condition for $\padp{a}{b}{\alpha}{\beta}{\gamma} = 0$ in all other cases. In addition, we assume that $\omega_{0}
    = (\overline{a}, \overline{b}, c)$ for the given $(a, b)$ and some $c \in \zspc{}$.
   
    Let $t = \omega_{0}$, $v = A_{\omega_{1}} \ldots A_{\omega_{n - 1}} e_0^T$, $\alpha', \beta', \gamma'$ be $\alpha, \beta, \gamma$ without their most significant bits and $\omega' = \omega(\alpha', \beta', \gamma')$. Then by definition
    \begin{align*}
        \padp{a}{b}{\cctn{\overline{a}}{\alpha'}}{\cctn{\overline{b}}{\beta'}}{\cctn{c}{\gamma'}} &= L_{a,b} A_t v = (L_{a,b} T_t) A_0 (T_t v)\\
        &= L_{a \oplus \overline{a},b \oplus \overline{b}} A_0 (v_{t\oplus0}, v_{t\oplus1}, \ldots, v_{t\oplus7})^T\\
        &= \frac{1}{4} L_{1,1}  (v_{t \oplus 0}, *,\ldots, *, v_{t \oplus 5}, v_{t\oplus6}, 0)^T\\
        &= \frac{1}{4} v_{t \oplus 6} + 0 = \frac{1}{4} v_{abc}.
    \end{align*}
    Thus, we need to determine conditions under which $v_{(a, b, c)} = 0$.
    We call the coordinate $(a, b, c) = t \oplus 6$ \textit{determining} for the given $a, b$.

    It implies that all required patterns can be described as \texttt{[t$\widehat{\mathtt{t}}$*u]}, where \texttt{u} is some subpattern for a subword $u$ of length $0 \leq m \leq n - 1$. There are three cases: $u = \varnothing$, $u_{0} = t\oplus6$ and $u_{0} \in \{0,\ldots,7\}\backslash\{t, t\oplus3, t\oplus5, t\oplus6\}$, i.e. $u_{0}$ does not satisfy both \texttt{$\widehat{\mathtt{t}}$} and \texttt{t$\oplus$6}.
    Let $v'' = A_{u_{1}} \ldots A_{u_{m - 1}} e^T_0$ if $m > 0$ and $v' = A_{u_{0}} v''$. More precisely, $v'' = e^T_0$ if $m = 1$ and $v' = e^T_0$ if $m = 0$.

    An explanation why we add \texttt{$\widehat{\mathtt{t}}$*} is the following. Let $x'^T \in \mathbb{Q}^8$, $d \in \{0, \ldots, 7\}$ satisfy \texttt{$\widehat{\mathtt{t}}$} and $x = A_{d} x'$.
    Then $x_{t\oplus6} = 0$ $\iff$ $x'_{t\oplus6} = 0$. Indeed,
    \begin{align*}
        d = t : \ \ \ \ \ \ &x_{t\oplus6} = e_{t\oplus6} A_t x' = e_{t\oplus6}T_t A_0 T_t x' = e_6 A_0 (T_t x') = 
        \frac{1}{4}x'_{t\oplus6},\\
        d = t\oplus3 : \ &x_{t\oplus6} = e_{t\oplus6} A_{t\oplus3} x' = e_{t\oplus6}T_{t \oplus 3} A_0 T_{t \oplus 3} x' = e_5 A_0 (T_{t\oplus3} x') = \frac{1}{4}x'_{t\oplus6},\\
        d = t\oplus5 : \ &x_{t\oplus6} = e_{t\oplus6} A_{t\oplus5} x' = e_{t\oplus6}T_{t \oplus 5} A_0 T_{t \oplus 5} x' =  e_3 A_0 (T_{t\oplus5} x') = \frac{1}{4}x'_{t\oplus6}.
    \end{align*}
    Here $T_t$ is the involution matrix that swaps $i$ and $i \oplus t$ coordinates, see the proof of Lemma~\ref{lemma:matrixindexXOR}.
    In other words, adding \texttt{$\widehat{\mathtt{t}}$*} to the beginning of any pattern for $u$ keeps zero (non-zero) value at the position $t\oplus6$.

    \textbf{Case 1.} $u = \varnothing$. Therefore, $v' = e^T_0$.
    Since $\adp{\alpha}{\beta}{\gamma} \neq 0$, any $y$ satisfying \texttt{$\widehat{\mathtt{t}}$} must belong to $\{0, 3, 5, 6\}$, see Proposition~\ref{prop:adpzeropattern}.
    Therefore, $t \in \{0, 3, 5, 6\}$.
    For $(a,b) \in \{(0,0), (0,1), (1,0), (1,1)\}$ we have that $t = 6, 5, 3, 0$ and the determining coordinate is $t\oplus6 = 0, 3, 5, 6$ respectively.
    Thus, if the value in the determining coordinate is zero, then \texttt{[t$\widehat{\mathtt{t}}$*]} must be included to the table. 
    At the same time, $v' = e^T_0$ has only one non-zero coordinate $v'_0 = 1$. Hence, $(a,b) = (0,0)$ gives no patterns for zero values since the determining coordinate $v'_0 = 1 \neq 0$. Also, 
    
    \begin{center}
         \texttt{[5$\widehat{\mathtt{5}}$*]} is added for $(a,b) = (0,1)$ since the determining coordinate $v'_3 = 0$,\\
         \texttt{[3$\widehat{\mathtt{3}}$*]} is added for $(a,b) = (1,0)$ since the determining coordinate $v'_5 = 0$,\\
         \texttt{[0$\widehat{\mathtt{0}}$*]} is added for $(a,b) = (1,1)$ since the determining coordinate $v'_6 = 0$.
    \end{center}
   
    \textbf{Case 2.}  $u_{0} = t \oplus 6$. Let $k = t \oplus 6$ be the determining coordinate.
    It is not difficult to see that $v'_k = v''_k + \frac{1}{4} \cdot (v''_{k\oplus3} + v''_{k\oplus5} + v''_{k\oplus6})$ and all other coordinates of $v'$ depend only on $v''_{k\oplus3}, v''_{k\oplus5}, v''_{k\oplus6}$.
    At the same time, $v' \neq (0, \ldots, 0)^T$ since $\adp{\alpha}{\beta}{\gamma} \neq 0$.
    Hence, $v'_k \neq 0$.

    \textbf{Case 3.}  $u_{0} \in \{t\oplus1, t\oplus2, t\oplus4, t\oplus7\}$, i.e. $\wt{u_{0}}$ is odd $\iff$ $\wt{t \oplus 6}$ is even. 
    Thus, Lemma~\ref{lemma:akv} and Table~\ref{table:nexttwocoords} give us all possible values of the determining coordinate $v'_{t \oplus 6}$ which determines $v_{t \oplus 6}$ and the value $\padp{a}{b}{\alpha}{\beta}{\gamma}$. More precisely, they provide the value $4 \cdot e_j A_i v''$, where $j = t \oplus 6 = (a, b, c)$ and $i = u_{0}$.
    
    Let $\alpha'', \beta''$ and $\gamma''$ be the beginnings (starting with the least significant bits) of $\alpha, \beta$ and $\gamma$ such that $\omega(\alpha'', \beta'', \gamma'') = (u_1, \ldots, u_{m-1})$. Note that only the cases of $m = 1$ and $\alpha'' = \beta'' = \gamma'' = 0$ correspond to $v'' = e_0^T$. 

    \textbf{Case 3.1.} $j = i \oplus 7$, i.e. $t \oplus 6 = u_{0} \oplus 7$, which is equivalent to $u_{0} = t \oplus 1$. Table~\ref{table:nexttwocoords} guaranties that $v'_{t \oplus 6}$ is zero. This corresponds to the patterns \texttt{[t$\widehat{\mathtt{t}}$*(t$\oplus$1).*]} for all $a,b$, where $t=(\overline{a}, \overline{b}, c)$, $c \in \zspc{}$, i.e.
    \begin{center}
        \texttt{[6$\widehat{\mathtt{6}}$*7.*]} and \texttt{[7$\widehat{\mathtt{7}}$*6.*]} for  $(a,b)=(0,0)$,\\
        \texttt{[4$\widehat{\mathtt{4}}$*5.*]} and \texttt{[5$\widehat{\mathtt{5}}$*4.*]} for  $(a,b)=(0,1)$,\\
        \texttt{[2$\widehat{\mathtt{2}}$*3.*]} and \texttt{[3$\widehat{\mathtt{3}}$*2.*]} for  $(a,b)=(1,0)$,\\
        \texttt{[0$\widehat{\mathtt{0}}$*1.*]} and \texttt{[1$\widehat{\mathtt{1}}$*0.*]} for $(a,b)=(1,1)$.
    \end{center}

    \textbf{Case 3.2.} $u_{0} \neq t \oplus 1$. Lemma~\ref{lemma:sumtwozero} allows us to describe all cases such that the determining coordinate $v'_{t \oplus 6}$ is zero.
    Also, there is no need to consider the case of $v'' = e_0^T$ separately. Indeed, points $1$--$4$ are correct for $v'' = (1, 0, \ldots, 0)^T$. Point~$5$ contains only $v_1, v_2$ and $v_4$. At the same time, $i = u_{0}$ must belong to $\{0, 3, 5, 6\}$ since $\adp{\alpha}{\beta}{\gamma} \neq 0$, see Proposition~\ref{prop:adpzeropattern}. But columns $0, 3, 5$ and $6$ of Table~\ref{table:nexttwocoords} do not contain any of $v_1, v_2$ and $v_4$. Thus, we can ignore this point.

    Next, points $4$--$5$ give us the values that are not zero. Hence, we need to consider only points $1$--$3$ and find the corresponding values in Table~\ref{table:nexttwocoords}. According to point~$1$,
    
    \begin{center}
    $(i, j) = (0, 1)$ adds \texttt{[7$\widehat{\mathtt{7}}$*0$\mathtt{\gamma_0}$*]} for $(a,b) = (0,0)$,\\
    $(i, j) = (2, 3)$ adds \texttt{[5$\widehat{\mathtt{5}}$*2$\mathtt{\gamma_0}$*]} for $(a,b) = (0,1)$,\\
    $(i, j) = (4, 5)$ adds \texttt{[3$\widehat{\mathtt{3}}$*4$\mathtt{\gamma_0}$*]} for $(a,b) = (1,0)$,\\
    $(i, j) = (6, 7)$ adds \texttt{[1$\widehat{\mathtt{1}}$*6$\mathtt{\gamma_0}$*]} for $(a,b) = (1,1)$.        
    \end{center}
    According to point~$2$,

    \begin{center}
    $(i, j) = (0, 2)$ adds \texttt{[4$\widehat{\mathtt{4}}$*0$\mathtt{\beta_0}$*]} for $(a,b) = (0,1)$,\\
    $(i, j) = (1, 3)$ adds \texttt{[5$\widehat{\mathtt{5}}$*1$\mathtt{\beta_0}$*]} for $(a,b) = (0,1)$,\\
    $(i, j) = (4, 6)$ adds \texttt{[0$\widehat{\mathtt{0}}$*4$\mathtt{\beta_0}$*]} for $(a,b) = (1,1)$,\\
    $(i, j) = (5, 7)$ adds \texttt{[1$\widehat{\mathtt{1}}$*5$\mathtt{\beta_0}$*]} for $(a,b) = (1,1)$.    
    \end{center}
    According to point~$3$,
    
    \begin{center}
    $(i, j) = (0, 4)$ adds \texttt{[2$\widehat{\mathtt{2}}$*0$\mathtt{\alpha_0}$*]} for $(a,b) = (1,0)$,\\
    $(i, j) = (1, 5)$ adds \texttt{[3$\widehat{\mathtt{3}}$*1$\mathtt{\alpha_0}$*]} for $(a,b) = (1,0)$,\\
    $(i, j) = (2, 6)$ adds \texttt{[0$\widehat{\mathtt{0}}$*2$\mathtt{\alpha_0}$*]} for $(a,b) = (1,1)$,\\
    $(i, j) = (3, 7)$ adds \texttt{[1$\widehat{\mathtt{1}}$*3$\mathtt{\alpha_0}$*]} for $(a,b) = (1,1)$.
    \end{center}
    The theorem is proved.
\end{proof}

\end{appendices}

\bibliographystyle{ieeetr}


\end{document}